\newcommand{\ubar}[1]{\underaccent{\bar}{#1}}
\DeclareMathOperator{\supp}{supp}
\DeclareMathOperator{\co}{co}
\newcommand{\R}{\mathbb{R}}
\newtheorem{theorem}{Theorem}[section]
\newaliascnt{proposition}{theorem}
\newtheorem{proposition}[proposition]{Proposition}
\newaliascnt{lemma}{theorem}
\newtheorem{lemma}[lemma]{Lemma}
\newaliascnt{corollary}{theorem}
\newtheorem{corollary}[corollary]{Corollary}
\newaliascnt{claim}{theorem}
\theoremstyle{definition}
\newaliascnt{definition}{theorem}
\newaliascnt{example}{theorem}
\newaliascnt{assumption}{theorem}
\newtheorem{assumption}[assumption]{Assumption}
\newaliascnt{condition}{theorem}
\newaliascnt{question}{theorem}
\newaliascnt{remark}{theorem}
\newaliascnt{remarks}{theorem}
\newaliascnt{aside}{theorem}
\newaliascnt{note}{theorem}
\crefname{theorem}{theorem}{theorems}
\Crefname{theorem}{Theorem}{Theorems}
\crefname{proposition}{proposition}{propositions}
\Crefname{proposition}{Proposition}{Propositions}
\crefname{lemma}{lemma}{lemmas}
\Crefname{lemma}{Lemma}{Lemmas}
\crefname{corollary}{corollary}{corollaries}
\Crefname{corollary}{Corollary}{Corollaries}
\crefname{claim}{claim}{claims}
\Crefname{claim}{Claim}{Claims}
\crefname{definition}{definition}{definitions}
\Crefname{definition}{Definition}{Definitions}
\crefname{example}{example}{examples}
\Crefname{example}{Example}{Examples}
\crefname{assumption}{assumption}{assumptions}
\Crefname{assumption}{Assumption}{Assumptions}
\let\cref@old@isrefconsecutive\cref@isrefconsecutive
\def\cref@isrefconsecutive#1#2{%
  \begingroup
    \def\cref@assumptiontype{assumption}%
    \cref@gettype{#1}{\cref@typea}%
    \ifx\cref@typea\cref@assumptiontype
      \endgroup
      \@cref@refconsecutivefalse
    \else
      \endgroup
      \cref@old@isrefconsecutive{#1}{#2}%
    \fi
}
\crefname{condition}{condition}{conditions}
\Crefname{condition}{Condition}{Conditions}
\crefname{question}{question}{questions}
\Crefname{question}{Question}{Questions}
\crefname{remark}{remark}{remarks}
\Crefname{remark}{Remark}{Remarks}
\crefname{remarks}{remarks}{remarks}
\Crefname{remarks}{Remarks}{Remarks}
\crefname{aside}{aside}{asides}
\Crefname{aside}{Aside}{Asides}
\crefname{note}{note}{notes}
\Crefname{note}{Note}{Notes}
\crefname{appendix}{appendix}{appendices}
\Crefname{appendix}{Appendix}{Appendices}
\newcommand{\secref}[1]{\hyperref[#1]{\S\ref*{#1}}}
\definecolor{backcolour}{rgb}{0.63, 0.79, 0.95}
\lstdefinestyle{mystyle}{
  backgroundcolor=\color{backcolour},
  basicstyle=\ttfamily\footnotesize,
  breakatwhitespace=false,
  breaklines=true,
  captionpos=b,
  keepspaces=true,
  numbers=left,
  numbersep=5pt,
  showspaces=false,
  showstringspaces=false,
  showtabs=false,
  tabsize=2
}
\begin{document}
\title{Distributional Competition}
\author{Mark Whitmeyer\thanks{Arizona State University. Email: \href{mailto:mark.whitmeyer@gmail.com}{mark.whitmeyer@gmail.com}. Dedicated to KS. Both robots (\href{https://www.refine.ink/}{refine.ink}) and humans (Ralph Boleslavsky, Teddy Kim, R. Vijay Krishna, Doron Ravid, and Joseph Whitmeyer) helped me improve this paper.}}

\date{\today}

\maketitle

\begin{abstract}
I study symmetric competitions in which each player chooses an arbitrary distribution over a one-dimensional performance index, subject to a convex cost. I establish existence of a symmetric equilibrium, document various properties it must possess, and provide a characterization via the first-order approach. Manifold applications--to R\&D competition, oligopolistic competition with product design, and rank-order contests--follow.
\end{abstract}

\section{Introduction}\label{sec:intro}


I study a class of strategic interactions in which each player chooses a probability distribution over a one-dimensional performance index, incurring a convex cost for generating that distribution. Payoffs depend on the realized performance profile, in a way that rewards higher realizations and is symmetric across players. Standard rank-order tournament and contest models fit as special cases, but my framework is broader, subsuming a wide range of competitive problems with flexible production and endogenous risk-taking.

My framework is agnostic vis-\`{a}-vis microfoundations. Rather than deriving the feasible set of outcome distributions from, for instance, a particular underlying stochastic process (a diffusion, a Poisson-arrival model, etc.) or a parametrized technology, I allow players to choose distributions directly and fully flexibly, and encode technological constraints through a general cost functional over distributions. This accommodates different underpinnings across applications while keeping the strategic and welfare logic transparent.

The special case of my model in which costs are linear in the distribution is precisely the mixed-strategy benchmark, in which choosing a distribution is equivalent to randomizing over deterministic performance levels and paying the expected cost. In that case, the distribution matters only through the average of these costs. The environments I emphasize allow costs to depend nonlinearly on the distribution itself: on aggregate performance intensity, interactions among different parts of the distribution, or the probability of reaching success by different deadlines and cannot be reduced to ordinary mixed strategies over deterministic actions. They generate new comparative statics because changing a distribution changes not only the probability of winning but also the marginal cost of probability mass elsewhere in the distribution.

A portfolio competition illustrates the distinction. Suppose a manager chooses a distribution of returns. With a linear cost, this is merely randomization over deterministic return levels. But the manager's technology or objective may depend on global properties of the return distribution: mean return, variance, downside exposure, or tail risk. For instance, a rank-based bonus may reward having the best realized return, while the client values mean return and dislikes variance, and the manager faces a convex cost of raising the portfolio's mean. Such a problem is not captured by ordinary mixed strategies, because the distribution's global shape enters payoffs or costs directly.

The generality of my framework allows me to shelter a number of scenarios under the same umbrella. In risky R\&D, a firm’s choice of project mix and experimentation intensity begets a distribution over discovery times. How does competition affect research intensity and output? How does competition affect quality choices when firms choose prices as well; must prices converge to marginal cost in the perfect competition limit? In rank-order contests, how do more inegalitarian prize schedules or more contestants shape output?

I begin my analysis by using results in \citet{reny1999existence} to prove existence of a symmetric pure-strategy equilibrium, even though discontinuities arise naturally due to ties generating jumps in payoffs. Second, I show that under natural conditions, symmetric equilibria must take a particular atomless form. Third, I characterize both symmetric equilibria and the corresponding symmetric planner’s problem via the first-order approach.

I then visit various particular settings of interest. I tackle the classic question on the relationship between prize inequality and effort in rank-order contests. For a broad class of costs, I derive a stark conclusion: more inegalitarian prizes lead to higher mean but riskier output (formally, a shift in the increasing convex order). I also show that in risky R\&D competition, equilibrium leads to inefficiently high effort compared to the planner's problem, clarifying and elucidating classic results. Breakthroughs arise inefficiently quickly: the equilibrium distribution is first-order stochastically dominated (FOSD) by the  planner's solution.

Finally, I study competition between oligopolistic firms who choose both prices and (stochastic) quality. My goal here is to ask--when quality is endogenous--when prices converge to marginal cost as the number of firms grows large. I provide an intuitive necessary condition for this to occur. Notably, for distributional-costs that satisfy a natural steepness property on their derivative, convergence to marginal cost implies no product differentiation. That is, the Bertrand logic is flipped: marginal-cost pricing \textit{implies} perfectly homogeneous products.

\smallskip

\noindent \textbf{Roadmap.} \secref{sec:lit} completes the first section with a discussion of related papers. \secref{sec:env} introduces the general environment. The general results lie in \secref{sec:results}. There, I prove equilibrium existence and characterize the basic structure. In \secref{sec:roco} I specialize the analysis to rank-order contests. \secref{sec:rd} explores risky R\&D, and \secref{sec:unit_demand_prices} scrutinizes price and quality competition between firms. Omitted proofs lie in \Cref{app:technical-details,app:b,app:kkt,ap:roproofs,rdproofs}. \Cref{app:supplement} contains supplementary material, including the proof of equilibrium existence.

\subsection{Related Literature}\label{sec:lit}
\label{sec:related}
This paper is closely connected to the literature on risk-taking in contests in which agents flexibly choose distributions with a fixed mean.\footnote{See, in particular, \citet{BoleslavskyCotton2015, BoleslavskyCotton2018,FangNoe2016,wagman2012choosing,albrecht,AU2,spiegler,KimKoh2022,au2023attraction}.} Via the Skorokhod embedding theorem, this problem is mathematically equivalent to contests in which the contestants run diffusions.\footnote{This literature originates with \citet{SeelStrack2013} and includes \citet{FengHobson2015,feng2016gambling,FengHobson2016,NutzZhang2022,whitmeyer2023submission}.} My analysis is complementary: although choosing risk subject to a mean constraint is natural in many settings, it is also reasonable to model many scenarios as agents making an arbitrary (but costly) choice of stochastic output. Further afield are the (older) papers that study risk-taking contests more broadly.\footnote{A non-exhaustive list is \cite{dasgupta1980uncertainty, bhattacharya1986portfolio,klette1986market}, who study R\&D contests (which I revisit in \secref{sec:rd}); \cite{hvide2002tournament,hvide2003risk,goel2008overconfidence,gilpatric2009risk,fang2021less}, who analyze promotion contests; \cite{basak2015competition,strack2016risk,whitmeyer2019relative,lacker2019mean}, who investigate investment managers who care about others; and \cite{robson1992status,hopkins2018inequality}, where relative \textit{status} is the objective.}

Naturally, the classic all-pay contest literature--in which agents choose sunk bids or efforts and prizes are allocated by rank, with expenditures being paid regardless of whether one wins--is also related to my exercise.\footnote{See, e.g., \citet{HillmanSamet1987,HillmanRiley1989,BayeKovenockDeVries1996,BarutKovenock1998,AmannLeininger1996,CheGale2000,FangNoeStrack2020}.}
Viewed through the lens of this paper, once agents are allowed to mix, the strategic object in the canonical all-pay model is a distribution over bids/efforts, and the standard expenditure technology implies that expected costs enter as a linear functional of that distribution (\textit{viz.}, \(C(F)=\int cdF\)). Thus, the standard all-pay environments are subsumed by my contest application as a special case of linear distributional costs.

The paper closest to this one, \citet{KimKrishnaRyvkin2023}, combines costly effort with an endogenous choice of mean-preserving noise (``choosing your own luck'') and shows that allowing strategic risk-taking can dramatically reshape standard contest design conclusions. Following my rank-order contest application (\secref{sec:roco}) I elaborate on the contrast between their paper and mine. \citet{HwangKimBoleslavsky2023} study the joint information-provision and pricing problem of oligopolists, which is formally a problem in which firms price and flexibly choose distributions according to a mean-preserving contraction constraint. My joint pricing and quality choice application (\secref{sec:unit_demand_prices}) looks at a related problem. 

Also related is \citet{QuintKojima2026}, who study symmetric equilibria in pre-auction investment. In their model, bidders make covert investments that determine their valuation distributions before an auction. They show that, for a broad class of auction formats, the symmetric investment equilibrium exists, is essentially unique, and is invariant across auction formats. Their environment is closer to auction theory and exploits the structure of pre-auction investment; mine instead studies a general class of competitions.

Methodologically, this paper builds most directly on \citet{GeorgiadisRavidSzentes2024}, who develop a generalized first-order approach for flexible moral hazard problems in which a single agent chooses an output distribution at a smooth cost. \citet{ravid2022learning} look at ``learning before trading,'' in which a buyer's cost of learning about her value takes this general smooth form. A trio of papers by \citeauthor{Kraehmer2024} look at hold up, security design, and monopoly regulation with costly distributional choices.\footnote{See \citet{Kraehmer2024,Kraehmer2025reg,Kraehmer2025sec}.} \citet{CastroPiresKattwinkelKnoepfle2025} extend the flexible-distribution approach to environments with adverse selection.

\section{Formal Environment}\label{sec:env}

There are \(n\ge 2\) players. Let \(X\coloneqq \left[0,1\right]\), endowed with its usual compact metric topology. We interpret \(x\in X\) as a one-dimensional performance index, with higher \(x\) being better. Each player's strategy is a cumulative distribution function (cdf), which is a right-continuous, nondecreasing function
\(F\colon \left[0,1\right]\to\left[0,1\right]\) with \(F(1) = 1\). \(\mathcal F\) denotes the set of all such \(F\). Each \(F\in\mathcal F\) induces a unique probability measure \(dF\) on \(X\), and we write \(\supp(dF)\subseteq X\) for its support.

Given a profile \((F_1,\ldots,F_n)\in\mathcal F^n\), the realized performances \(X_i\sim dF_i\) are drawn independently. We break ties uniformly. Player \(i\)'s gross payoff \(\pi_i\colon X^n\to\R\) denotes her expected payoff conditional on the realized performance profile, after applying the uniform tie-breaking rule. We explicitly detail this object in \Cref{app:technical-details}. We make the following assumptions concerning the gross payoff:
\begin{assumption}\label{ass:prize}
\hfill
\begin{enumerate}[noitemsep,label=(P\arabic*),leftmargin=*]
\item \label{ass:prize-bdd} Bounded payoffs: there exists \(\bar{\pi}<\infty\) such that \(0\le \pi_i\le \bar{\pi}\) for all \(i\).
\item \label{ass:prize-sym} The game is symmetric: for any permutation \(\sigma\) of \(\{1,\ldots,n\}\),
\[
\pi_{\sigma(i)}\left(x_{\sigma(1)},\ldots,x_{\sigma(n)}\right)
=
\pi_i\left(x_1,\ldots,x_n\right).
\]
\item \label{ass:prize-disc} Discontinuities only arise at ties: for each \(i\), \(\pi_i\) is continuous at every \(x\in X^n\) such that \(x_i\ne x_j\) for all \(j\ne i\).
\item \label{ass:prize-mono} Higher is better: for each \(i\), every \(x_{-i}\), and all \(x_i'\ge x_i\),
\(\pi_i\left(x_i',x_{-i}\right)\ge \pi_i\left(x_i,x_{-i}\right)\).
\item \label{ass:prize-agg} Continuous aggregate prize: there exists a bounded continuous symmetric function \(\Pi\colon X^n\to\R\) such that for all \(x\in X^n\),
\(\sum_{i=1}^n \pi_i\left(x_1,\ldots,x_n\right)=\Pi(x_1,\ldots,x_n)\).
\end{enumerate}
\end{assumption}

If player \(i\) chooses the cdf \(F_i\) it incurs the cost \(C\left(F_i\right)\). We stipulate that the players' net payoffs are additively separable in the gross payoff and the cost and assume the following about the cost \(C\):
\begin{assumption}\label{ass:cost} The cost functional \(C\colon\mathcal F\to\mathbb R\) satisfies the following conditions.
\begin{enumerate}[noitemsep,label=(C\arabic*),leftmargin=*]
\item \label{ass1} \(C\) is convex and continuous with respect to weak convergence of the associated measures on \(X\).
\item \label{ass2} \(C\) has a continuous G\^{a}teaux derivative in the following sense: for each \(F\in\mathcal{F}\) there exists a function \(c_F \colon X\to\R\), continuous in \(x\), such that for all \(G\in\mathcal{F}\),
\[
\delta C\left(F;G-F\right)\coloneqq\lim_{\varepsilon\downarrow0}\frac{C\left(\left(1-\varepsilon\right)F+\varepsilon G\right)-C\left(F\right)}{\varepsilon}
\]
exists and satisfies
\[
\delta C\left(F;G-F\right)=\int_X c_F(x)d\left(G-F\right)(x).
\]
\item \label{ass3} There exists \(\eta_1>0\) such that for every \(F\in\mathcal{F}\),
\[
c_F(1)\ge c_F(0)+\bar{\pi}+\eta_1.
\]
\end{enumerate}
\end{assumption}
As \(G-F\) has total mass zero, \(c_F\) is unique only up to an \(F\)-dependent additive constant. All first-order and equilibrium conditions below are invariant to this normalization. \Cref{ass:cost}\ref{ass2} is stronger than simply requiring directional derivatives to exist: it requires the derivative to be represented by integration against a continuous function \(c_F\).

Under profile \((F_1,\ldots,F_n)\in\mathcal F^n\), player \(i\)'s net expected payoff is
\[
w_i\left(F_1,\ldots,F_n\right)
\coloneqq
\mathbb E\left[\pi_i\left(X_1,\ldots,X_n\right)\right]-C\left(F_i\right),
\]
where \(X_j\sim dF_j\) independently for all \(j=1,\ldots,n\).
\subsection{Discussion of Assumptions}
\Cref{ass:prize} places minimal structure on the environment. Boundedness \ref{ass:prize-bdd} is a simple regularity condition that guarantees expected payoffs are well-defined for any distributional choices. Symmetry \ref{ass:prize-sym} is simply our specialization to symmetric environments. The key modeling restriction is \ref{ass:prize-disc}: a player’s payoff can jump only when her realization is exactly tied with someone else's. This fits our leading applications (rank-order contests, product competition, priority races), where discontinuities arise precisely because a tie changes who wins. Monotonicity \ref{ass:prize-mono} formalizes the ``higher performance is better'' interpretation of the index and rules out perverse cases where improving one’s realization could reduce one's prize. Finally, \ref{ass:prize-agg} is another form of regularity, stipulating that the total prize to be allocated depends continuously on the realized performance profile (and not, e.g., on the auxiliary tie-breaking rule).

\Cref{ass:cost} spells out the production technology. Convexity and continuity \ref{ass1} capture the idea that pushing probability mass toward desirable outcomes exhibits weakly increasing marginal cost and that small changes in the outcome distribution do not produce large, discontinuous swings in costs. Differentiability \ref{ass2} provides a clean marginal-cost \(c_F\). It is the first-order cost of locally shifting weight, and it is what ultimately facilitates the first-order approach. The last, \ref{ass3}, is vaguely an Inada condition, guaranteeing that placing mass at the maximal performance level is too expensive to ever be optimal. This rules out corner solutions at the upper bound and the pernicious potential pathologies that might follow.

\subsection{Examples of Costs}\label{sec:exes}

We now briefly discuss costs that satisfy \Cref{ass:cost}\ref{ass1} and \ref{ass2}. \ref{ass3} is a separate boundary condition.

\smallskip

\noindent \textbf{Linear.} Let \(\gamma\colon X\to\mathbb R\) be continuous and define \(C\left(F\right)=\int_X\gamma(x)dF(x)\). Then \(C\) is affine, hence convex, and
\[
\delta C\left(F;G-F\right)=\int_X\gamma(x)d\left(G-F\right)(x),
\]
so we may take \(c_F(x)=\gamma(x)\). This is the standard expected-cost benchmark. In rank-order contests, for instance, it corresponds to the usual all-pay specification where players mix over deterministic efforts.

\smallskip

\noindent \textbf{Single-index.} Again let \(\gamma\colon X\to\mathbb R\) be continuous and let \(\Upsilon \colon \mathbb{R} \to \mathbb{R}\) be increasing, convex, and continuously differentiable on an interval containing \(\gamma\left(X\right)\). Let
\(C\left(F\right)=\Upsilon\left(\int_X\gamma(x)dF(x)\right)\),
so that
\[
\delta C\left(F;G-F\right)=\Upsilon'\left(\int_X\gamma(x)dF(x)\right)\int_X\gamma(x)d\left(G-F\right)(x),
\]
so we may take
\(c_F(x)=\Upsilon'\left(\int_X\gamma(x)dF(x)\right)\gamma(x)\). This nests the linear case when \(\Upsilon\) is affine and captures increasing marginal costs of aggregate performance intensity when \(\Upsilon\) is strictly convex. We specialize to this family in \secref{sec:roco}.

\smallskip

\noindent \textbf{Symmetric-interaction.}
Let \(\gamma\colon X\to\mathbb R\) be continuous and let \(K\colon X\times X\to\mathbb R\) be continuous, symmetric, and positive semidefinite.\footnote{That is, \(\int_X\int_XK(x,y)d\nu(x)d\nu(y)\ge0\) for every finite signed measure \(\nu\) on \(X\).} Define
\[
C\left(F\right)=\int_X\gamma(x)dF(x)+\frac{1}{2}\int_X\int_XK(x,y)dF(x)dF(y).
\]
Then \(C\) is convex and
\[
\delta C\left(F;G-F\right)=\int_X\left(\gamma(x)+\int_XK(x,y)dF(y)\right)d\left(G-F\right)(x),
\]
so we may take \(c_F(x)=\gamma(x)+\int_XK(x,y)dF(y)\).

\smallskip

\noindent \textbf{Cdf-path.}
Let \(\eta\) be a finite non-atomic Borel measure on \(X\), \(\Lambda\colon X\times\left[0,1\right]\to\mathbb R\) be continuous, and \(q\mapsto\Lambda(s,q)\) be convex and continuously differentiable for every \(s\). Suppose also that \(\Lambda_q\) is bounded and continuous. Defining
\(C\left(F\right)=\int_X\Lambda\left(s,F(s)\right)d\eta(s)\),
\[
\delta C\left(F;G-F\right)=\int_X\Lambda_q\left(s,F(s)\right)\left(G(s)-F(s)\right)d\eta(s)=\int_X\left(\int_{\left[x,1\right]}\Lambda_q\left(s,F(s)\right)d\eta(s)\right)d\left(G-F\right)(x),
\]
and so we may take
\(c_F(x) = \int_{\left[x,1\right]} \Lambda_q\left(s,F(s)\right) d\eta(s)\). We specialize to this family in \secref{sec:rd}.

These examples clarify the distinction between standard mixed-strategy competition and my more general framework. Linear costs reduce the model to expected-cost randomization over deterministic actions. Single-index costs make the marginal cost of every realization depend on an endogenous aggregate intensity index. Interaction costs make the cost of probability mass at one outcome depend on where the rest of the distribution lies. Cdf-path costs are natural in deadline problems: probability mass at an early time raises the probability of success by every later deadline, so the marginal cost of that mass incorporates these spillovers.

\section{Results}\label{sec:results}

\subsection{Preliminary Results}
We begin by establishing that a symmetric (pure-strategy) equilibrium exists, deferring the proof to the supplementary appendix (\Cref{existenceproof}).\footnote{In the Supplementary Appendix (\Cref{app:supplement}), we also prove symmetric equilibrium existence for various subsets of \(\mathcal{F}\) (like those generated by mean or majorization constraints).}
\begin{proposition}\label{prop:symNE}
Under \Cref{ass:prize,ass:cost}, there exists a symmetric pure-strategy Nash equilibrium \(F^*\in\mathcal{F}\).
\end{proposition}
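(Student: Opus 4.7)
The plan is to apply the symmetric-equilibrium existence theorem of \citet{reny1999existence}. First, I verify the routine hypotheses. The strategy space $\mathcal{F}$, endowed with weak convergence of the induced measures, is nonempty, convex, and a compact metric space (compactness follows from Prokhorov's theorem, as $X=[0,1]$ is compact). The game is symmetric by \ref{ass:prize-sym}. Player $i$'s payoff $w_i$ is linear in $dF_i$ (via integration of the bounded Borel kernel $\pi_i$) minus the convex cost $C(F_i)$, and hence concave---and so quasiconcave---in $F_i$. Moreover, $\sum_i w_i(F_1,\ldots,F_n)=\int \Pi\,d(dF_1\otimes\cdots\otimes dF_n)-\sum_i C(F_i)$ is continuous in the profile (by the continuity and symmetry of $\Pi$ in \ref{ass:prize-agg}, the continuity of $C$ in \ref{ass1}, and standard facts about weak convergence of product measures), so is in particular upper semicontinuous.

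The key step is verifying diagonal payoff security: for every symmetric profile $(F,\ldots,F)$ and every $\epsilon>0$, I must produce $\tilde F\in\mathcal{F}$ and an open neighborhood $U$ of $(F,\ldots,F)$ in $\mathcal{F}^{n-1}$ such that $w_i(\tilde F,G_{-i})\geq w_i(F,\ldots,F)-\epsilon$ for every $G_{-i}\in U$. If $F$ is atomless, then \ref{ass:prize-disc} implies $w_i$ is continuous at $(F,\ldots,F)$ (ties occur with probability zero when all marginals are continuous), so $\tilde F=F$ suffices. Otherwise, I construct $\tilde F$ by replacing each atom of $F$ at a point $x<1$ with a uniform spread on $[x,x+\delta]$ for some small $\delta>0$. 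The perturbed $\tilde F$ first-order stochastically dominates $F$, so by \ref{ass:prize-mono} the gross payoff weakly rises at any opponent profile; by \ref{ass1} the cost change $C(\tilde F)-C(F)$ vanishes as $\delta\to 0$. Since the spread lives strictly above $x$, small weak-topology perturbations of the opponents (which cannot move their atom locations far) preserve player $i$'s strict \emph{win} against $G_{-i}$-atoms at $x$, so the bound is stable on a neighborhood.

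The main obstacle is an atom of $F$ at the upper boundary $x=1$, where upward spreading is not available. Here \ref{ass3} is the decisive tool: by convexity of $C$, $C(F)-C(\tilde F)\geq \int c_{\tilde F}\,d(F-\tilde F)$, so reallocating a mass-$\alpha$ atom at $1$ down to $0$ saves at least $\alpha(\bar{\pi}+\eta_1)$ in cost while losing at most $\alpha\bar{\pi}$ in gross payoff, for a net improvement of at least $\alpha\eta_1>0$. Hence I first strip $F$ of any atom at $1$ (strictly improving the payoff) and then apply the upward-spread construction to the remaining atoms---both perturbations being robust to small changes in the opponents' strategies. With diagonal payoff security established alongside the conditions from the first paragraph, \citet{reny1999existence}'s symmetric-equilibrium theorem delivers a symmetric pure-strategy Nash equilibrium $F^*\in\mathcal{F}$.
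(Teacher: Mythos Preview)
Your proposal is correct and follows essentially the same route as the paper: apply \citet{reny1999existence}, verify (diagonal) quasiconcavity from concavity in own strategy, and obtain security by smoothing atoms of the deviator's distribution upward (after first using \ref{ass3} to strip any mass at $1$), so that ties with opponents have zero probability and the relevant payoff map becomes continuous in the opponents' profile. The only minor variation is cosmetic: you verify diagonal \emph{payoff} security together with upper semicontinuity of $\sum_i w_i$, whereas the paper verifies diagonal \emph{better-reply} security directly by smoothing a profitable deviation rather than the incumbent $F$.

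One phrasing to tighten: the robustness step ``small weak-topology perturbations of the opponents (which cannot move their atom locations far) preserve player $i$'s strict \emph{win}'' is not the right justification---weak convergence does not control atom locations (atoms can appear, disappear, or relocate). The correct reason, which the paper isolates as a standalone lemma, is that once $\tilde F$ is \emph{atomless}, the discontinuity set of $\pi_i$ has measure zero under the limiting product measure $(d\tilde F\otimes\lambda)\otimes\bigotimes_j(dG_j\otimes\lambda)$, so the extended Portmanteau theorem gives continuity of $G_{-i}\mapsto w_i(\tilde F,G_{-i})$ at $(F,\ldots,F)$; your FOSD/cost argument then delivers $w_i(\tilde F,F,\ldots,F)\ge w_i(F,\ldots,F)-\epsilon/2$, and continuity secures the neighborhood bound.
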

A central obstacle in these games is that payoffs jump discontinuously at ties: when two or more players draw exactly the same performance level, ``who wins'' changes abruptly. In spite of these discontinuities, a symmetric pure-strategy equilibrium exists. The intuition is that the discontinuities are tightly structured. By construction, they come only from tie events, and those tie events can be made essentially negligible by slightly ``smoothing'' deviations. Once deviations are smoothed in this way, the deviator's expected payoff varies continuously with opponents' strategies in a neighborhood, which is exactly the kind of stability we need for an equilibrium existence argument in this discontinuous game.

We next establish further properties that symmetric equilibria must possess, with proofs left to \Cref{app:b}. First, we argue that no best response places mass at \(1\), and so neither does any symmetric equilibrium.
\begin{lemma}\label{lem:no-atom-1}
Posit \Cref{ass:cost}. For any symmetric equilibrium \(F\in\mathcal F\), \(dF(\{1\})=0\).
\end{lemma}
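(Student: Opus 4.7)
The plan is by contradiction: suppose $F$ is a symmetric equilibrium with $a \coloneqq dF(\{1\}) > 0$, and exhibit a strictly profitable deviation for player $i$ when each of the other $n-1$ players plays $F$. The natural candidate is $G \coloneqq F - a\delta_1 + a\delta_0 \in \mathcal{F}$, which relocates the unwanted atom at $1$ down to $0$. I would then show that $w_i(G, F, \ldots, F) > w_i(F, F, \ldots, F)$ by bounding the change in the gross-payoff term from below and the cost savings $C(F) - C(G)$ from below.

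For the gross payoffs, note $dG - dF = a(\delta_0 - \delta_1)$, so the change in the $\pi_i$-integral equals
\[
a\int\bigl[\pi_i\bigl((0,u_i),(x_j,u_j)_{j\neq i}\bigr) - \pi_i\bigl((1,u_i),(x_j,u_j)_{j\neq i}\bigr)\bigr]\,d\bigl(\lambda\otimes(dF\otimes\lambda)^{\otimes(n-1)}\bigr).
\]
Each of the two integrals lies in $[0,\bar{\pi}]$ by (P1), so the net gross-payoff change is at least $-a\bar{\pi}$.

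For the cost, convexity of $C$ (by (C1)) together with the Gâteaux representation (C2) yields the subgradient inequality evaluated at $G$: $C(F) - C(G) \geq \delta C(G; F - G) = \int c_G\, d(F - G) = a\bigl(c_G(1) - c_G(0)\bigr)$, which by (C3) applied at $G$ is at least $a(\bar{\pi} + \eta_1)$. Adding the two bounds,
\[
w_i(G, F, \ldots, F) - w_i(F, F, \ldots, F) \;\geq\; -a\bar{\pi} + a(\bar{\pi} + \eta_1) \;=\; a\eta_1 \;>\; 0,
\]
contradicting that $F$ is a best response to $(F, \ldots, F)$.

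The one subtle point is the \emph{direction} of the convexity comparison: the Gâteaux derivative must be evaluated at the deviation $G$ rather than at the candidate equilibrium $F$, so that the linear underestimate from convexity bounds $C(F) - C(G)$ from below (pinning a cost increase when moving from $G$ to $F$) rather than from above (which would be useless). Once that is set up properly, the Inada-type gap in (C3) is precisely calibrated to swamp the worst-case gross loss $a\bar{\pi}$, producing the clean $a\eta_1$ strict-improvement margin; everything else is routine.
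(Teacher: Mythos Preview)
Your proof is correct and follows essentially the same strategy as the paper: the paper proves this by invoking its Lemma~\ref{lem:one_dominated}, which constructs the identical deviation (move the atom at $1$ to $0$) and obtains the same net gain $\eta_1 a$. The only minor difference is in the cost step: the paper bounds $C(\widetilde H)-C(H)$ by integrating the G\^ateaux derivative along the segment $H_s=(1-s)H+s\widetilde H$ and applying \ref{ass3} at every $H_s$, whereas you use the subgradient inequality once at the endpoint $G$ and apply \ref{ass3} there alone---a slightly tighter use of the hypotheses that yields the same bound.
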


Second, we introduce an additional assumption to preclude atoms on \(\left(0,1\right)\). To elaborate, atomlessness on \(\left(0,1\right)\) generally requires a strict tie disadvantage: whenever a player is tied for the best performance at a given interior level, moving slightly above it must yield a strictly higher expected prize on that tie event.
This holds in our later specializations, and we record an intuitive condition for the abstract model.

\begin{assumption}\label{ass:strict-tie}
For each \(x_0\in\left(0,1\right)\) there exist numbers \(\delta(x_0)\in\left(0,1-x_0\right]\) and \(\Delta(x_0)>0\) such that
for every \(z=\left(x_2,\ldots,x_n\right)\) satisfying \(\max_{2\le j\le n}x_j=x_0\), we have, for every \(\varepsilon\in\left(0,\delta(x_0)\right]\), \(\pi_1\left(x_0+\varepsilon;z\right) \geq \pi_1\left(x_0;z\right) + \Delta(x_0)\).
\end{assumption}
In short, this assumption introduces an ``over-cutting'' force, typical in many settings of interest: a player can get a discrete jump in her gross payoff by escaping a tie. This allows us to eliminate interior atoms in symmetric equilibria:
\begin{lemma}\label{lem:no-interior-atoms}
Posit \Cref{ass:prize,ass:cost,ass:strict-tie}.
If \(F\in\mathcal F\) is a symmetric equilibrium, then \(dF(\{x\})=0\) for every \(x\in\left(0,1\right)\).
\end{lemma}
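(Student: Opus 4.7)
My approach is by contradiction: suppose some symmetric equilibrium $F$ carries an atom of mass $p \coloneqq dF(\{x_0\}) > 0$ at an interior point $x_0 \in (0,1)$. I will construct an arbitrarily small overcutting deviation whose first-order gross gain strictly exceeds its first-order cost, violating the equilibrium inequality.

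Fix $\varepsilon \in (0, \delta(x_0)]$ and $\alpha \in (0, p]$ and set $dF' \coloneqq dF + \alpha(\delta_{x_0 + \varepsilon} - \delta_{x_0})$; since $dF$ carries mass $\ge \alpha$ at $x_0$, $F' \in \mathcal{F}$. Writing $G(x) \coloneqq \E[\tilde\pi_1(x;Z)]$ for the tie-averaged prize when the $n-1$ opponents are drawn independently from $F$, player $1$'s gross expected payoff is linear in her own cdf, so deviating along the path $F_t \coloneqq (1-t)F + t F' \in \mathcal{F}$ produces a gross gain of exactly $t\alpha \bigl(G(x_0+\varepsilon) - G(x_0)\bigr)$. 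The key step is a uniform-in-$\varepsilon$ lower bound on this bracket: on the tie event $E \coloneqq \{\max_{j\ge 2} X_j = x_0\}$, Assumption~\ref{ass:strict-tie} delivers a pointwise jump of at least $\Delta(x_0)$ in $\tilde\pi_1$ upon moving from $x_0$ to $x_0+\varepsilon$, while off $E$ monotonicity~\ref{ass:prize-mono} makes the integrand nonnegative. Since $\Pr(E) = F(x_0)^{n-1} - (F(x_0) - p)^{n-1} \eqqcolon q > 0$, I get $G(x_0+\varepsilon) - G(x_0) \ge \Delta(x_0)\, q$ uniformly for $\varepsilon \in (0, \delta(x_0)]$.

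The equilibrium condition $w_1(F_t,F,\ldots,F) \le w_1(F,F,\ldots,F)$ then yields
\[
\frac{C(F_t) - C(F)}{t} \;\ge\; \alpha \bigl(G(x_0+\varepsilon) - G(x_0)\bigr) \quad \text{for every } t \in (0,1],
\]
and letting $t \downarrow 0$ identifies the left-hand side's limit with the Gâteaux derivative $\delta C(F; F' - F) = \alpha \bigl(c_F(x_0+\varepsilon) - c_F(x_0)\bigr)$ via Assumption~\ref{ass:cost}\ref{ass2}. Cancelling $\alpha$ gives $c_F(x_0+\varepsilon) - c_F(x_0) \ge \Delta(x_0)\, q > 0$ for every $\varepsilon \in (0, \delta(x_0)]$, and continuity of $c_F$ then forces the left-hand side to $0$ as $\varepsilon \downarrow 0$, which is the desired contradiction. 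The step I expect to require the most care is the decomposition of $G(x_0+\varepsilon) - G(x_0)$ along $E$ versus $E^{c}$—in particular, ensuring that the tie-breaker randomization is properly absorbed into $\tilde\pi_1$ so that Assumption~\ref{ass:strict-tie} applies pointwise in $z$—after which the convexity and Gâteaux-differentiability of $C$ do the rest essentially mechanically.
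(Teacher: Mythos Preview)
Your proof is correct and follows essentially the same overcutting argument as the paper: decompose the gross gain along the tie event $E=\{\max_{j\ge 2}X_j=x_0\}$ using Assumption~\ref{ass:strict-tie} on $E$ and monotonicity off $E$, then use G\^ateaux differentiability and continuity of $c_F$ to control the cost side. The only cosmetic difference is that the paper smears the moved mass uniformly over $(x_0,x_0+\delta]$ and balances the cost directly, whereas you move it to the single point $x_0+\varepsilon$, pass to the G\^ateaux limit first, and then send $\varepsilon\downarrow 0$; both extract the contradiction from continuity of $c_F$ at $x_0$.
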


Now we establish a useful continuity property of a player's interim payoff. For \(x\in X\), define player \(1\)'s interim expected prize against symmetric opponents \(F\) as
\[
a_F(x)\coloneqq \mathbb E\left[\pi_1\left(x,X_2,\ldots,X_n\right)\right],
\]
where \(X_2,\ldots,X_n\sim dF\) are i.i.d. Then, for any deviation \(H\in\mathcal F\),
\[
u_1\left(H,F,\ldots,F\right)
=
\int_X a_F(x) dH(x)-C(H).
\]
Our next lemma records a continuity property of \(a_F\) at points where opponents place no atom:

\begin{lemma}\label{lem:aF-cont}
Posit \Cref{ass:prize}. If \(x_0\in X\) satisfies \(dF(\{x_0\})=0\), then \(a_F(\cdot)\) is continuous at \(x_0\).
\end{lemma}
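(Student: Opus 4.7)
The plan is a straightforward dominated-convergence argument, exploiting that Assumption~\ref{ass:prize-disc} confines the discontinuities of \(\pi_1\) to the tie locus and that \(x_0\) carries no mass under \(F\). Fix a sequence \(x_m\to x_0\); I want to show \(a_F(x_m)\to a_F(x_0)\).

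First I would argue pointwise convergence of the integrand \(\tilde\pi_1(\cdot;z)\) at \(z\)'s outside a \(\mu_F^{\otimes(n-1)}\)-null set. By \ref{ass:prize-disc}, \(\pi_1\) is continuous at any point \(((x_0,u_1),z)\) for which \(x_0\neq x_j\) for every \(j=2,\dots,n\); for such \(z\), sequential continuity gives \(\pi_1((x_m,u_1),z)\to \pi_1((x_0,u_1),z)\) for every \(u_1\in[0,1]\). Boundedness \ref{ass:prize-bdd} then lets me apply the bounded convergence theorem in the \(u_1\)-integral to conclude \(\tilde\pi_1(x_m;z)\to\tilde\pi_1(x_0;z)\) for every such \(z\).

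Next I would show the set \(E\coloneqq \bigcup_{j=2}^{n}\{z\in (X\times[0,1])^{n-1}\colon x_j=x_0\}\) of ``bad'' \(z\)'s is \(\mu_F^{\otimes(n-1)}\)-null. Since \(\mu_F=dF\otimes\lambda\), the \(j\)-th marginal of \(\mu_F^{\otimes(n-1)}\) projected to \(x_j\) is \(dF\), so \(\mu_F^{\otimes(n-1)}(\{x_j=x_0\})=dF(\{x_0\})=0\) by hypothesis; a union bound yields \(\mu_F^{\otimes(n-1)}(E)=0\).

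Finally, combining these two steps, \(\tilde\pi_1(x_m;z)\to\tilde\pi_1(x_0;z)\) for \(\mu_F^{\otimes(n-1)}\)-almost every \(z\), and the integrands are uniformly bounded by \(\bar\pi\). Dominated convergence delivers \(a_F(x_m)\to a_F(x_0)\), which establishes continuity of \(a_F\) at \(x_0\). The only potentially subtle point is identifying the bad set \(E\) and checking its nullity from \(dF(\{x_0\})=0\); once that is in hand, everything else is routine.
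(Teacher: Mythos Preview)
Your proposal is correct and follows essentially the same approach as the paper's proof: two nested applications of dominated (bounded) convergence, first in \(u_1\) and then over \(z\), with the key observation that the tie set \(\{z:x_j=x_0\text{ for some }j\}\) is \(\mu_F^{\otimes(n-1)}\)-null because \(dF(\{x_0\})=0\). The only cosmetic difference is that the paper names the \emph{good} set \(E\) (where all \(x_j\neq x_0\)) and shows it has full measure, whereas you name the \emph{bad} set \(E\) and show it is null; the arguments are otherwise identical.
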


Next we turn our attention to the symmetric planner problem in which the planner chooses \(F\in\mathcal F\) to maximize the diagonal payoff
\[
v(F)\coloneqq u_1\left(F,\ldots,F\right).
\]
Recall that in \Cref{ass:prize}, we introduce the notation
\[
\sum_{i=1}^n \pi_i\left(x_1,\ldots,x_n\right)=\Pi(x_1,\ldots,x_n).
\]
Thus, if \(X_1,\ldots,X_n\sim dF\) are i.i.d., the planner's objective is
\[
v(F)=\frac{1}{n}\mathbb E\left[\Pi(X_1,\ldots,X_n)\right]-C(F).
\]
As we show in the appendix \Cref{lem:vcont}, \(v\) is continuous on the compact metric space \(\mathcal F\), and, therefore, it attains a maximum:
\begin{proposition}\label{prop:planner-exists}
Under \Cref{ass:prize,ass:cost}, there exists \(G\in\mathcal F\) maximizing \(v\) over \(\mathcal F\).
\end{proposition}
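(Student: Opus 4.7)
The plan is a straightforward Weierstrass argument: I will argue that $\mathcal{F}$, equipped with the topology of weak convergence of the associated measures, is compact, and that the objective $v$ is continuous on it, whence it attains a maximum.

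For compactness: identifying $\mathcal{F}$ with the space of Borel probability measures on $X = [0,1]$, and noting that $X$ is a compact metric space, any collection of probability measures on $X$ is automatically tight. Hence Prokhorov's theorem yields that $\mathcal{F}$ is sequentially compact (and in fact compact metrizable, e.g.\ under the L\'evy--Prokhorov metric) under weak convergence. This is standard and requires no further work.

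For continuity of $v$: the objective decomposes as $v(F) = \frac{1}{n}\int_{X^n}\Pi\,d(dF)^{\otimes n} - C(F)$. Assumption~\ref{ass1} gives continuity of $C$ directly. For the gross payoff piece, I would invoke that $\Pi$ is bounded and continuous on $X^n$ by \ref{ass:prize-agg}, together with the fact that if $F_m \Rightarrow F$ in $\mathcal{F}$, then the $n$-fold product measures satisfy $(dF_m)^{\otimes n} \Rightarrow (dF)^{\otimes n}$ on $X^n$. Weak convergence then implies $\int\Pi\,d(dF_m)^{\otimes n} \to \int\Pi\,d(dF)^{\otimes n}$ by the Portmanteau theorem. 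The paper already flags that this continuity is formally established in Lemma~\ref{lem:vcont} in the appendix, so I can simply cite it.

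With compactness and continuity in hand, the existence of a maximizer follows from the extreme value theorem on a compact metric space. I expect the only real subtlety—and it is more bookkeeping than obstacle—to be the product-measure step: verifying that weak convergence is preserved under $n$-fold products, which is where the boundedness and joint continuity of $\Pi$ do the heavy lifting (a generic non-continuous bounded integrand would only give upper/lower semicontinuity via Portmanteau, which would still suffice here since we only need an upper semicontinuous $v$ for existence of a maximum, but continuity is cleaner and is what the appendix lemma provides). No new analytic machinery beyond what the excerpt already supplies is required.
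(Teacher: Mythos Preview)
Your proposal is correct and matches the paper's approach exactly: the paper simply invokes Lemma~\ref{lem:vcont} for continuity of \(v\) and the compactness of \(\mathcal{F}\) to conclude via Weierstrass, which is precisely what you outline (with your explicit justification of compactness via Prokhorov being an elaboration the paper takes for granted).
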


\subsection{Equilibria and Planner Optima Characterizations}\label{sec:kkt}
We now characterize equilibria and planner optima via the first-order approach, leaving the technical details for \Cref{app:kkt}. Defining the net (symmetric) payoff
\[
\Phi(F,x)\coloneqq a_F(x)-c_F(x),
\]
\begin{proposition}\label{lem:game-kkt}
Posit Assumptions~\ref{ass:prize} and \ref{ass:cost}. If \(F\in\mathcal F\) is a symmetric equilibrium, then there exists a multiplier \(\lambda_g\in\R\) such that
\[
\Phi(F,x)\le \lambda_g \ \forall x\in X,
\qquad \text{and} \qquad
\Phi(F,x)=\lambda_g \ \ \forall x\in \supp(dF).
\]
\end{proposition}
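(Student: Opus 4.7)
The plan is to apply the standard first-order approach to the (concave) best-response problem facing a single player when the opponents are all playing \(F\), using the Gâteaux differentiability of \(C\) together with the linearity in the player's own distribution of the interim payoff \(\int a_F\,dH\). Fix a symmetric equilibrium \(F\). Since \(\mathcal{F}\) is convex, for every \(H\in\mathcal{F}\) and \(t\in[0,1]\) the mixture \(F_t\coloneqq (1-t)F+tH\) lies in \(\mathcal{F}\). Because player 1's deviation payoff against the symmetric profile \(F\) equals \(U(F_t)\coloneqq \int_X a_F(x)\,dF_t(x)-C(F_t)\), which is linear-minus-convex in \(t\) and hence concave, and because \(t=0\) is a maximizer by the equilibrium hypothesis, the right-derivative at \(0\) is nonpositive.

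Computing that derivative using Gâteaux differentiability of \(C\) (Assumption~\ref{ass2}) yields
\[
\int_X a_F(x)\,d(H-F)(x)-\int_X c_F(x)\,d(H-F)(x)\le 0,
\]
i.e. \(\int_X \Phi(F,x)\,d(H-F)(x)\le 0\) for every \(H\in\mathcal F\). Setting \(\lambda_g\coloneqq \int_X\Phi(F,x)\,dF(x)\) and specializing to Dirac measures \(H=\delta_{x_0}\in\mathcal F\) for each \(x_0\in X\) immediately gives \(\Phi(F,x_0)\le \lambda_g\) on all of \(X\). This also implies \(\int_X(\lambda_g-\Phi(F,x))\,dF(x)=0\) with a nonnegative integrand, so \(\Phi(F,\cdot)=\lambda_g\) holds \(dF\)-almost everywhere.

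The final step is to upgrade this almost-everywhere equality to pointwise equality on the topological support. For any \(x_0\in\supp(dF)\) with \(dF(\{x_0\})>0\), the equality is automatic since the atom has positive mass. For \(x_0\in\supp(dF)\) with \(dF(\{x_0\})=0\), I appeal to Lemma~\ref{lem:aF-cont} for continuity of \(a_F\) at \(x_0\) and to Assumption~\ref{ass2} for continuity of \(c_F\); thus \(\Phi(F,\cdot)\) is continuous at \(x_0\). If \(\Phi(F,x_0)<\lambda_g\), continuity would produce an open neighborhood of \(x_0\) on which \(\Phi(F,\cdot)<\lambda_g\), and this neighborhood has strictly positive \(dF\)-measure by definition of the support, contradicting the \(dF\)-a.e. equality.

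The main obstacle is the transition from \(dF\)-a.e. equality to equality on every point of the support, since \(F\) is allowed to carry interior atoms under the hypotheses of this proposition (we do not invoke Assumption~\ref{ass:strict-tie}); what saves us is that Lemma~\ref{lem:aF-cont} gives exactly the continuity of \(a_F\) at non-atom points of \(F\) that is needed, while atoms are dealt with trivially by their positive mass. The rest is bookkeeping: verifying that Dirac measures indeed lie in \(\mathcal F\) so the pointwise inequality holds on all of \(X\), and confirming that the directional-derivative calculation is legitimate by invoking concavity of \(U\) to identify the first-order necessary condition with the equilibrium condition.
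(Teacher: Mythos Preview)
Your proof is correct and follows essentially the same route as the paper's. The paper packages the variational inequality, the specialization to Diracs, and the support-equality upgrade into three auxiliary lemmas (Lemmas~\ref{lem:var_ineq}, \ref{lem:kkt_simplex}, and \ref{lem:compl_support}), which you have inlined; in particular your atom/non-atom case split using Lemma~\ref{lem:aF-cont} and continuity of \(c_F\) is exactly the paper's Lemma~\ref{lem:compl_support} argument.
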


We now turn our attention to the planner. For \(F\in\mathcal F\) and \(x\in X\), define the planner's \textit{interim aggregate prize}
\[
A_F(x)\coloneqq \mathbb E\left[\Pi\left(x,X_2,\ldots,X_n\right)\right],
\]
where \(X_2,\ldots,X_n\sim dF\) are i.i.d.

\begin{proposition}\label{lem:planner-kkt}
Posit \Cref{ass:prize,ass:cost}.
If \(G\in\mathcal F\) maximizes \(v\) over \(\mathcal F\),
then there exists \(\lambda_p\in\R\) such that
\[
A_G(x)-c_G(x)\le \lambda_p \ \ \forall x\in X,
\quad \text{and} \quad
A_G(x)-c_G(x)=\lambda_p \ \ \forall x\in \supp(dG).
\]
\end{proposition}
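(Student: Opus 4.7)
My approach is the standard Lagrangian/KKT analysis adapted to the convex space \(\mathcal{F}\): I will compute the Gâteaux derivative of \(v\) at a maximizer \(G\) in every feasible direction, extract a pointwise inequality from Dirac deviations, and upgrade the resulting a.e.\ equality to an equality on \(\supp(dG)\) by a continuity argument.

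The first step is the directional derivative. Fix \(G\in\mathcal F\) and \(H\in\mathcal F\), and set \(G_t\coloneqq G+t(H-G)\in\mathcal F\) for \(t\in[0,1]\). For the prize term in \(v\), expand the product measure \((dG_t)^{\otimes n}=(dG+t(dH-dG))^{\otimes n}\) multilinearly; since \(\Pi\) is bounded continuous and symmetric, Fubini and dominated convergence let me differentiate under the integral, and the \(n\) identical first-order contributions collapse to
\[
\frac{d}{dt}\bigg|_{t=0^+}\frac{1}{n}\int_{X^n}\Pi\,d(G_t)^{\otimes n}
=\int_X A_G(x)\,d(H-G)(x).
\]
Assumption \ref{ass2} gives \(\frac{d}{dt}|_{t=0^+}C(G_t)=\int_X c_G d(H-G)\). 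Writing \(\Psi_G(x)\coloneqq A_G(x)-c_G(x)\), optimality of \(G\) forces
\[
\int_X \Psi_G(x)\,d(H-G)(x)\le 0 \quad\text{for all }H\in\mathcal F.
\]

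The second step extracts pointwise inequality. Every Dirac \(\delta_y\) (\(y\in X\)) lies in \(\mathcal F\) via its step cdf \(\mathbf 1[x\ge y]\); plugging \(H=\delta_y\) into the display above yields \(\Psi_G(y)\le\lambda_p\) for all \(y\in X\), where I define
\(\lambda_p\coloneqq\int_X \Psi_G\,dG\). Since \(\Psi_G\le\lambda_p\) pointwise and \(\int\Psi_G dG=\lambda_p\), the integrand must equal \(\lambda_p\) for \(dG\)-almost every \(x\). To upgrade this to equality on the whole support, I note that \(c_G\) is continuous by \ref{ass2}, and that \(A_G(x)=\int_{X^{n-1}}\Pi(x,x_2,\ldots,x_n)d(dG)^{\otimes(n-1)}\) is continuous in \(x\) by dominated convergence (\(\Pi\) is bounded continuous and \((dG)^{\otimes(n-1)}\) is a fixed finite measure on the compact \(X^{n-1}\)). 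Thus \(\Psi_G\) is continuous on \(X\), so the closed set \(\{x:\Psi_G(x)=\lambda_p\}\) has full \(dG\)-mass, whence it contains \(\supp(dG)\).

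The only delicate piece is Step~1: one must legitimize the multilinear expansion of the \(n\)-fold product measure and the interchange of differentiation with the \(X^n\)-integral. This is not truly hard—the expansion is a polynomial in \(t\) of degree \(n\) with bounded coefficient measures, and \(\Pi\) is bounded—but it is where all the structural work sits. The symmetry of \(\Pi\) (Assumption \ref{ass:prize-sym}) is essential: it collapses the \(n\) first-order cross terms into a single copy and produces precisely the interim aggregate prize \(A_G\) that appears in the statement. Everything after Step~1 is a routine application of the supporting-hyperplane argument on a convex set combined with the continuity of the Lagrangian integrand.
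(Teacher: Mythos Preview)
Your proof is correct and follows essentially the same approach as the paper's: compute the directional derivative of the prize functional via the multilinear expansion of the product measure (the paper isolates this as a lemma), derive the variational inequality from optimality, extract the multiplier via Dirac deviations, and upgrade the \(dG\)-a.e.\ equality to equality on the support using continuity of \(A_G-c_G\). The paper packages the last three steps into reusable auxiliary lemmas (a variational-inequality lemma, a KKT-on-the-simplex lemma, and a complementarity-to-support lemma), but the underlying arguments are identical to yours.
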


The economics of equilibrium and the planner's optimum are as basic as can be. Namely Propositions \ref{lem:game-kkt} and \ref{lem:planner-kkt} are intuitive generalizations of the standard first-order-approach logic to a setting where the decision variable is not a single number but a distribution. The key idea is to think about a tiny reallocation of probability mass: take an \(\varepsilon\)-slice of mass away from one outcome level and move it to another. If such a local reallocation would raise expected payoff, then the original distribution cannot be optimal. Therefore, at an optimum, every outcome level that receives positive probability must deliver the same net marginal return, and any outcome level that is not used must deliver less.

Economically, \Cref{lem:game-kkt} is exactly the familiar indifference logic of equilibria. The player only puts weight on outcomes that maximize private net marginal return, and it must be indifferent (in net marginal terms) across all outcomes it actually uses. The multiplier is the common ``benchmark'' net marginal return in equilibrium, which arises because probability mass is a scarce resource: adding weight at one \(x\) necessarily means removing weight elsewhere.

The planner's problem has the same structure, but the benefit side changes from private to social. When the planner shifts probability mass toward \(x\), the relevant marginal benefit is the effect on total surplus (the aggregate prize). The marginal cost of shifting probability mass toward \(x\) is the same as in the game, and the planner uses exactly the same ``equalize net marginal returns on the support'' principle, but with social marginal benefits instead of private ones.

\section{Specialization I: Rank-Order Contests}\label{sec:roco}
We begin by studying rank-order contests. For \(n \geq 2\), we specialize \(\pi_i\) to a rank-order contest
with prize vector \(\mathbf{v} = \left( v_1,\ldots,v_n \right)\) satisfying
\[
v_1 \ge \cdots \ge v_n = 0,
\qquad\text{and}\qquad
\sum_{k=1}^n v_k = 1,
\]
with ties broken uniformly.

We specialize to a particular family of costs:
\begin{assumption}\label{ass:contest}
The cost functional satisfies
\[
C\left(F\right)
=
\Upsilon\left(\int_0^1 \gamma\left(x\right)dF\left(x\right)\right),
\qquad\text{for all \(F \in \mathcal{F}\),}
\]
where \(\gamma\colon \left[0,1\right] \to\mathbb{R}_+\) and \(\Upsilon\colon\left[0,\gamma\left(1\right)\right]\to\mathbb R\) satisfy:
\begin{enumerate}[noitemsep,leftmargin=*]
\item \(\gamma\) is continuous, strictly increasing, concave, and \(\gamma\left( 0 \right)=0\).
\item \(\Upsilon\) is increasing, convex, and continuously differentiable.
\item \(z \mapsto z\Upsilon'\left(z\right)\) is strictly increasing on \(\left[0,\gamma\left(1\right)\right]\).
\item \(\Upsilon'\left(0\right)\gamma\left(1\right)>1\).
\end{enumerate}
\end{assumption}
Define \(z_F \coloneqq \int_0^1 \gamma\left(x\right)dF\left(x\right)\). Under \Cref{ass:contest}, as we discussed in \secref{sec:exes}, the G\^ateaux derivative of \(C\) is represented by \(c_F\left(x\right)=\Upsilon'\left(z_F\right)\gamma\left(x\right)\).
Moreover, since gross prizes lie in \(\left[0,1\right]\), we may take \(\bar\pi=1\) in this specialization. Then \Cref{ass:contest} implies \Cref{ass:cost}\ref{ass3}, because
\[
c_F\left(1\right)-c_F\left(0\right)
=
\Upsilon'\left(z_F\right)\gamma\left(1\right)
\ge
\Upsilon'\left(0\right)\gamma\left(1\right)
>
1.
\]
We define the rank-order benefit function
\[
\Psi\left( q;\mathbf{v} \right)
\coloneqq
\sum_{k=1}^n \binom{n-1}{k-1} q^{n-k}\left( 1-q \right)^{k-1} v_k,
\qquad \forall \ q \in \left[0,1\right].
\]
For two prize vectors \(\mathbf{v}\) and \(\mathbf{w}\), we say that \(\mathbf{w}\) \textit{majorizes} \(\mathbf{v}\), if
\[
\sum_{k=1}^m w_k \ge \sum_{k=1}^m v_k,
\qquad \forall \ m=1,\ldots,n, \quad \text{with equality at } m = n.
\]
We prove the following results in \Cref{ap:roproofs}.
\begin{lemma}\label{lem:roc}
Posit \Cref{ass:contest} and take two prize vectors \(\mathbf{v}\) and \(\mathbf{w}\) that satisfy i. \(\mathbf{w}\) majorizes \(\mathbf{v}\), and ii. \(v_1>v_2\) and \(w_1 > w_2\). Then the symmetric equilibria under the two prize vectors are unique. 

Moreover, let \(F\) and \(G\) denote the (symmetric) equilibrium cdfs under prizes \(\mathbf{v}\) and \(\mathbf{w}\), and let \(X \sim F\) and \(Y \sim G\). Then, for both \(\mathbf{v}\) and \(\mathbf{w}\), the symmetric equilibrium is atomless on \(\left[0,1\right]\), and satisfies \(0 \in \operatorname{supp}\left( dF \right)\), \(0 \in \operatorname{supp}\left( dG \right)\) and \(dF\left( \left\{0\right\} \right) = dG\left( \left\{0\right\} \right)=0\),
and the first-order conditions have multipliers \(\lambda_F = \lambda_G =0\).
\end{lemma}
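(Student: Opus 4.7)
The plan is to derive all conclusions for the prize vector $\mathbf{v}$ from the FOC characterization in Proposition~\ref{lem:game-kkt} (existence is guaranteed by Proposition~\ref{prop:symNE}); the argument for $\mathbf{w}$ is identical. I proceed by (a) showing $F$ is atomless, (b) pinning down the lower endpoint of the support and the multiplier, then (c) inverting the FOC to obtain uniqueness.

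\textbf{Step 1 (atomlessness).} Lemma~\ref{lem:no-atom-1} rules out an atom at $1$. To eliminate atoms on $(0,1)$ I verify Assumption~\ref{ass:strict-tie}: when the opponents' realizations have maximum $x_0 \in (0,1)$ with exactly $m \geq 1$ opponents at $x_0$, player 1's tie-broken prize at $x_0$ equals $\frac{1}{m+1}(v_1 + \cdots + v_{m+1})$, while moving to any $x_0 + \varepsilon$ yields $v_1$ outright; the gap is at least $\frac{m}{m+1}(v_1 - v_2) \geq (v_1 - v_2)/2 > 0$, so Lemma~\ref{lem:no-interior-atoms} applies. The main obstacle is the boundary point $0$, which lies outside the scope of Assumption~\ref{ass:strict-tie} and must be handled by hand. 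Suppose $p \coloneqq dF(\{0\}) > 0$ for contradiction. Restrict attention to the event that all $n-1$ opponents draw $0$ (probability $p^{n-1}$): a player at $0$ ties with everyone for first and receives the uniform average $\frac{1}{n}\sum_k v_k = 1/n$, whereas a player at $\varepsilon > 0$ wins outright for prize $v_1$. Since $v_1 > v_2$ and $\sum_k v_k = 1$ force $v_1 > 1/n$, and Assumption~\ref{ass:prize-mono} ensures every other configuration contributes weakly to the gap,
\[
\lim_{x \downarrow 0} a_F(x) - a_F(0) \;\geq\; p^{n-1}\!\left(v_1 - \tfrac{1}{n}\right) \;>\; 0.
\]
Continuity of $c_F$ (Assumption~\ref{ass:cost}, part~\ref{ass2}) then yields $\Phi(F, x) > \Phi(F, 0) = \lambda_F$ for small $x > 0$, contradicting the FOC inequality in Proposition~\ref{lem:game-kkt}. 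Hence $F$ is atomless on $[0, 1]$.

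\textbf{Step 2 (support anchor and multiplier).} With $F$ atomless, $a_F(x) = \Psi(F(x); \mathbf{v})$ and $c_F(x) = \gamma(x) + \beta(F(x))$, so $\Phi(F, x) = \Psi(F(x); \mathbf{v}) - \gamma(x) - \beta(F(x))$. Let $\underline{x} \coloneqq \inf \supp(dF)$. FOC equality at $\underline{x}$ reads $-\gamma(\underline{x}) = \lambda_F$ since $F(\underline{x}) = 0$ and $\Psi(0; \mathbf{v}) = v_n = 0$. For any $x \in [0, \underline{x})$, $F(x) = 0$ makes the FOC inequality $\gamma(x) \geq \gamma(\underline{x})$, contradicting strict monotonicity of $\gamma$ unless $\underline{x} = 0$. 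Hence $0 \in \supp(dF)$ and $\lambda_F = -\gamma(0) = 0$.

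\textbf{Step 3 (uniqueness).} Setting $\lambda_F = 0$ and $H(q) \coloneqq \Psi(q; \mathbf{v}) - \beta(q)$, the FOC on the support becomes $\gamma(x) = H(F(x))$. Condition (iii) makes $H$ strictly increasing on $[0,1]$ (and $H(0)=0$), so $F(x) = H^{-1}(\gamma(x))$ on the support, and strict monotonicity of $\gamma$ precludes gaps in $\supp(dF)$ (two distinct support points would force $\gamma$ to agree). Assumption~\ref{ass:cost}, part~\ref{ass3}, together with $\bar{\pi} \geq v_1$, gives $\gamma(1) + \beta(1) \geq \bar{\pi} + \eta_1 > v_1 = H(1) + \beta(1)$, so $\gamma(1) > H(1)$ and there is a unique $\bar{x} \in [0, 1)$ with $\gamma(\bar{x}) = H(1)$. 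On $[0, \bar{x}]$ the equilibrium is pinned down by $F(x) = H^{-1}(\gamma(x))$; on $[\bar{x}, 1]$, $F \equiv 1$ satisfies the FOC inequality since $\gamma$ is strictly increasing past $\bar{x}$. Uniqueness follows, and the same argument handles $\mathbf{w}$.
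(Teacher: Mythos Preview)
Your proof is correct and follows essentially the same architecture as the paper's: rule out atoms via Lemmas~\ref{lem:no-atom-1} and~\ref{lem:no-interior-atoms}, handle the atom at $0$ separately, anchor the support at $0$ to pin $\lambda_F=0$, and then exploit the identity $\gamma(x)=\Psi(F(x);\mathbf{v})-\beta(F(x))$ on the support. Two minor stylistic differences are worth flagging. First, for the atom at $0$, the paper builds an explicit deviation (shave an $\varepsilon$-slice of the atom and spread it on $(0,\delta]$, then bound the prize gain and the G\^ateaux cost change), whereas you route the same variational idea through Proposition~\ref{lem:game-kkt} directly by showing $\Phi(F,x)>\Phi(F,0)=\lambda_F$ for small $x>0$; your packaging is a bit cleaner. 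Second, for uniqueness the paper argues distributionally---$\gamma(X)$ has quantile function $g_{\mathbf v}$ regardless of which equilibrium $F$ is, so all equilibria induce the same law for $X$---while you invert the FOC explicitly to get $F=H^{-1}\circ\gamma$ on $[0,\bar x]$ with $\bar x$ pinned by $\gamma(\bar x)=H(1)$, ruling out gaps via the $\gamma(a)=\gamma(b)$ contradiction. Both routes are valid; yours is more constructive and also verifies the off-support inequality, while the paper's distributional argument is slightly shorter once the on-support identity is in hand.
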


\begin{theorem}\label{thm:icx_rank_order}
Posit \Cref{ass:contest}. Then, under the same conditions as \Cref{lem:roc}, \(Y\) dominates \(X\) in the increasing convex order.
\end{theorem}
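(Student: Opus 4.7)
The strategy is to reduce the claim $Y \geq_{icx} X$ to a one-parameter integral inequality between $\Psi(\cdot;\mathbf{v})$ and $\Psi(\cdot;\mathbf{w})$, which then follows from majorization via a Schur-concavity argument.

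First, using Lemma~\ref{lem:roc} and the first-order conditions of Proposition~\ref{lem:game-kkt}: because $F, G$ are atomless with multipliers $\lambda_F = \lambda_G = 0$, and against atomless opponents the interim expected prize is $a_F(x) = \Psi(F(x);\mathbf{v})$ (respectively $a_G(x) = \Psi(G(x);\mathbf{w})$), the FOCs read
\[\Psi(F(x);\mathbf{v}) - \beta(F(x)) = \gamma(x), \qquad \Psi(G(x);\mathbf{w}) - \beta(G(x)) = \gamma(x),\]
on their respective supports. Define $g_{\mathbf{v}}(q) \coloneqq \Psi(q;\mathbf{v}) - \beta(q)$ and $g_{\mathbf{w}}(q) \coloneqq \Psi(q;\mathbf{w}) - \beta(q)$, which are strictly increasing on $[0,1]$ by hypothesis~(iii). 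For $Q \sim \mathrm{Unif}[0,1]$, $X \stackrel{d}{=} \gamma^{-1}(g_{\mathbf{v}}(Q))$ and $Y \stackrel{d}{=} \gamma^{-1}(g_{\mathbf{w}}(Q))$; equivalently, $\tilde X \coloneqq \gamma(X)$ and $\tilde Y \coloneqq \gamma(Y)$ have quantile functions $g_{\mathbf{v}}$ and $g_{\mathbf{w}}$.

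Second, I would prove the stronger statement that $\tilde Y \geq_{cx} \tilde X$ (convex order). This suffices for the theorem because $\gamma^{-1}$ is increasing convex (as $\gamma$ is increasing concave), so for any increasing convex $\phi$, the composition $\phi \circ \gamma^{-1}$ is increasing convex, and hence $\mathbb{E}\phi(Y) = \mathbb{E}[(\phi \circ \gamma^{-1})(\tilde Y)] \geq \mathbb{E}[(\phi \circ \gamma^{-1})(\tilde X)] = \mathbb{E}\phi(X)$. By the standard Lorenz/quantile-integral characterization of convex order, $\tilde Y \geq_{cx} \tilde X$ is equivalent to equality of means plus $\int_0^p g_{\mathbf{w}}(q)\,dq \leq \int_0^p g_{\mathbf{v}}(q)\,dq$ for every $p \in [0,1]$. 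Since the $\beta$-terms cancel in the difference, both conditions collapse to
\[\int_0^p \Psi(q;\mathbf{v})\,dq \;\geq\; \int_0^p \Psi(q;\mathbf{w})\,dq \quad \forall\, p \in [0,1],\]
with equality at $p = 1$.

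Third, to establish this inequality, write $\int_0^p \Psi(q;\mathbf{v})\,dq = \sum_{k=1}^n v_k\, c_k(p)$ with $c_k(p) \coloneqq \int_0^p \binom{n-1}{k-1} q^{n-k}(1-q)^{k-1}\,dq$. The Beta/Binomial identity $c_k(p) = \tfrac{1}{n}\Pr[\mathrm{Bin}(n,p) \geq n-k+1]$ shows that $c_k(p)$ is nondecreasing in $k$ for every fixed $p$, with $c_k(1) = 1/n$ independent of $k$. Letting $V_k \coloneqq \sum_{j \leq k} v_j$ and $W_k \coloneqq \sum_{j \leq k} w_j$, Abel summation (with $V_n = W_n = 1$) gives
\[\sum_{k=1}^n (v_k - w_k)\, c_k(p) = \sum_{k=1}^{n-1}(V_k - W_k)\bigl(c_k(p) - c_{k+1}(p)\bigr) \geq 0,\]
since $V_k - W_k \leq 0$ by majorization and $c_k(p) - c_{k+1}(p) \leq 0$ by the monotonicity in $k$. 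Equality at $p = 1$ follows from $c_k(1) = 1/n$ and $\sum v_k = \sum w_k$. The principal obstacle is the reformulation step: recognizing that the natural object to compare in the convex order lives on the $\gamma$-transformed scale, and that this comparison lifts to ICX on the original scale precisely because of the curvature of $\gamma$. Once this reduction is in place, the Schur-concavity step is a standard Abel-summation exercise whose only content is identifying the monotone test function $c_k(p)$.
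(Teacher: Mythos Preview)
Your proof is correct and shares the paper's overall architecture: transform to the \(\gamma\)-scale, identify \(g_{\mathbf v}\) and \(g_{\mathbf w}\) as the quantile functions of \(\gamma(X)\) and \(\gamma(Y)\), establish the convex order there via the integrated-quantile criterion, and lift back using that \(\gamma^{-1}\) is increasing and convex.

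Where you differ is in the combinatorial core. The paper decomposes the move from \(\mathbf v\) to \(\mathbf w\) into a finite chain of reverse Pigou--Dalton transfers; for a single transfer \(i<j\) it shows \(p_i(q)-p_j(q)\) has a single sign change (via the monotone likelihood ratio \((q/(1-q))^{j-i}\)), and then uses ``single crossing plus equal means'' to conclude \(\int_0^p(g_{\mathbf w}-g_{\mathbf v})\le 0\). You instead attack the general majorization in one stroke: the Beta/Binomial identity \(c_k(p)=\tfrac1n\Pr[\mathrm{Bin}(n,p)\ge n-k+1]\) makes the monotonicity of \(k\mapsto c_k(p)\) transparent, and Abel summation against the partial-sum inequalities \(V_k\le W_k\) delivers the integrated-quantile inequality directly. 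Your route is shorter and avoids the transfer decomposition; the paper's route is more elementary (no incomplete-Beta identity) and, as a byproduct, exhibits the single-crossing structure of \(g_{\mathbf w}-g_{\mathbf v}\) for elementary transfers, which is informative in its own right but not needed for the theorem.
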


My exercise takes a complementary reduced-form route to \citet{KimKrishnaRyvkin2023}, who explore contests in which agents ``choose their own luck:'' in an otherwise standard rank-order contest, their agents' realized performance is generated by adding arbitrary unbiased noise to effort. Their main result is striking: across a broad class of effort-cost functions \(c\), the winner-take-all prize schedule maximizes equilibrium expected output (and expected effort) relative to any other rank-order prize vector with the same budget.\footnote{Moreover, for large regions of their cost class they obtain a stronger distributional statement: when \(c\) is concave, convex, or convex-concave, the equilibrium output under winner-take-all dominates the equilibrium output under any more equal prize schedule in the increasing convex order.}

Their key insight is a two-step reduction that elegantly separates what is standard  in contests to their novel ingredient produced by allowing for strategic risk taking. Using persuasion tools they derive an endogenous \emph{virtual cost of output}. This pins down equilibrium output via the familiar all-pay indifference condition and allows them to distill the outcome of moving from prizes \(\mathbf{v}\) to \(\mathbf{w}\) into (i) a \emph{prize effect}, holding the net virtual cost fixed and changing only the (capped) inverse of \(\Psi\), and (ii) a \emph{virtual cost effect}, holding \(\Psi\) fixed and changing only the net virtual cost.

The first force is exactly that emphasized in classic all-pay comparative statics (more unequal prizes increase dispersion and, with concave costs, raise output), while the second bracket isolates the novel adjustment: changes in prizes feed back into the endogenous risk-effort mix and, therefore, into the induced virtual cost. My paper takes a complementary reduced-form route in which effort and risk-taking are intertwined. My assumptions (which neither generalize nor are generalized by \citet{KimKrishnaRyvkin2023}'s) pin down the equilibrium scale of marginal costs across prize vectors, so that the classic prize-majorization force governs the comparative static.

It is also worthwhile to compare my finding with that in \citet{FangNoeStrack2020}, in which each contestant chooses an effort level \(x\) and pays a standard effort cost \(c(x)\).
If a contestant mixes according to a distribution \(F\), then its total cost is simply the expected effort cost, \(C\left(F\right)=\int c dF\), which is \emph{linear in the distribution} \(F\), and whose G\^{a}teaux derivative, therefore, is independent of the distribution: \(c_{F}(x)=c(x)\) for every \(F\).\footnote{Thus, \citet{FangNoeStrack2020}'s environment fits within the present one as the special case in which marginal costs do not depend on the equilibrium distribution.}

Both this work and \citet{FangNoeStrack2020} exploit a common structural implication of majorization changes in the prize vector. Namely, that a move to a more unequal prize schedule induces a mean-preserving spread in the equilibrium distribution of marginal costs. The effect on equilibrium output is then governed by the curvature of the mapping from marginal costs back to performance. Here, the nonlinear cost enters through \(\Upsilon'\left(z_F\right)\). In equilibrium, this scalar is pinned down by the budget equation \(z_F\Upsilon'\left(z_F\right)=1/n\), and so it is \textit{common across prize vectors with the same total budget}. Thus, letting \(z_n^*\) denote the solution to this budget equation, more unequal prizes induce a mean-preserving spread in the equilibrium distribution of \(\Upsilon'\left(z_n^\ast\right)\gamma\left(X\right)\), and the concavity of \(\gamma\) converts this into the increasing-convex-order comparison in output.

\section{Specialization II: Risky R\&D}\label{sec:rd}

In a winner-take-all patent race, each firm chooses a distribution over breakthrough times \(T_i \in \left[0,\infty\right]\), where \(T_i=\infty\) represents never achieving a breakthrough. The firm that realizes the minimum breakthrough time obtains a unit prize discounted at (common) rate \(r>0\).

To embed this minimization problem in the abstract environment of \secref{sec:env} (where higher performance is better), we define the (decreasing) homeomorphism
\[
x\left(t\right) \coloneqq \frac{1}{1+t}, \quad \forall \ t \in \left[0,\infty\right],
\]
with inverse \(t\left(x\right)\coloneqq \frac{1-x}{x}\) and the convention \(t\left(0\right)\coloneqq \infty\).
Since \(x\left(\cdot\right)\) is strictly decreasing, minimizing \(t\) is equivalent to maximizing \(x\). We state results directly in the time coordinate \(t\) and define \(V\left(t\right)\coloneqq e^{-rt}\), with \(V\left(\infty\right)\coloneqq 0\).

In this section, \(\mathcal F\) denotes the set of cdfs on \(\left[0,\infty\right]\). Thus \(H\in\mathcal F\) is right-continuous and nondecreasing on \(\left[0,\infty\right)\), with \(H\left(\infty\right)=1\). Mass at \(\infty\) represents a positive probability of never discovering.

We specialize to the following cost form. Let \(\eta\) be a finite non-atomic positive Borel measure on \(\left[0,\infty\right]\), with \(\eta\left(\left\{\infty\right\}\right)=0\).

\begin{assumption}\label{ass:rd_monotone_marginal_cost}
For every \(H\in\mathcal F\), there exists \(m_H\in L^1\left(\eta\right)\) such that, for every \(K\in\mathcal F\),
\[
\delta C\left(H;K-H\right)
=
\int_{\left[0,\infty\right]}m_H\left(t\right)\left(K\left(t\right)-H\left(t\right)\right)d\eta\left(t\right).
\]
Moreover, for all \(F,G\in\mathcal F\),\footnote{For a function \(\varphi(x)\), we denote \(\varphi^+(x) \coloneqq \max\left\{\varphi(x),0\right\}\).}
\[
\int_{\left[0,\infty\right]}\left(G\left(t\right)-F\left(t\right)\right)^+\left(m_G\left(t\right)-m_F\left(t\right)\right)d\eta\left(t\right)\ge0.
\]
\end{assumption}
Thus we may normalize the marginal-cost to get
\[
c_H\left(\tau\right)\coloneqq \int_{\left[\tau,\infty\right]}m_H\left(t\right)d\eta\left(t\right),
\quad \forall \ \tau<\infty,
\qquad\text{and}\qquad
c_H\left(\infty\right)\coloneqq 0.
\]
This assumption is the exact-cost analog of a monotone marginal-cost condition. The object \(m_H\left(t\right)\) is the shadow cost of raising the probability of discovery by deadline \(t\). Since placing probability mass at breakthrough time \(\tau\) raises the cdf at every later finite deadline, the marginal cost \(c_H\left(\tau\right)\) is the tail integral of these deadline shadow costs. For example, if
\[
C\left(H\right)=\int_{\left[0,\infty\right]}\Lambda\left(t,H\left(t\right)\right)d\eta\left(t\right),
\]
where \(q\mapsto \Lambda\left(t,q\right)\) is convex and continuously differentiable, and where \(\Lambda\left(t,q\right)\) and \(\Lambda_q\left(t,q\right)\) are measurable in \(t\) and bounded in absolute value by \(\eta\)-integrable functions, then
\(m_H\left(t\right)=\Lambda_q\left(t,H\left(t\right)\right)\), and the monotonicity condition in \Cref{ass:rd_monotone_marginal_cost} is immediate.

We also impose the following endpoint condition, which rules out atoms at immediate (time \(0\)) discovery:
\begin{assumption}\label{ass:rd_endpoint}
There exists \(\eta_0>0\) such that, for every \(H\in\mathcal F\), \(c_H\left(0\right)\ge1+\eta_0\).
\end{assumption}

Ties are broken uniformly. Escaping a tie by moving slightly earlier yields a discrete gain on any tie event at \(t>0\), so \Cref{ass:strict-tie} holds away from the endpoint. Under the reparameterization \(x\left(t\right)=1/\left(1+t\right)\), \Cref{lem:no-interior-atoms} rules out atoms on \(\left(0,\infty\right)\). \Cref{ass:rd_endpoint} rules out an atom at \(t=0\): moving an \(\varepsilon\)-slice of such an atom to \(\infty\) can reduce the gross prize by at most \(\varepsilon\), while convexity and \Cref{ass:rd_endpoint} imply that it lowers cost by at least \(\varepsilon\left(1+\eta_0\right)\). These combine to rule out any symmetric equilibria with atoms on \(\left[0,\infty\right)\).

For \(H\in\mathcal F\), let \(T_2,\dots,T_n\sim dH\) be i.i.d. and define \(M_H\coloneqq \min\left\{T_2,\dots,T_n\right\}\) and
\[
\widetilde{V}_H\left(t\right)\coloneqq
\mathbb{E}\left[V\left(M_H\right)\mathbf{1}\left\{M_H>t\right\}\right],
\quad \forall \ t<\infty,
\]
with \(\widetilde{V}_H\left(\infty\right)\coloneqq0\).
A central object is the reduced-form private benefit
\[
B_H\left(t\right)\coloneqq V\left(t\right)\left(1-H\left(t\right)\right)^{n-1},
\qquad \forall \ t<\infty,
\]
with \(B_H\left(\infty\right)\coloneqq0\). If \(F\) is a symmetric equilibrium, then \(dF\left(\left\{t\right\}\right)=0\) for all \(t<\infty\), so for every \(t<\infty\) a deviator who chooses the deterministic time \(t\) wins if and only if \(M_F>t\). Thus, its interim expected prize is \(B_F\left(t\right)\).

For the planner, let
\[
A_H(t)\coloneqq \mathbb E\left[V\left(\min\{t,M_H\}\right)\right],
\qquad \forall \ t\in[0,\infty],
\]
denote the gross social value when one firm is assigned deterministic breakthrough
time \(t\), while the other \(n-1\) firms use \(H\). Since
\(A_H(\infty)=\mathbb E[V(M_H)]\) is independent of \(t\), the planner's first-order condition can be written using the
normalized marginal benefit
\[
P_H(t)\coloneqq A_H(t)-A_H(\infty)
=
V(t)(1-H(t))^{n-1}-\widetilde V_H(t),
\qquad \forall \ t<\infty,
\]
with \(P_H(\infty)\coloneqq 0\). Equivalently,
\[
P_H(t)=\int_t^\infty rV(s)(1-H(s))^{n-1}\,ds,
\qquad \forall \ t<\infty.
\]
The key wedge is that private firms value winning the full patent prize, whereas the
planner values only the incremental acceleration of discovery relative to what rival
firms would have achieved anyway. Thus, the planner benchmark internalizes
duplication and business-stealing across firms. It asks how the privately chosen
breakthrough-time distribution compares with the common breakthrough-time
distribution chosen by a planner who controls all firms' project distributions.

We prove this section's result in \Cref{rdproofs}.

\begin{theorem}\label{cor:rd_overinvestment}
Posit \Cref{ass:prize,ass:rd_monotone_marginal_cost,ass:rd_endpoint}.
Let \(F\) be a symmetric equilibrium and let \(G\) maximize \(v\). Then \(F\left(t\right)\ge G\left(t\right)\) for all \(t \in \left[0,\infty\right]\).
\end{theorem}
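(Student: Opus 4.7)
The plan is to verify the single hypothesis of Proposition \ref{prop:rd_fosd_conditional}, namely that $\lambda_{p}\geq \lambda_{g}$. I will pin down the equilibrium multiplier exactly as $\lambda_{g}=-c_{\infty}$ and observe that the normalized planner KKT immediately gives $\lambda_{p}\geq -c_{\infty}$; chaining these yields the required inequality, and the theorem follows.

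The planner direction is a one-line evaluation. The normalized condition from Proposition \ref{lem:planner-kkt} reads
\[
A_{G}(t)-A_{G}(\infty)-c_{G}(t)\leq \lambda_{p}
\]
for every $t\in[0,\infty]$, including the corner $t=\infty$, which corresponds to the point $x=0\in X$ under the reparameterization $x=1/(1+t)$. At $t=\infty$ the left side collapses to $-c_{\infty}$, using $c_{G}(\infty)=c_{\infty}$ by the normalization in Assumption \ref{ass:rd_tail_norm}. So $\lambda_{p}\geq -c_{\infty}$.

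For the equilibrium direction, first evaluate $\Phi(F,t)\leq \lambda_{g}$ at $t=\infty$: this gives $\lambda_{g}\geq -c_{\infty}$ for free, since $a_{F}(\infty)=0$ (as $V(\infty)=0$) and $c_{F}(\infty)=c_{\infty}$. For the reverse inequality I split on $\supp(dF)$. If $\sup \supp(dF)=\infty$, pick $t_{k}\in \supp(dF)$ with $t_{k}\to \infty$ and pass the KKT equality $\Phi(F,t_{k})=\lambda_{g}$ to the limit: $V(t_{k})(1-F(t_{k}))^{n-1}\to 0$ because $V(t)\to 0$, and $c_{F}(t_{k})\to c_{\infty}$ by the uniform-in-$H$ tail convergence in Assumption \ref{ass:rd_tail_norm}, so $\lambda_{g}=-c_{\infty}$. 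If instead $T\coloneqq \sup \supp(dF)<\infty$, then $T\in \supp(dF)$ by closedness, and since $F$ is atomless on $[0,\infty)$ (by Lemma \ref{lem:no-atom-1} and Lemma \ref{lem:no-interior-atoms} transported through the homeomorphism), $F(T)=1$ and $\Phi(F,T)=-c_{F}(T)$. The strict-tail clause $c_{H}(t)>c_{\infty}$ for finite $t$ in Assumption \ref{ass:rd_tail_norm} then forces $\lambda_{g}<-c_{\infty}$, contradicting the lower bound just established. Ruling out this bounded-support case is the main obstacle, and the strict-inequality half of Assumption \ref{ass:rd_tail_norm} is tailored for exactly this purpose. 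Combining the cases gives $\lambda_{g}=-c_{\infty}\leq \lambda_{p}$, and Proposition \ref{prop:rd_fosd_conditional} delivers the FOSD conclusion $F(t)\geq G(t)$ for all $t\in[0,\infty)$.
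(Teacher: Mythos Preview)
Your proof is correct and follows the paper's overall strategy of verifying $\lambda_p\ge\lambda_g$ and invoking Proposition~\ref{prop:rd_fosd_conditional}. Your treatment of the equilibrium multiplier is essentially the paper's Lemmas~\ref{lem:rd_tail_support} and~\ref{lem:rd_pinned_multipliers} (for the $H=F$ case) compressed into a single case split, with the minor difference that you evaluate the KKT inequality directly at $t=\infty$ to get $\lambda_g\ge -c_\infty$, whereas the paper obtains the same contradiction by letting $t\to\infty$ along the tail of the KKT inequalities.

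Where your route genuinely diverges is on the planner side. The paper works symmetrically: it shows (Lemma~\ref{lem:rd_tail_support}) that $G$ too must have tail support, and then (Lemma~\ref{lem:rd_pinned_multipliers}) pins down $\lambda_p=-c_\infty$ exactly, using the reduced-form planner KKT and the bound on $\widetilde V_G$. You observe that none of this is needed: the single evaluation of the (unnormalized) planner KKT inequality $A_G(t)-c_G(t)\le\bar\lambda_p$ at $t=\infty$ already gives $\lambda_p\ge -c_\infty$, which together with $\lambda_g=-c_\infty$ is all Proposition~\ref{prop:rd_fosd_conditional} requires. This is a legitimate economy: you avoid the tail-support analysis for $G$ entirely. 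The paper's route, by contrast, establishes the sharper statement $\lambda_p=\lambda_g=-c_\infty$, which may be of independent interest but is not needed for the theorem as stated.
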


This section's model sits in the classic patent‐race tradition, where firms compete for a priority prize and can shape not just the expected time to discovery but the entire distribution of breakthrough outcomes. \citet{dasgupta1980uncertainty} emphasize how market structure affects research speed, the number of parallel labs, and the risk profile of R\&D, while \citet{bhattacharya1986portfolio} and \citet{klette1986market} highlight that priority rewards can distort portfolio risk and duplication incentives relative to the social optimum. All investigate what competition does to the timing--\textit{viz.}, distribution--of breakthroughs.

My contribution is to abstract away from taking a particular stochastic search technology (hazards, lab counts, or a specific portfolio structure) and then deriving the implied invention‐time distribution, and instead to let each firm flexibly choose an arbitrary distribution over breakthrough time subject to exact assumptions on the cost. The payoff from doing this is a clean, stark result: equilibrium breakthrough times are stochastically earlier than the planner's, so competition inefficiently accelerates discovery in a robust sense.

\section{Specialization III: Price and Quality Competition}\label{sec:unit_demand_prices}

We now endogenize product design in \citet{perloff1985equilibrium}'s seminal model of oligopolistic price competition. Our players remain firms. Each chooses a distribution \(F_i\in\mathcal F\) over product
quality \(Q_i\in X\coloneqq[0,1]\) and a price \(p_i\in P\coloneqq[0,\bar p]\).
We normalize marginal cost to \(0\), so \(p_i\) is the per-unit markup. To justify a first-order approach in prices, we impose a taste shock, which smooths demand.
\begin{assumption}\label{as:price_smoothing}
Take some \(\sigma\in(0,1)\). After a firm draws \(Q_i\sim dF_i\), the representative consumer draws an independent
\(\varepsilon_i\) with cdf \(G\) on \([0,1]\) that admits a continuous density \(g\). The consumer's value for product \(i\) is \(\hat{Q}_i \coloneqq (1-\sigma)Q_i+\sigma \varepsilon_i\in[0,1]\).
\end{assumption}
We assume full market coverage. To wit, the consumer ranks firms by \(\hat{Q}_i-p_i\) and purchases from the firm maximizing this, with ties broken uniformly. Firm \(i\)'s gross operating profit is \(p_i\) if it makes the sale and \(0\) otherwise, and it incurs the
distributional cost \(C(F_i)\).

Under \Cref{as:price_smoothing}, for any \(F\in\mathcal F\) the induced cdf of \(\hat{Q}\)
is absolutely continuous, so ties occur with probability \(0\) for every profile. For any \(F\in\mathcal F\), write \(\hat{F}\) for the induced cdf of \(\hat{Q}=(1-\sigma)Q+\sigma\varepsilon\) when \(Q\sim dF\) and \(\varepsilon\sim G\) are independent, and write \(\hat{f}\) for its density.

Fix symmetric opponents \(\left(F,p\right)\in\mathcal F\times P\), and let \(\hat{F}\) denote the induced cdf of opponents' effective quality \(\hat{Q}\).
Consider a deviator that chooses \(\left(H,r\right)\in\mathcal F\times P\), draws \(Q\sim dH\),
and posts price \(r\).
Conditional on realizing quality \(q\) and taste shock \(\varepsilon\), the deviator wins if and only if
\[
(1-\sigma)q+\sigma\varepsilon-r > \max_{2\le j\le n}\left\{\hat{Q}_j-p\right\},
\]
if and only if \(\hat{Q}_j < (1-\sigma)q+\sigma\varepsilon+p-r\) for all \(j\ge 2\). Accordingly, writing \(\hat{F}(x)\coloneqq 0\) for \(x<0\) and
\(\hat{F}(x)\coloneqq 1\) for \(x>1\), the deviator's interim expected profit is
\[
a_{F,p}\left(q;r\right) \coloneqq r \cdot \int_0^1 \left(\hat{F}\left((1-\sigma)q+\sigma e+p-r\right)\right)^{n-1} dG(e).
\]
For each \(r\in P\), denote
\[W_{F,p}(r)
\coloneqq
\max_{H\in\mathcal F}
\left\{
\int_X a_{F,p}\left(q;r\right) dH(q)-C(H)
\right\}.
\]
Note that \(W_{F,p}(r)\) already incorporates the best distributional deviation at price \(r\), so optimizing
over \(r\) rules out \emph{double deviations}.

\begin{proposition}\label{prop:price_kkt_and_price_foc}
Posit \Cref{as:price_smoothing,ass:cost}.
Let \(\left(F,p\right)\in\mathcal F\times P\) be a symmetric equilibrium with an interior \(p\),\footnote{In some sense, the specialization to an equilibrium with a deterministic choice of price and distribution is not innocuous--one may not exist. However, for the purposes of the main exercise of this section (providing implications of marginal-cost pricing in the large-market limit) this restriction is harmless and merely for expositional ease. To wit, in the Supplementary Appendix, I show that analogous results hold in symmetric equilibria where firms jointly randomize over prices and quality distributions.} and let \(\hat{F}\) denote the induced cdf of \(\hat{Q}=(1-\sigma)Q+\sigma\varepsilon\) under \(F\), with density \(\hat{f}\). Then,
\begin{enumerate}[noitemsep]
\item\label{priceitem1} There exists \(\lambda_g\in\mathbb R\) such that
\[\begin{split}
    p \int_0^1 \left(\hat{F}\left((1-\sigma)q+\sigma e\right)\right)^{n-1} dG(e)-c_F(q) &\le \lambda_g\quad \forall \ q\in X,
\quad \text{and}\\
p \int_0^1 \left(\hat{F}\left((1-\sigma)q+\sigma e\right)\right)^{n-1} dG(e)-c_F(q) &= \lambda_g \quad \forall \ q\in\supp(dF); \quad \text{and}
\end{split}\]
\item\label{priceitem2} The equilibrium price \(p = \frac{1}{n \mathbb E\left[\hat{f}\left(\bar{Q}\right)\right]}\), where \(\bar{Q} \coloneqq \max\left\{\hat{Q}_2,\dots,\hat{Q}_n\right\}\).
\end{enumerate}
\end{proposition}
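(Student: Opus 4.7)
For Part~\ref{priceitem1}, the plan is to invoke Proposition~\ref{lem:game-kkt} on the auxiliary game in which every firm's price is fixed at the equilibrium level~$p$ and each firm chooses only a quality distribution. In that restricted game, firm~$i$'s interim expected gross payoff against symmetric opponents~$F$ is exactly $a_{F,p}(\,\cdot\,;p)$, and Assumption~\ref{as:price_smoothing} makes $\hat F$ absolutely continuous, so ties in the effective-quality comparison occur with probability zero and the gross payoff satisfies Assumption~\ref{ass:prize}. Since $(F,p)$ is an equilibrium of the full game, $F$ must solve $\max_{H\in\mathcal F}\int_X a_{F,p}(q;p)\,dH(q)-C(H)$, and Proposition~\ref{lem:game-kkt} delivers a multiplier $\lambda_g$ such that $a_{F,p}(q;p)-c_F(q)\le\lambda_g$ pointwise and with equality on $\supp(dF)$, which is the claim.

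For Part~\ref{priceitem2}, the plan is a ``pinched envelope'' argument. Define $\phi(r)\coloneqq \int_X a_{F,p}(q;r)\,dF(q)-C(F)$, obtained by holding the distribution fixed at~$F$ and varying only the deviator's price. Tautologically, $\phi(r)\le W_{F,p}(r)$ for every $r\in P$, with equality at $r=p$ (since, by Part~\ref{priceitem1}, $F$ is the optimal distributional deviation at $r=p$). Because $(F,p)$ is an equilibrium, $p$ maximizes $W_{F,p}$, so $\phi(r)\le W_{F,p}(r)\le W_{F,p}(p)=\phi(p)$, which forces $\phi$ itself to attain its maximum at the interior point~$p$; the univariate first-order condition $\phi'(p)=0$ therefore holds.

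The final step is to compute $\phi'(p)$ and recognize the resulting terms. Differentiating $a_{F,p}(q;r)=r\int_0^1\hat F((1-\sigma)q+\sigma e+p-r)^{n-1}\,dG(e)$ in $r$, evaluating at $r=p$, and integrating against $dF$ gives
\[
\phi'(p)=\mathbb{E}\bigl[\hat F(\hat Q_1)^{n-1}\bigr]-p(n-1)\,\mathbb{E}\bigl[\hat F(\hat Q_1)^{n-2}\hat f(\hat Q_1)\bigr],
\]
where $\hat Q_1,\dots,\hat Q_n$ are i.i.d.\ with cdf $\hat F$ and density $\hat f$. The first expectation equals $\mathbb P(\hat Q_1\ge\bar Q)=1/n$ by exchangeability (ties have probability zero). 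For the second, I would note that the density of $\bar Q=\max_{j\ge 2}\hat Q_j$ is $(n-1)\hat F^{\,n-2}\hat f$, so a change of variables gives $(n-1)\,\mathbb{E}[\hat F(\hat Q_1)^{n-2}\hat f(\hat Q_1)]=\mathbb{E}[\hat f(\bar Q)]$. Substituting and solving $\phi'(p)=0$ yields the claimed formula. The only real delicacy is justifying differentiation under the integral sign, which follows from boundedness of $\hat f$ (a convolution of $dF$ with the continuous density $g/\sigma$); beyond this, the argument is the envelope observation plus routine order-statistic bookkeeping.
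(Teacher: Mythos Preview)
Your proposal is correct and follows essentially the same approach as the paper: Part~\ref{priceitem1} is obtained by freezing the price at~$p$ and applying Proposition~\ref{lem:game-kkt} to the resulting distributional best-response problem, and Part~\ref{priceitem2} is the first-order condition of the price-only deviation $r\mapsto\int a_{F,p}(q;r)\,dF(q)$ evaluated at the interior optimum~$p$, followed by the same order-statistic identifications. Your ``pinched envelope'' detour through $W_{F,p}$ is valid but unnecessary---since $(F,p)$ is already a best response to opponents' $(F,p)$, the deviation $(F,r)$ is immediately weakly worse than $(F,p)$, so $\phi$ is maximized at~$p$ without invoking $W_{F,p}$ at all (this is exactly how the paper argues it).
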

\begin{proof}
\eqref{priceitem1} Fix opponents at \(\left(F,p\right)\). At price \(p\), the deviator optimizes
\[
\int_X \left[p\int_0^1 \left(\hat{F}\left((1-\sigma)q+\sigma e\right)\right)^{n-1} dG(e)\right] dH(q)-C(H)
\]
over \(H\in\mathcal F\). \(F\) is a best response, so \Cref{lem:game-kkt} produces the stated conditions.

\eqref{priceitem2} Under \Cref{as:price_smoothing}, ties occur with probability \(0\), and \(\hat{F}\) is absolutely continuous with density \(\hat{f}\). Since \(g\) is continuous on \([0,1]\), it is bounded \(\Rightarrow\) so too is \(\hat{f}\). Consider the \emph{price-only} deviation that holds the distribution fixed at \(F\) (so \(\hat{F}\) is also fixed) and posts price \(r\in P\). Its expected gross profit is
\[
\iota(r)\coloneqq \int_X a_{F,p}\left(q;r\right) dF(q)
=
\int_X r\cdot \left(\hat{F}\left(z+p-r\right)\right)^{n-1} d\hat{F}(z).
\]
As \(\left(F,p\right)\) is a best response to opponents' \(\left(F,p\right)\), \(p\) maximizes \(\iota\) over \(P\). Since \(p\) is interior, \(\iota'(p)=0\).
Moreover, writing \(\hat{f}(x)\coloneqq 0\) for \(x\notin[0,1]\), we have
\[
\partial_r\left[r\cdot \left(\hat{F}\left(z+p-r\right)\right)^{n-1}\right]
=
\left(\hat{F}\left(z+p-r\right)\right)^{n-1}
-
r\left(n-1\right)\left(\hat{F}\left(z+p-r\right)\right)^{n-2} \hat{f}\left(z+p-r\right),
\]
for \(d\hat{F}\)-a.e. \(z\). Boundedness permits us to differentiate under the integral sign, so \(\iota'(r)=\int_X \partial_r\left[r\cdot \left(\hat{F}\left(z+p-r\right)\right)^{n-1}\right] d\hat{F}(z)\). Evaluating at \(r=p\) yields
\[
0 = \iota'(p)
=
\int_X
\left[
\left(\hat{F}(z)\right)^{n-1}
-
p\left(n-1\right)\left(\hat{F}(z)\right)^{n-2} \hat{f}(z)
\right]d\hat{F}(z).
\]
By symmetry, \(\frac{1}{n}= \int_X \left(\hat{F}(z)\right)^{n-1} d\hat{F}(z)\). Finally, \(\bar{Q}\) has density \(\left(n-1\right)\left(\hat{F}(z)\right)^{n-2}\hat{f}(z)\), so
\[
\mathbb E\left[\hat{f}\left(\bar{Q}\right)\right]
=
\int_0^1 \hat{f}(z)\left(n-1\right)\left(\hat{F}(z)\right)^{n-2}\hat{f}(z)\,dz
=
\left(n-1\right)\int_0^1 \left(\hat{F}(z)\right)^{n-2} \hat{f}(z)^2\,dz,
\]
which produces \(p = 1/\left(n \mathbb E\left[\hat{f}\left(\bar{Q}\right)\right]\right)\).
\end{proof}

One notable implication of this result is that the equilibrium marginal-cost gap is pinned down by the equilibrium price, scaled by the win-probability gap between the endpoints of the equilibrium quality support:

\begin{corollary}\label{cor:price_equals_mc_gap}
Posit \Cref{ass:cost,as:price_smoothing}.
Let \(\left(F,p\right)\) be a symmetric equilibrium with \(p>0\), and let \(\hat{F}\) denote the induced cdf of \(\hat{Q}=(1-\sigma)Q+\sigma\varepsilon\) under \(F\).
Define the support endpoints \(\ubar{q} \coloneqq \inf\supp(dF)\)
and \(\bar{q} \coloneqq \sup\supp(dF)\). Define, for each \(q\in X\),
\[
\omega(q)\coloneqq \int_0^1 \left(\hat{F}\left((1-\sigma)q+\sigma e\right)\right)^{n-1} dG(e).
\]
Then, \(c_F\left(\bar{q}\right)-c_F\left(\ubar{q}\right)
=
p\cdot\left(\omega(\bar{q})-\omega(\ubar{q})\right)\); and,
notably, \(0\le c_F\left(\bar{q}\right)-c_F\left(\ubar{q}\right)\le p\).
\end{corollary}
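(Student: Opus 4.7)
\textbf{Proof plan for Corollary \ref{cor:price_equals_mc_gap}.} The entire argument reduces to reading off Proposition~\ref{prop:price_kkt_and_price_foc}\eqref{priceitem1} at the two endpoints of the equilibrium support and then bounding the resulting difference of $\omega$'s. So my plan has three short steps.

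First, I invoke the KKT characterization of the quality best response. By Proposition~\ref{prop:price_kkt_and_price_foc}\eqref{priceitem1}, there exists a multiplier $\lambda_g \in \mathbb{R}$ with
\[
p\,\omega(q) - c_F(q) = \lambda_g \qquad \text{for all } q \in \supp(dF).
\]
Both $\ubar{q}$ and $\bar{q}$ are in $\supp(dF)$ (since the support is closed and they are the inf and sup of a nonempty closed set). Evaluating the identity at $\bar{q}$ and $\ubar{q}$ and subtracting yields the equality
\[
c_F(\bar{q}) - c_F(\ubar{q}) = p\bigl(\omega(\bar{q}) - \omega(\ubar{q})\bigr),
\]
which is the first claim. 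As a cleanliness check: $c_F$ is continuous in $x$ by Assumption~\ref{ass:cost}\ref{ass2}, and $\omega$ is continuous because Assumption~\ref{as:price_smoothing} makes $\hat{F}$ (and its $(n-1)$-th power) continuous; so evaluating at the endpoints is unambiguous.

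Second, I bound the right-hand side. The integrand $(\hat{F}((1-\sigma)q+\sigma e))^{n-1}$ is nondecreasing in $q$ (since $\hat{F}$ is a cdf, hence nondecreasing), so $\omega$ is nondecreasing, which gives $\omega(\bar{q}) \ge \omega(\ubar{q})$, hence the lower bound $c_F(\bar{q}) - c_F(\ubar{q}) \ge 0$. For the upper bound, $\omega(q) \in [0,1]$ for all $q$ because $\hat{F} \in [0,1]$, so $\omega(\bar{q}) - \omega(\ubar{q}) \le 1$; combined with $p > 0$ and the equality above, this yields $c_F(\bar{q}) - c_F(\ubar{q}) \le p$.

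There is no real obstacle here: the corollary is essentially an endpoint readout of the distributional first-order condition established in Proposition~\ref{prop:price_kkt_and_price_foc}. The only step that deserves a line of comment in the written-up proof is the justification that $\ubar{q},\bar{q}\in\supp(dF)$ (so that the KKT equality, not merely the inequality, applies at both points); everything else is algebra and the trivial $[0,1]$ bounds on $\omega$.
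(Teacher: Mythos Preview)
Your proof is correct and essentially identical to the paper's: both invoke Proposition~\ref{prop:price_kkt_and_price_foc}\eqref{priceitem1} at the two support endpoints, subtract, and then use the $[0,1]$ bounds on $\omega$. Your version is slightly more careful in justifying that $\ubar{q},\bar{q}\in\supp(dF)$ and in using monotonicity of $\omega$ (rather than only the $[0,1]$ range) for the lower bound, but the argument is the same.
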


\begin{proof}
By \Cref{prop:price_kkt_and_price_foc}\ref{priceitem1}, for every \(q\in\supp(dF)\),
\[
p\,\omega(q)-c_F(q)=\lambda_g.
\]
Since \(\ubar{q},\bar{q}\in\supp(dF)\), subtracting the equality at \(\ubar{q}\) from that at \(\bar{q}\) yields
\[
c_F\left(\bar{q}\right)-c_F\left(\ubar{q}\right)
=
p\cdot\left(\omega(\bar{q})-\omega(\ubar{q})\right).
\]
Finally, \(\hat{F}(\cdot)\in[0,1]\) \(\Longrightarrow\)  \(0\le \omega(\bar{q})-\omega(\ubar{q})\le 1\)  \(\Longrightarrow\) \(0\le c_F(\bar{q})-c_F(\ubar{q})\le p\).
\end{proof}
This simple insight allows us to deduce a necessary condition for competitive pricing in the ``perfect competition'' (many-firm) limit.

\begin{corollary}\label{cor:necessary_mc_limit}
For each \(n\ge 2\), let \(\left(F_n,p_n\right)\) be a symmetric equilibrium in the \(n\)-firm game, and define \(\ubar{q}_n \coloneqq \inf\supp\left(dF_n\right)\) and \(\bar{q}_n \coloneqq \sup\supp\left(dF_n\right)\). Then,
\begin{enumerate}[noitemsep]
    \item\label{lastcorr1} If \(p_n\to 0\) as \(n\to\infty\), then \(c_{F_n}\left(\bar{q}_n\right)-c_{F_n}\left(\ubar{q}_n\right)\to 0\).
    \item Moreover, if there exists \(\kappa>0\) such that for all \(F\in\mathcal F\) and all \(0\le x<y\le 1\), \(c_F(y)-c_F(x)\ge \kappa\left(y-x\right)\), then \(p_n\to 0\) implies \(\bar{q}_n-\ubar{q}_n\to 0\).
\end{enumerate}
\end{corollary}
\begin{proof}
For each \(n\), \Cref{cor:price_equals_mc_gap} implies
\[
c_{F_n}\left(\bar{q}_n\right)-c_{F_n}\left(\ubar{q}_n\right)
=
p_n\cdot\left(\omega_n(\bar{q}_n)-\omega_n(\ubar{q}_n)\right)
\le p_n,
\]
since \(\omega_n(\cdot)\in[0,1]\). Thus \(p_n\to 0\) \(\Longrightarrow\) \(c_{F_n}\left(\bar{q}_n\right)-c_{F_n}\left(\ubar{q}_n\right)\to 0\), proving the first claim.
For the second, under the stated uniform-steepness property, \(c_{F_n}\left(\bar{q}_n\right)-c_{F_n}\left(\ubar{q}_n\right)
\ge
\kappa\left(\bar{q}_n-\ubar{q}_n\right)\), so \(c_{F_n}\left(\bar{q}_n\right)-c_{F_n}\left(\ubar{q}_n\right)\to 0\) \(\Longrightarrow\) \(\bar{q}_n-\ubar{q}_n\to 0\). Combine with the first claim.
\end{proof}

\citet{HwangKimBoleslavsky2023} also endogenize firms' valuation distributions in the \citet{perloff1985equilibrium} environment. However, in contrast to my exercise, their distribution choice corresponds to \textit{advertising}; \textit{viz.}, information-provision. Their large-market message is that when there are many firms, winning a customer becomes hard unless a firm looks exceptionally good. That competitive pressure pushes firms toward revealing more: for sufficiently large \(n\), a fully revealing (full-information) outcome is an equilibrium of their advertising game, and in a number of extensions equilibrium advertising becomes arbitrarily close to full revelation as \(n \to \infty\).

My exercise has the same broad structure--firms choose price and a second margin that affects how likely they are to be chosen--but the second margin is the product's quality, generated at cost, rather than information. Consequently, large markets need not force prices down to marginal cost. Instead, markups evaporate only if the marginal cost of producing outcomes near the top of the equilibrium quality range becomes vanishingly close to the marginal cost near the bottom. If that cost gap does not disappear, positive markups persist even with many firms.

\bibliography{sample.bib}

\appendix

\section{Further Technical Details}\label[appendix]{app:technical-details}

An equivalent formulation of each player's strategy is that each is choosing a Borel probability measure \(\mu\equiv dF\) on \(X\) satisfying \(F(x)\coloneqq \mu\left(\left[0,x\right]\right)\) for all \(x \in \left[0,1\right]\), with \(F^-(1)\coloneqq \lim_{x\uparrow 1}F(x)=1-\mu(\{1\})\). Conversely, every \(F\in\mathcal F\) induces a unique probability measure \(dF\) on \(X\). We write \(\supp(dF)\subseteq X\) for its support. A sequence \(F_m\to F\) in \(\mathcal F\) means weak convergence of the induced measures \(dF_m\Rightarrow dF\) on \(X\).

We use the following notation and conventions in the appendix. For a set \(A\) in a vector space, \(\co(A)\) denotes its convex hull.\footnote{In particular,
for \(F^1,\dots,F^m\in\mathcal F\), \(\co\{F^1,\dots,F^m\}
= \left\{\sum_{k=1}^m \lambda_k F^k \colon \lambda_k\geq 0,\ \sum_{k=1}^m \lambda_k=1\right\}\).}
For a function \(f\) between topological spaces, \(\mathrm{Disc}(f)\) denotes the set of points at which \(f\) is not continuous. \(\delta_t\) denotes the Dirac measure at \(t\in X\). The symbol \(\otimes\) indicates the product measure.\footnote{If \(\mu\) is a measure on \(X\) and \(\nu\) is a measure on \(Y\), then \(\mu \otimes \nu\) is the product measure on \(X \times Y\), satisfying \(\left(\mu\otimes\nu\right)(A\times B)=\mu(A)\nu(B)\) on rectangles.} Integrals against measures are Lebesgue integrals with respect to the corresponding probability measures--in one dimension this is the usual Lebesgue-Stieltjes notation.

The expectation notation in the main text is shorthand for integration against the corresponding product distributions. For example, if \(X_i\sim dF_i\) independently, then
\[
\mathbb E\left[\pi_i\left(X_1,\ldots,X_n\right)\right]
=
\int_{X^n}\pi_i\left(x_1,\ldots,x_n\right)d\left(dF_1\otimes\cdots\otimes dF_n\right)\left(x\right).
\]

Uniform tie-breaking can be implemented explicitly as follows. Let \(R_i\sim \mathrm{Unif}\left[0,1\right]\) be i.i.d. and independent of \(\left(X_1,\ldots,X_n\right)\). If \(\widehat{\pi}_i\colon\left(X\times\left[0,1\right]\right)^n\to\R\) denotes the gross payoff before averaging over tie-breakers, then the main-text payoff \(\pi_i\colon X^n\to\R\) is
\[
\pi_i\left(x_1,\ldots,x_n\right)
\coloneqq
\int_{\left[0,1\right]^n}
\widehat{\pi}_i\left(\left(x_1,r_1\right),\ldots,\left(x_n,r_n\right)\right)d\lambda^{\otimes n}\left(r\right).
\]
Equivalently, for a profile \((F_1,\ldots,F_n)\),
\[
\mathbb E\left[\pi_i\left(X_1,\ldots,X_n\right)\right]
=
\int_{\left(X\times\left[0,1\right]\right)^n}
\widehat{\pi}_i\,
d\left(\left(dF_1\otimes\lambda\right)\otimes\cdots\otimes\left(dF_n\otimes\lambda\right)\right).
\]
In short, the expectation notation in the main text is equivalent to this explicit product-measure formulation with independent uniform tie-breakers.

\section{Proof of Lemmas \ref{lem:no-atom-1}, \ref{lem:no-interior-atoms}, and \ref{lem:aF-cont}}\label[appendix]{app:b}
In this appendix, we collect the proofs of Lemmas \ref{lem:no-atom-1}, \ref{lem:no-interior-atoms} and \ref{lem:aF-cont}.

\begin{proof}[Proof of \Cref{lem:no-atom-1}]
If \(dF(\{1\})>0\), then applying \Cref{lem:one_dominated} with \(H=F\) and opponents all equal to \(F\)
yields a profitable deviation, contradicting equilibrium.
\end{proof}

\begin{proof}[Proof of \Cref{lem:no-interior-atoms}]
Suppose for the sake of contradiction that \(dF(\{x_0\})=\alpha>0\) for some \(x_0\in\left(0,1\right)\). Let \(a\coloneqq F(x_0^-)\in[0,1)\), so \(F(x_0)=a+\alpha\).

Fix \(\delta\in\left(0,\delta(x_0)\right]\). Let \(\nu_\delta\) be the uniform probability measure on \(\left(x_0,x_0+\delta\right]\), and define a deviation \(H_{\varepsilon,\delta}\in\mathcal F\) by
\(dH_{\varepsilon,\delta}\coloneqq dF + \varepsilon\left(\nu_\delta-\delta_{x_0}\right)\), which removes an \(\varepsilon\)-slice of the atom at \(x_0\) and redistributes it continuously over \(\left(x_0,x_0+\delta\right]\).

First we observe that the expected prize gain (gross of costs) is first-order positive. Take the event \(E\coloneqq \left\{\max_{2\le j\le n} X_j = x_0\right\}\). Since \(dF(\{x_0\})=\alpha>0\), we have
\[
\mathbb P(E)=F(x_0)^{n-1}-F(x_0^-)^{n-1}=\left(a+\alpha\right)^{n-1}-a^{n-1}>0.
\]
On \(E\), if player \(1\) realizes \(x_0\), it is tied for the top value \(x_0\) with at least one opponent. By \Cref{ass:strict-tie},
moving its realized value to any \(x_0+\varepsilon'\in\left(x_0,x_0+\delta\right]\) raises its expected prize by at least \(\Delta(x_0)\). By \Cref{ass:prize}\ref{ass:prize-mono}, increasing player \(1\)'s realized value never decreases its prize on any realization.
Therefore, the expected prize gain from shifting the \(\varepsilon\)-slice of mass is at least
\(\varepsilon\cdot \Delta(x_0)\cdot \mathbb P(E) > 0\).

Second, we argue that the cost change can be made (comparatively) small. Fix \(\delta > 0\). Define \(G_\delta\in\mathcal F\) by \(dG_\delta \coloneqq dF+\alpha(\nu_\delta-\delta_{x_0})\), so that \(H_{\varepsilon,\delta}=(1-s)F+sG_\delta\), with \(s=\varepsilon/\alpha\). By the G\^{a}teaux differentiability at \(F\) in direction \(G_\delta-F\),
\[
\lim_{\varepsilon\downarrow 0}\frac{C(H_{\varepsilon,\delta})-C(F)}{\varepsilon}
=
\int_X c_F(x) d(\nu_\delta-\delta_{x_0})(x)
=
\frac{1}{\delta}\int_{x_0}^{x_0+\delta}c_F(x) dx-c_F(x_0).
\]
Hence, for any \(\kappa > 0\) there exists \(\bar\varepsilon > 0\) such that for all \(0<\varepsilon<\bar\varepsilon\),
\[
\left|
C(H_{\varepsilon,\delta})-C(F)
-\varepsilon\left(\frac{1}{\delta}\int_{x_0}^{x_0+\delta}c_F(x) dx-c_F(x_0)\right)
\right|
\le \varepsilon\kappa.
\]
Since \(c_F\) is continuous, choose \(\delta > 0\) so small that
\[
\left|\frac{1}{\delta}\int_{x_0}^{x_0+\delta}c_F(x) dx-c_F(x_0)\right|
\le \Delta(x_0)\mathbb P(E)/4.
\]
Then set \(\kappa\coloneqq \Delta(x_0)\mathbb P(E)/4\) and choose \(\varepsilon\in(0,\min\{\alpha,\bar\varepsilon\})\).
For these choices, the cost increase is at most \(\varepsilon \Delta(x_0)\mathbb P(E)/2\).

Thus, the net increase is at least \(\varepsilon \Delta(x_0)\mathbb P(E)-\varepsilon \Delta(x_0)\mathbb P(E)/2 > 0\), a strictly profitable deviation, which is a contradiction.
\end{proof} 

\begin{proof}[Proof of \Cref{lem:aF-cont}]
Let \(x_m\to x_0\). Let \(X_2,\ldots,X_n\sim dF\) be i.i.d. Then
\[
a_F(x_m)=\mathbb E\left[\pi_1\left(x_m,X_2,\ldots,X_n\right)\right]
\]
and
\[
a_F(x_0)=\mathbb E\left[\pi_1\left(x_0,X_2,\ldots,X_n\right)\right].
\]
By boundedness, \(\left|\pi_1\left(x_m,X_2,\ldots,X_n\right)\right|\le \bar\pi\) for all \(m\).

Let
\[
E\coloneqq \left\{X_j\ne x_0 \text{ for all } j\in\{2,\ldots,n\}\right\}.
\]
Since \(dF(\{x_0\})=0\), we have \(\mathbb P(E)=1\).

Fix \(\omega\in E\). Then \(\left(x_0,X_2(\omega),\ldots,X_n(\omega)\right)\) satisfies \(x_0\ne X_j(\omega)\) for all \(j\ge 2\), so \(\pi_1\) is continuous there by \Cref{ass:prize}\ref{ass:prize-disc}. Thus,
\[
\pi_1\left(x_m,X_2(\omega),\ldots,X_n(\omega)\right)\to \pi_1\left(x_0,X_2(\omega),\ldots,X_n(\omega)\right).
\]
Dominated convergence yields \(a_F(x_m)\to a_F(x_0)\).
\end{proof}

\section{First-Order Approach Proofs}\label[appendix]{app:kkt}

\subsection{Auxiliary Lemmas}

\begin{lemma}\label{lem:supp_interval}
Let \(F\in\mathcal{F}\) and \(0\le a<b \leq 1\).
If \(\supp(dF)\cap(a,b]=\varnothing\), then \(dF((a,b])=0\), i.e., \(F\) is constant on \([a,b]\).
\end{lemma}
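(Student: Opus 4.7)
The plan is to reduce the claim to a standard fact about the support of a Borel probability measure on a separable metric space: namely, that the complement of $\supp(dF)$ is the largest open set of $dF$-measure zero. Once this is in hand, the hypothesis immediately gives $dF((a,b])=0$, and monotonicity of $F$ upgrades this to constancy on $[a,b]$.

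\medskip

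\noindent\textbf{Step 1: Characterize the complement of the support.} By definition, $x\in \supp(dF)$ iff $dF(U)>0$ for every open neighborhood $U\ni x$ in $X$. Equivalently, for each $x\notin \supp(dF)$, there exists an open set $U_x\subseteq X$ with $x\in U_x$ and $dF(U_x)=0$. I would then argue that
\[
\supp(dF)^c \;=\; \bigcup_{x\notin\supp(dF)} U_x
\]
has $dF$-measure zero. Since $X=[0,1]$ is second countable, the open cover $\{U_x\}$ has a countable subcover $\{U_{x_k}\}_{k\in\N}$, and countable subadditivity yields
\[
dF\bigl(\supp(dF)^c\bigr)\;\le\;\sum_{k\in\N} dF(U_{x_k})\;=\;0.
\]

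\noindent\textbf{Step 2: Apply the hypothesis.} By assumption, $\supp(dF)\cap(a,b]=\varnothing$, so $(a,b]\subseteq \supp(dF)^c$. Therefore
\[
F(b)-F(a)\;=\;dF((a,b])\;\le\;dF\bigl(\supp(dF)^c\bigr)\;=\;0.
\]

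\noindent\textbf{Step 3: Conclude constancy.} Since $F$ is nondecreasing on $[0,1]$ and $F(a)=F(b)$, for every $y\in[a,b]$ we have $F(a)\le F(y)\le F(b)=F(a)$, so $F$ is constant on $[a,b]$, as claimed.

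\medskip

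There is no real obstacle here; the only subtle point is the measure-theoretic Step 1, which is where separability (Lindel\"of) of $X$ does the work. One could alternatively avoid invoking the ``largest open null set'' characterization by directly covering $(a,b]$ by countably many open sets of measure zero and applying subadditivity to that cover, which gives exactly the same conclusion.
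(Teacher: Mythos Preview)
Your proof is correct. The approach differs from the paper's, which exhausts $(a,b]$ by the compact sets $[a+1/k,b]$, uses compactness and a finite subcover of open null neighborhoods to conclude $dF([a+1/k,b])=0$ for each $k$, and then lets $k\to\infty$ together with right-continuity of $F$ to squeeze $dF((a,b])\le F(a+1/k)-F(a)\to 0$. Your argument instead proves the global statement $dF(\supp(dF)^c)=0$ once and for all via the Lindel\"of property and countable subadditivity, then reads off $dF((a,b])=0$ directly. Your route is slightly cleaner and yields a reusable characterization of the support; the paper's route is more elementary in that it only needs finite covers and the defining right-continuity of $F$, at the cost of the extra limiting step.
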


\begin{proof}
Fix \(k\ge 1\). The set \([a+1/k,b]\) is compact and contained in \((a,b]\), hence disjoint from \(\supp(dF)\).
Thus \(dF([a+1/k,b])=0\). Since \((a,b]\subseteq (a,a+1/k]\cup[a+1/k,b]\),
\[
0\le dF((a,b])\le dF((a,a+1/k])=F(a+1/k)-F(a).
\]
Letting \(k\to\infty\) and using right-continuity yields \(dF((a,b])=0\).
\end{proof}

Fix a bounded Borel function \(a\colon X\to\R\). Define
\[
J_a(F)\coloneqq \int_X a(t) dF(t)-C(F),\quad  \forall \ F\in\mathcal{F}.
\]

\begin{lemma}\label{lem:var_ineq}
Posit \Cref{ass:cost}\ref{ass1} and \ref{ass2}. If \(F^\star\) maximizes \(J_a\) over \(\mathcal{F}\), then for every \(G\in\mathcal{F}\),
\[
\int_X \left(a(t)-c_{F^\star}(t)\right) d\left(G-F^\star\right)(t)\le 0.
\]
\end{lemma}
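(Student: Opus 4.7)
The plan is to use the standard variational-inequality argument on a convex set. Since \(\mathcal{F}\) is convex (cdfs are closed under convex combinations), for any \(G\in\mathcal{F}\) and any \(s\in[0,1]\) the perturbation \(F_s\coloneqq (1-s)F^\star + sG\) lies in \(\mathcal{F}\). Define \(\psi(s)\coloneqq J_a(F_s)\). Optimality of \(F^\star\) gives \(\psi(s)\le \psi(0)\) for every \(s\in[0,1]\), hence the one-sided derivative (if it exists) satisfies \(\psi'(0^+)\le 0\).

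Next I would compute \(\psi'(0^+)\) by handling the two pieces of \(J_a\) separately. The linear term is genuinely affine in \(s\):
\[
\int_X a(t)\,dF_s(t) = (1-s)\int_X a\,dF^\star + s\int_X a\,dG,
\]
whose derivative at \(s=0\) is exactly \(\int_X a\,d(G-F^\star)\). For the cost term, Assumption~\ref{ass:cost}\ref{ass2} gives G\^ateaux differentiability of \(C\) at \(F^\star\) in the direction \(G-F^\star\), yielding
\[
\lim_{s\downarrow 0}\frac{C(F_s)-C(F^\star)}{s} = \delta C(F^\star;G-F^\star) = \int_X c_{F^\star}(t)\,d(G-F^\star)(t).
\]
Subtracting produces
\[
\psi'(0^+) = \int_X \left(a(t)-c_{F^\star}(t)\right)\,d(G-F^\star)(t).
\]

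Combining with \(\psi'(0^+)\le 0\) gives the claim. I do not anticipate any real obstacle: the only mildly delicate point is to check that the right-hand side of the G\^ateaux derivative formula is legitimate for the directional derivative \(s\downarrow 0\) along \(F_s\) (rather than just for directions \(G-F\) treated abstractly), but this is exactly what Assumption~\ref{ass:cost}\ref{ass2} provides since \(F_s - F^\star = s(G-F^\star)\) and the derivative is linear in the direction. No regularity on \(a\) beyond boundedness and measurability is needed because the linear functional \(F\mapsto\int a\,dF\) is genuinely affine in \(F\), so its derivative is trivial.
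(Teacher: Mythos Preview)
Your proposal is correct and follows essentially the same route as the paper's proof: both perturb along \(F_s=(1-s)F^\star+sG\), observe that the linear prize term is affine in \(s\) while Assumption~\ref{ass:cost}\ref{ass2} supplies the Stieltjes form of the cost derivative, and then use optimality at \(s=0\) to conclude the right derivative is nonpositive. The paper additionally notes that convexity of \(C\) makes \(\psi\) concave, but your direct verification that \(\psi'(0^+)\) exists is equally valid and the arguments are otherwise identical.
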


\begin{proof}
Fix \(G\in\mathcal{F}\) and set \(F_\epsilon\coloneqq (1-\epsilon)F^\star+\epsilon G\).
The map \(\epsilon\mapsto \int a dF_\epsilon\) is affine, and by convexity of \(C\), \(\epsilon\mapsto -C(F_\epsilon)\) is concave.
Thus, \(\epsilon\mapsto J_a(F_\epsilon)\) is concave and maximized at \(\epsilon=0\), so the right derivative at \(0\) is nonpositive. \Cref{ass:cost}\ref{ass2} provides the derivative in the stated integral representation, yielding the inequality.
\end{proof}

\begin{lemma}\label{lem:kkt_simplex}
Posit \Cref{ass:cost}\ref{ass1} and \ref{ass2}. If \(F^\star\) maximizes \(J_a\) over \(\mathcal{F}\), then there exists \(\lambda\in\R\) such that,
writing \(h(t)\coloneqq a(t)-c_{F^\star}(t)\),
\[
h(t)\le \lambda\quad \forall t\in X,
\quad \text{and} \quad
\int_X \left(\lambda-h(t)\right) dF^\star(t)=0.
\]
\end{lemma}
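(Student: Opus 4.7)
The plan is to derive the KKT conditions from Lemma~\ref{lem:var_ineq} by testing the variational inequality against point masses. For each \(t\in X\), the Dirac measure \(\delta_t\) induces a cdf in \(\mathcal F\) (the right-continuous nondecreasing function \(s\mapsto \mathbf{1}\{s\ge t\}\), equal to \(1\) at \(s=1\)), so substituting \(G=\delta_t\) into the variational inequality yields
\[
h(t)-\int_X h(s)\,dF^\star(s)\le 0 \qquad \forall\, t\in X.
\]
This identifies the natural multiplier as \(\lambda\coloneqq \int_X h(t)\,dF^\star(t)\), which is a well-defined real number because \(a\) is bounded Borel and \(c_{F^\star}\) is bounded continuous.

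With this choice of \(\lambda\), the pointwise inequality \(h(t)\le \lambda\) on all of \(X\) is already established by the previous display. The complementary-slackness identity is then immediate from the definition of \(\lambda\):
\[
\int_X\bigl(\lambda-h(t)\bigr)\,dF^\star(t)=\lambda-\int_X h(t)\,dF^\star(t)=0.
\]
Since \(\lambda-h\ge 0\) everywhere on \(X\), this equality additionally forces \(h=\lambda\) to hold \(dF^\star\)-almost everywhere, which is exactly the ``equality on the support'' statement invoked in Propositions~\ref{lem:game-kkt} and~\ref{lem:planner-kkt}.

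I do not anticipate a real obstacle. The lemma is essentially the observation that maximizing a linear functional plus a concave penalty over the set of probability measures on \(X\) is characterized by pointwise optimality of the reduced integrand \(h\); all of the heavy lifting (G\^ateaux differentiability, convexity of \(C\), validity of the directional derivative formula in Stieltjes form) has already been done in Lemma~\ref{lem:var_ineq}. The only technical items to check are that \(h\) is bounded Borel so that the integral defining \(\lambda\) is well-defined, and that \(\delta_t\in\mathcal F\) for every \(t\in X\) so that Lemma~\ref{lem:var_ineq} can be applied to it; both follow directly from the stated hypotheses.
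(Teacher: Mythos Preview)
Your proof is correct and follows essentially the same approach as the paper: both test Lemma~\ref{lem:var_ineq} against Dirac measures \(G=\delta_t\) to obtain \(h(t)\le \int_X h\,dF^\star\), and then define \(\lambda\coloneqq \int_X h\,dF^\star\), after which the complementary slackness identity is immediate. Your additional remarks on well-definedness of \(\lambda\) and the \(dF^\star\)-a.e.\ equality are accurate elaborations.
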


\begin{proof}[Proof of \Cref{lem:kkt_simplex}]
For any \(t\in X\), let \(G_t\in\mathcal{F}\) be degenerate at \(t\) (so \(dG_t=\delta_t\)). From \Cref{lem:var_ineq},
\[
h(t)-\int_X h(u) dF^\star(u)=\int_X h(u) d(G_t-F^\star)(u)\le 0.
\]
Let \(\lambda\coloneqq \int_X h(u) dF^\star(u)\). Then \(h\le \lambda\), and \(\int (\lambda-h) dF^\star=0\) by definition.
\end{proof}

\begin{lemma}\label{lem:compl_support}
Let \(F\in\mathcal{F}\), let \(h\colon X\to\R\) be Borel, and let \(\lambda\in\R\) satisfy \(h\le \lambda\) and \(\int (\lambda-h)\, dF=0\).
Then:
\begin{enumerate}[noitemsep,label=(\roman*),leftmargin=*]
\item\label{it:b41} If \(dF(\{t_0\})>0\), then \(h(t_0)=\lambda\); and
\item\label{it:b42} If \(t_0\in\supp(dF)\) and \(h\) is continuous at \(t_0\), then \(h(t_0)=\lambda\).
\end{enumerate}
\end{lemma}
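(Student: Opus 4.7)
The plan is to exploit a single structural fact, namely that the function $\psi(t)\coloneqq \lambda-h(t)$ is nonnegative and Borel with $\int_X \psi\,dF=0$. By the standard fact that a nonnegative integrable function whose integral vanishes must vanish almost everywhere, the set $N\coloneqq \{t\in X\colon h(t)<\lambda\}=\{\psi>0\}$ satisfies $dF(N)=0$. Both parts of the lemma are then immediate consequences of this observation combined with elementary properties of atoms and of the support of a Borel measure.

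For part \ref{it:b41}, the argument runs by contradiction (or, equivalently, by a direct inclusion). If $h(t_0)<\lambda$, then $\{t_0\}\subseteq N$, so $dF(\{t_0\})\le dF(N)=0$, contradicting $dF(\{t_0\})>0$. Hence $h(t_0)=\lambda$.

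For part \ref{it:b42}, I would argue by contradiction using continuity to produce an open neighborhood on which $\psi$ is bounded below by a strictly positive constant. Specifically, if $h(t_0)<\lambda$, set $\epsilon\coloneqq (\lambda-h(t_0))/2>0$. By continuity of $h$ at $t_0$, there is an open neighborhood $U\ni t_0$ such that $h(t)<\lambda-\epsilon$ for every $t\in U$, i.e., $\psi\ge \epsilon$ on $U$. Because $t_0\in\supp(dF)$, every open neighborhood of $t_0$ has strictly positive $dF$-measure, so $dF(U)>0$. Then
\[
0=\int_X \psi\,dF\ge \int_U \psi\,dF\ge \epsilon\cdot dF(U)>0,
\]
a contradiction. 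Therefore $h(t_0)=\lambda$.

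There is no real obstacle here; the lemma is a routine complementary-slackness statement. The only subtlety worth flagging is that part \ref{it:b42} genuinely needs the continuity hypothesis, as a merely Borel $h$ could be modified on a $dF$-null subset of $\supp(dF)$ without affecting the integral condition. Continuity plus the defining property of $\supp(dF)$ (every open neighborhood carries positive mass) is exactly what promotes the almost-everywhere conclusion to a pointwise one on the support.
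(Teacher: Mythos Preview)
Your proof is correct and is essentially identical to the paper's own argument: both parts proceed by contradiction, using that \(\lambda-h\ge 0\) and \(\int(\lambda-h)\,dF=0\), with part \ref{it:b42} invoking continuity to find a neighborhood on which \(\lambda-h\) is bounded below by some \(\epsilon>0\) and then the support property to get positive \(dF\)-measure. The only cosmetic difference is that you preface part \ref{it:b41} with the ``\(\psi=0\) \(dF\)-a.e.'' observation, whereas the paper bounds the integral directly by \((\lambda-h(t_0))\,dF(\{t_0\})\); these are the same idea.
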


\begin{proof}\ref{it:b41} If \(h(t_0)<\lambda\) and \(dF(\{t_0\})>0\), then \(\int (\lambda-h)\, dF\ge (\lambda-h(t_0))dF(\{t_0\})>0\), a contradiction.

\ref{it:b42} If \(h(t_0)<\lambda\) and \(h\) is continuous at \(t_0\), then there exist \(\epsilon,\delta>0\) such that \(h\le \lambda-\epsilon\) on \(U\coloneqq (t_0-\delta,t_0+\delta)\cap X\).
As \(t_0\in\supp(dF)\), \(dF(U)>0\), hence, \(\int (\lambda-h) dF\ge \epsilon dF(U)>0\), a contradiction.
\end{proof}

Define the planner prize functional
\[\tag{\(C1\)}\label{eq:a1}\widetilde B(F)\coloneqq \frac{1}{n}\int_{X^n}\Pi(x)\,d(dF)^{\otimes n}(x),
\qquad \forall \ F\in\mathcal F.\]
Recall that for \(F\in\mathcal F\) and \(x\in X\),
\[\tag{\(C2\)}\label{eq:a2}
A_F(x)\coloneqq \int_{X^{n-1}} \Pi\left(x,x_2,\ldots,x_n\right) d(dF)^{\otimes(n-1)}\left(x_2,\ldots,x_n\right).
\]
We now compute the directional derivative of \(\widetilde B\):
\begin{lemma}\label{lem:dir-deriv-Btilde}
For any \(F,G\in\mathcal F\),
\[
\delta \widetilde B(F;G-F)=\int_X A_F(x)\,d(G-F)(x).
\]
\end{lemma}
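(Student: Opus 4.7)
The plan is a direct differentiation of the polynomial \(\epsilon\mapsto \widetilde B(F_\epsilon)\), where \(F_\epsilon\coloneqq (1-\epsilon)F+\epsilon G\in\mathcal F\) for \(\epsilon\in[0,1]\). By linearity of the product measure in each marginal, the \(n\)-fold product \((dF_\epsilon)^{\otimes n}\) expands, via a binomial/multinomial expansion on each coordinate, into a finite sum of signed product measures of the form \(d\mu_1\otimes\cdots\otimes d\mu_n\) with \(d\mu_i\in\{dF,\ d(G-F)\}\), each multiplied by a monomial in \(\epsilon\) of the form \(\epsilon^{k}(1-\epsilon)^{n-k}\) (or, after rewriting, by a pure monomial \(\epsilon^k\)). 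Since \(\Pi\) is bounded and continuous on the compact \(X^n\) by Assumption~\ref{ass:prize}\ref{ass:prize-agg}, each integral in this finite sum is finite, and therefore \(\epsilon\mapsto \widetilde B(F_\epsilon)\) is a polynomial of degree at most \(n\) in \(\epsilon\), trivially differentiable on \([0,1]\).

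Reading off the coefficient of \(\epsilon^1\) at \(\epsilon=0\) (equivalently, applying the product rule to \(\prod_{i=1}^n dF_\epsilon(x_i)\) and evaluating at \(\epsilon=0\)), I obtain
\[
\delta \widetilde B(F;G-F)
=
\frac{1}{n}\sum_{i=1}^n \int_{X^n} \Pi(x_1,\ldots,x_n)\, dF(x_1)\cdots d(G-F)(x_i)\cdots dF(x_n).
\]
By Fubini (permissible since \(\Pi\) is bounded and all measures are finite), each summand integrates out the coordinates \(j\ne i\) against \(dF\) first, yielding
\[
\int_{X^n}\Pi\, dF(x_1)\cdots d(G-F)(x_i)\cdots dF(x_n)
=
\int_X A_F^{(i)}(x_i)\, d(G-F)(x_i),
\]
where \(A_F^{(i)}(x)\coloneqq \int_{X^{n-1}}\Pi(\ldots,x_{i-1},x,x_{i+1},\ldots)\,d(dF)^{\otimes(n-1)}\) is the interim aggregate prize with the \(i\)-th argument fixed at \(x\).

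The final step invokes Assumption~\ref{ass:prize}\ref{ass:prize-agg}: since \(\Pi\) is symmetric in its arguments, \(A_F^{(i)}(x)=A_F(x)\) for every \(i\). All \(n\) summands therefore coincide, the factor \(1/n\) cancels, and the expression collapses to \(\int_X A_F(x)\,d(G-F)(x)\), as claimed. There is essentially no obstacle here beyond bookkeeping: the only thing to watch is that the finite-polynomial structure in \(\epsilon\) provides automatic justification for differentiation under the integral, so no separate dominated-convergence argument is needed.
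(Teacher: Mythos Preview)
Your proof is correct and follows essentially the same approach as the paper: both expand \((dF_\epsilon)^{\otimes n}\) by multilinearity, isolate the first-order term in \(\epsilon\), and then invoke the symmetry of \(\Pi\) to collapse the sum of \(n\) identical terms. Your explicit note that \(\epsilon\mapsto \widetilde B(F_\epsilon)\) is a polynomial (so differentiation is automatic) and your use of Fubini are slightly more detailed than the paper's \(O(\epsilon^2)\) shorthand, but the argument is the same.
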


\begin{proof}
Fix \(F,G\in\mathcal F\) and let \(\nu\coloneqq d(G-F)\), a finite signed measure on \(X\).
For \(\varepsilon\in\left[0,1\right]\), write \(F_\varepsilon\coloneqq (1-\varepsilon)F+\varepsilon G\), so \(dF_\varepsilon=dF+\varepsilon \nu\).

By multilinearity of the product measure,
\[
(dF_\varepsilon)^{\otimes n}=(dF+\varepsilon \nu)^{\otimes n}
=
(dF)^{\otimes n} + \varepsilon\sum_{i=1}^n (dF)^{\otimes(i-1)}\otimes \nu \otimes (dF)^{\otimes(n-i)} + O(\varepsilon^2),
\]
where \(O(\varepsilon^2)\) denotes a signed measure whose total variation is \(O(\varepsilon^2)\) as \(\varepsilon\downarrow 0\).
As \(\Pi\) is bounded, dividing by \(\varepsilon\) and letting \(\varepsilon\downarrow 0\) yields
\[
\delta \widetilde B(F;G-F)
=
\frac{1}{n}\sum_{i=1}^n
\int_{X^n}\Pi(x)\,d\left((dF)^{\otimes(i-1)}\otimes \nu \otimes (dF)^{\otimes(n-i)}\right)(x).
\]
By symmetry of \(\Pi\), each term in the sum is equal, so the \(1/n\) cancels the sum, so
\[
\delta \widetilde B(F;G-F)
=
\int_X
\left(
\int_{X^{n-1}} \Pi\left(x,x_2,\ldots,x_n\right) d(dF)^{\otimes(n-1)}\left(x_2,\ldots,x_n\right)
\right)
\nu(dx)
=
\int_X A_F(x)\,d(G-F)(x),
\]
as claimed.
\end{proof}

\subsection{Proof of \texorpdfstring{\Cref{lem:game-kkt}}{Lemma~\ref{lem:game-kkt}}}
\begin{proof}[Proof of \Cref{lem:game-kkt}]
Fix opponents at \(F\). By the definition of \(a_F\), for any deviation \(H\in\mathcal F\),
\[
u_1\left(H,F,\ldots,F\right)=\int_X a_F(x)\,dH(x)-C(H).
\]
As \(F\) is a best response to \((F,\ldots,F)\), it maximizes \(J_{a_F}(H)\coloneqq \int_X a_F(x)\,dH(x)-C(H)\) over \(H\in\mathcal F\).
\Cref{lem:kkt_simplex}, therefore, yields \(\lambda_g\in\R\) such that, writing \(h(x)\coloneqq a_F(x)-c_F(x)\),
\[
h(x)\le \lambda_g\ \ \forall x\in X,
\qquad
\int_X \left(\lambda_g-h(x)\right) dF(x)=0.
\]
It remains to show equality on \(\supp(dF)\). Fix \(x_0\in\supp(dF)\).
If \(dF(\{x_0\})>0\), equality holds by \Cref{lem:compl_support}\ref{it:b41}.
If \(dF(\{x_0\})=0\), then \Cref{lem:aF-cont} implies \(a_F(\cdot)\) is continuous at \(x_0\), and \(c_F(\cdot)\) is continuous by \Cref{ass:cost}\ref{ass2},
so \(h(\cdot)\) is continuous at \(x_0\). Since \(x_0\in\supp(dF)\), \Cref{lem:compl_support}\ref{it:b42} yields \(h(x_0)=\lambda_g\).
\end{proof}

\subsection{Proof of \texorpdfstring{\Cref{lem:planner-kkt}}{Lemma~\ref{lem:planner-kkt}}}
\begin{proof}[Proof of \Cref{lem:planner-kkt}]
By \Cref{lem:dir-deriv-Btilde}, the directional derivative of the planner prize functional
\[
\widetilde B(F)\coloneqq \frac{1}{n}\int_{X^n}\Pi(x)\,d(dF)^{\otimes n}(x)
\]
satisfies
\[
\delta \widetilde B(G;F-G)=\int_X A_G(x)\,d(F-G)(x), \qquad \forall \ F\in\mathcal F.
\]
Since \(G\) maximizes \(v(F)=\widetilde B(F)-C(F)\), our variational inequality is
\[
\int_X \left(A_G(x)-c_G(x)\right) d(F-G)(x)\le 0, \qquad \forall \ F\in\mathcal F.
\]

Moreover, this variational inequality implies that \(G\) maximizes the ``frozen objective'' \(\int_X A_G(t) dF(t)-C(F)\) over \(\mathcal F\).\footnote{Indeed, rewriting this as \(J_{A_G}\), for any \(F\in\mathcal F\), simply use the convexity of \(C\) to get
\(J_{A_G}(F)-J_{A_G}(G)
\le
\int_X \left(A_G(x)-c_G(x)\right)d(F-G)(x)
\le 0\), i.e., \(G\) maximizes \(J_{A_G}\).}

Applying \Cref{lem:kkt_simplex} to the bounded Borel function \(A_G(x)\) yields \(\lambda_p\) with
\(A_G-c_G\le \lambda_p\) and complementarity \(\int_X \left(\lambda_p-(A_G-c_G)\right)dG=0\).
Since \(\Pi\) is continuous on the compact \(X^n\), \(A_G(x)\) is continuous on \(X\) by dominated convergence.
Thus, \(A_G(\cdot)-c_G(\cdot)\) is continuous, and \Cref{lem:compl_support}\ref{it:b42} implies equality on \(\supp(dG)\).\end{proof}

\section{Rank-Order Contests Proofs}\label[appendix]{ap:roproofs}
Here we prove \Cref{lem:roc} and \Cref{thm:icx_rank_order}.
\begin{proof}[Proof of \Cref{lem:roc}]
We prove the result for \(\mathbf{v}\), as \(\mathbf{w}\)'s result follows analogously.

As \(v_1>v_2\), escaping a tie for the top rank yields a discrete gain on any tie-for-top event. Thus, \Cref{ass:strict-tie} holds, so any symmetric equilibrium is atomless on \(\left(0,1\right)\). Also, \Cref{lem:no-atom-1} implies \(dF\left( \left\{1\right\} \right)=0\).

We now show \(dF\left( \left\{0\right\} \right)=0\).
Suppose, toward a contradiction, that \(dF\left( \left\{0\right\} \right)=\alpha>0\). Consider a deviation that removes an \(\varepsilon\)-slice of this atom and redistributes it uniformly on \(\left(0,\delta\right]\), with \(\varepsilon \in \left(0,\alpha\right)\) and \(\delta>0\) small.
On the event that every opponent draws \(0\), which has probability \(\alpha^{n-1}>0\), the deviator moves from the tie outcome at \(0\) to being the unique winner. Under uniform tie-breaking, the deviator's expected prize at a full tie is \(\frac{1}{n}\sum_{k=1}^n v_k=\frac{1}{n}\),
whereas as unique winner it is \(v_1\). Since \(v_n=0\) and \(\sum_k v_k=1\), we have \(v_1>\frac{1}{n}\), so the expected prize gain is at least \(\varepsilon \alpha^{n-1}\left( v_1-\frac{1}{n} \right)>0\).

Of course, the cost changes as well, so we appeal to the G\^ateaux differentiability and continuity of \(c_F\left( x \right)\) in \(x\): the first-order cost increase per unit shifted mass equals the difference between the average of
\(c_{F}\left( x \right)\) on \(\left(0,\delta\right]\) and \(c_{F}\left( 0 \right)\), which can be made
arbitrarily small by choosing \(\delta\) small. Hence, for \(\delta\) sufficiently small and then \(\varepsilon\) small,
the deviation is profitable, contradicting equilibrium. We conclude that \(dF\left( \left\{0\right\} \right)=0\).

Furthermore, we know from \Cref{lem:game-kkt} that there exists \(\lambda_F\) such that \(a_F\left(x\right)-c_F\left(x\right)\le\lambda_F\) for all \(x\in\left[0,1\right]\), with equality for all \(x\in\operatorname{supp}\left(dF\right)\). Our next step is to argue \(0 \in \operatorname{supp}\left( dF \right)\).
Let \(\ubar{x} \coloneqq \inf \operatorname{supp}\left( dF \right)\).
Suppose for the sake of contradiction that \(\ubar{x}>0\). Since \(\operatorname{supp}\left(dF\right)\) is closed, \(\ubar{x}\in\operatorname{supp}\left(dF\right)\). No opponent draws below \(\ubar{x}\), and by \Cref{lem:no-interior-atoms,lem:no-atom-1}, \(dF\left(\left\{\ubar{x}\right\}\right)=0\). Consequently, a deviator choosing \(x=\ubar{x}\) is last almost surely and receives \(v_n=0\). A deviator choosing \(x=0\) is also last almost surely and receives \(v_n=0\). Thus,
\(a_F\left(\ubar{x}\right)=a_F\left(0\right)=0\).

Since \(\ubar{x}\in\operatorname{supp}\left(dF\right)\), equality in \Cref{lem:game-kkt} produces
\[
\lambda_F
=
a_F\left(\ubar{x}\right)-c_F\left(\ubar{x}\right)
=
-\Upsilon'\left(z_F\right)\gamma\left(\ubar{x}\right)
<
0,
\]
where the strict inequality uses \Cref{ass:contest}: \(\gamma\left(\ubar{x}\right)>0\) and \(\Upsilon'\left(z_F\right)\ge \Upsilon'\left(0\right)>0\). But the global inequality in \Cref{lem:game-kkt}, evaluated at \(0\), yields
\(0
=
a_F\left(0\right)-c_F\left(0\right)
\le
\lambda_F\), a contradiction. Therefore, \(\ubar{x}=0\), and \(0 \in \operatorname{supp}\left( dF \right)\).

Since \(dF\left( \left\{0\right\} \right)=0\), there is no tie at \(0\) with positive probability, so a deviator choosing \(x=0\) is last almost surely and receives \(v_n=0\), whence \(a_{F}\left( 0 \right)=0\). As \(0 \in \operatorname{supp}\left( dF \right)\), equality in \Cref{lem:game-kkt} gives us
\(\lambda_F
=
a_{F}\left( 0 \right) - c_{F}\left( 0 \right)
=
0\).

Now take \(x \in \operatorname{supp}\left( dF \right)\).
Since \(dF\left( \left\{x\right\} \right)=0\), ties at \(x\) occur with probability \(0\).
Consequently, the probability that a player with realized performance \(x\) is ranked \(k\)-th equals
\(\binom{n-1}{k-1} F\left( x \right)^{n-k}\left( 1-F\left( x \right) \right)^{k-1}\),
and so \(a_{F}\left( x \right) = \Psi\left( F\left( x \right); \mathbf{v}\right)\). Plugging into the first-order conditions and using \(\lambda_F=0\) produces, for
\(x \in \operatorname{supp}\left( dF \right)\),
\[
\Psi\left( F\left( x \right); \mathbf{v}\right)
=
\Upsilon'\left(z_F\right)\gamma\left( x \right).
\]

Let \(X\sim F\) and define \(U\coloneqq F\left(X\right)\).
Because \(F\) is continuous, \(U\sim \mathrm{Unif}\left(\left[0,1\right]\right)\). \(X\in\operatorname{supp}\left(dF\right)\) almost surely implies \(\Upsilon'\left(z_F\right)\gamma\left(X\right)=\Psi\left(U;\mathbf{v}\right)\)
almost surely. Taking expectations, we have
\[
\Upsilon'\left(z_F\right)z_F
=
\int_0^1\Psi\left(q;\mathbf{v}\right)dq.
\]
Moreover,
\[
\int_0^1\binom{n-1}{k-1}q^{n-k}\left(1-q\right)^{k-1}dq
=
\frac{1}{n}
\]
for every \(k\), so
\[
\int_0^1\Psi\left(q;\mathbf{v}\right)dq
=
\frac{1}{n}.
\]
Therefore,
\[
z_F\Upsilon'\left(z_F\right)=\frac{1}{n}.
\]
The solution to this equation, \(z_n^\ast\), is well-defined; and we define \(\Upsilon'\left(z_n^\ast\right)\eqqcolon \alpha_n^\ast\). Consequently,
\(\gamma\left(X\right)
=
\frac{\Psi\left(U;\mathbf{v}\right)}{\alpha_n^\ast}\)
almost surely.

Finally, we prove uniqueness. Let \(\tilde{F}\) be any symmetric equilibrium, and let \(\tilde{X}\sim \tilde{F}\). Applying the preceding argument to \(\tilde{F}\), with \(\tilde{U}\coloneqq\tilde{F}\left(\tilde{X}\right)\sim\mathrm{Unif}\left(\left[0,1\right]\right)\), gives
\[
\gamma\left(\tilde{X}\right)
=
\frac{\Psi\left(\tilde{U};\mathbf{v}\right)}{\alpha_n^\ast}
\]
almost surely. Hence, \(\gamma\left(X\right)\) and \(\gamma\left(\tilde{X}\right)\) have the same distribution.
Since \(\gamma\) is strictly increasing, this implies \(X\) and \(\tilde{X}\) have the same distribution, so \(\tilde{F}=F\).
\end{proof}

Now we prove the theorem.
\begin{proof}[Proof of \Cref{thm:icx_rank_order}]
Let \(X \sim F\) and define \(U \coloneqq F\left( X \right)\).
Since \(F\) is continuous, \(U \sim \mathrm{Unif}\left( \left[0,1\right] \right)\).
By the proof of \Cref{lem:roc},
\[
\gamma\left( X \right)
=
g_v\left(U\right)
\quad\text{a.s., where}\quad
g_v\left(q\right)\coloneqq\frac{\Psi\left(q;\mathbf{v}\right)}{\alpha_n^\ast}.
\]
Do analogously for \(\mathbf{w}\) so that
\[
\gamma\left(Y\right)=g_w\left(U\right)
\quad\text{a.s., where}\quad
g_w\left(q\right)\coloneqq\frac{\Psi\left(q;\mathbf{w}\right)}{\alpha_n^\ast}.
\]
Since \(v_1>v_2\) and \(w_1>w_2\), both \(g_v\) and \(g_w\) are strictly increasing, and so they are the quantile functions of \(Z\coloneqq\gamma\left(X\right)\) and \(T\coloneqq\gamma\left(Y\right)\), respectively.

Define
\[
p_k\left( q \right)
 \coloneqq 
\binom{n-1}{k-1} q^{\,n-k}\left( 1-q \right)^{k-1}.
\]
Note that \(\Psi\left( q;\mathbf{v} \right) = \sum_{k=1}^n v_k p_k\left( q \right)\) for all \(q \in \left[0,1\right]\).
Hence,
\[
g_w\left( q \right) - g_v\left( q \right)
=
\frac{\Psi\left( q;\mathbf{w} \right) - \Psi\left( q;\mathbf{v} \right)}{\alpha_n^\ast}
=
\frac{1}{\alpha_n^\ast}\sum_{k=1}^n \left( w_k - v_k \right)p_k\left( q \right).
\]
Because \(\mathbf{w}\) majorizes \(\mathbf{v}\), \(\mathbf{w}\) can be obtained from \(\mathbf{v}\) via a finite sequence of reverse Pigou-Dalton transfers.
Since the map \(\mathbf{v} \mapsto \Psi\left( q;\mathbf{v} \right)\) is linear for each \(q\), and the map
\(h \mapsto \int_0^p h\left( q \right) dq\) is linear for each \(p\), it suffices to prove the claim for a single
reverse Pigou-Dalton transfer and then sum the resulting integrated inequalities across the finite sequence.
So fix \(i<j\) and \(\delta>0\) and suppose \(\mathbf{w}\) is obtained from \(\mathbf{v}\) by moving \(\delta\) from rank \(j\) to rank \(i\):
\[
w_i = v_i + \delta,\quad w_j = v_j - \delta,\quad\text{and}\quad w_k=v_k\ \text{for all \(k\neq i,j\).}
\]
Then,
\[
\Psi\left( q;\mathbf{w} \right) - \Psi\left( q;\mathbf{v} \right)
=
\delta\left( p_i\left( q \right) - p_j\left( q \right) \right),
\]
so
\[
g_w\left( q \right) - g_v\left( q \right)
=
\frac{\delta}{\alpha_n^\ast}\left( p_i\left( q \right) - p_j\left( q \right) \right).
\]

For \(q \in \left(0,1\right)\),
\[
\frac{p_i\left( q \right)}{p_j\left( q \right)}
=
\frac{\binom{n-1}{i-1}}{\binom{n-1}{j-1}}
\left( \frac{q}{1-q} \right)^{j-i},
\]
which is strictly increasing in \(q\). Moreover, this ratio converges to \(0\) as \(q \downarrow 0\) and to \(+\infty\) as \(q \uparrow 1\).
Therefore, \(p_i\left( q \right)-p_j\left( q \right)\) crosses \(0\) exactly once,
from negative to positive, and so does \(g_w\left( q \right)-g_v\left( q \right)\).

Next, compute the mean:
\[
\int_0^1 p_k\left( q \right) dq
=
\binom{n-1}{k-1}\int_0^1 q^{n-k}\left( 1-q \right)^{k-1}dq
=
\binom{n-1}{k-1}\mathrm{B}\left( n-k+1,k \right)
=
\frac{1}{n}.
\]
Consequently,
\[
\int_0^1 \Psi\left( q;\mathbf{v} \right) dq
=
\sum_{k=1}^n v_k \int_0^1 p_k\left( q \right) dq
=
\frac{1}{n}\sum_{k=1}^n v_k
=
\frac{1}{n}.
\]
Therefore,
\[
\mathbb{E}\left[ Z \right]
=
\int_0^1 g_v\left( q \right) dq
=
\frac{1}{\alpha_n^\ast n}
=
\int_0^1 g_w\left( q \right) dq
=
\mathbb{E}\left[ T \right].
\]
So \(Z\) and \(T\) have equal means.

Let \(H\left( p \right) \coloneqq \int_0^p \left( g_w\left( q \right)-g_v\left( q \right) \right) dq\).
Since \(g_w-g_v\) crosses \(0\) exactly once from negative to positive and \(\int_0^1 \left( g_w-g_v \right) dq=0\),
it follows that \(H\left( p \right) \le 0\) for all \(p \in \left[0,1\right]\), with equality at \(p=1\).
By the integrated-quantile characterization of the convex order, this implies \(T \ge_{\mathrm{cx}} Z\).

Finally, because \(\gamma\) is strictly increasing and concave, \(\gamma^{-1}\) exists and is strictly increasing and convex.
Let \(u\) be any increasing convex function. Then \(u \circ \gamma^{-1}\) is convex, so \(T \ge_{\mathrm{cx}} Z\) implies
\[
\mathbb{E}\left[ u\left( Y \right) \right]
=
\mathbb{E}\left[ u\left( \gamma^{-1}\left( T \right) \right) \right]
\geq
\mathbb{E}\left[ u\left( \gamma^{-1}\left( Z \right) \right) \right]
=
\mathbb{E}\left[ u\left( X \right) \right].
\]
Therefore, \(Y \ge_{\mathrm{icx}} X\).
\end{proof}

\section{R\&D Proofs}\label[appendix]{rdproofs}

\subsection{Auxiliary Lemmas}

In the terminology of the R\&D section, \eqref{eq:a1} and \eqref{eq:a2} are
\[
\widetilde{B}\left(F\right)\coloneqq \frac{1}{n}\int_{\left[0,\infty\right]^n} V\left(\min_{1\le j\le n} t_j\right)  d\left(dF\right)^{\otimes n}\left(t\right),
\qquad \forall \ F\in\mathcal F,
\]
and
\[
A_F\left(t\right)\coloneqq \int_{\left[0,\infty\right]^{n-1}} V\left(\min\left\{t,t_2,\dots,t_n\right\}\right)  d\left(dF\right)^{\otimes\left(n-1\right)}\left(t_2,\dots,t_n\right).
\]
Equivalently, if \(T_2,\dots,T_n\sim dF\) are i.i.d. and \(M_F\coloneqq \min\left\{T_2,\dots,T_n\right\}\), then
\(A_F\left(t\right)=\mathbb{E}\left[V\left(\min\left\{t,M_F\right\}\right)\right]\).

Likewise from \Cref{lem:dir-deriv-Btilde}, we record the directional derivative of \(\widetilde B\): for any \(F,G\in\mathcal F\),
\[
\delta \widetilde{B}\left(F;G-F\right)=\int_{\left[0,\infty\right]} A_F\left(t\right)  d\left(G-F\right)\left(t\right).
\]
For \(t\in\left[0,\infty\right)\) define
\[
\widetilde{V}_F\left(t\right)\coloneqq \mathbb{E}\left[V\left(M_F\right)\mathbf{1}\left\{M_F>t\right\}\right],
\]
with the convention \(\widetilde{V}_F\left(\infty\right)\coloneqq 0\).

This allows us to rewrite \(A_F\):
\begin{lemma}\label{lem:B7}
For all \(t\in\left[0,\infty\right]\), \(A_F\left(t\right)=A_F\left(\infty\right)+V\left(t\right)\left(1-F\left(t\right)\right)^{n-1}-\widetilde{V}_F\left(t\right)\).
\end{lemma}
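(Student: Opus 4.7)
The plan is to evaluate \(A_F(t)=\mathbb{E}\left[V\left(\min\left\{t,M_F\right\}\right)\right]\) by decomposing the expectation according to whether \(M_F>t\) or \(M_F\le t\). On the event \(\{M_F>t\}\), the minimum is exactly \(t\), so \(V\left(\min\{t,M_F\}\right)=V(t)\); on the complementary event \(\{M_F\le t\}\), the minimum is \(M_F\), so \(V\left(\min\{t,M_F\}\right)=V(M_F)\). Writing
\[
A_F(t)=\mathbb{E}\left[V(t)\mathbf{1}\{M_F>t\}\right]+\mathbb{E}\left[V(M_F)\mathbf{1}\{M_F\le t\}\right]
\]
reduces the proof to evaluating these two summands.

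For the first summand, since \(T_2,\ldots,T_n\) are i.i.d.\ with cdf \(F\), independence gives \(\mathbb{P}(M_F>t)=\prod_{j=2}^n\mathbb{P}(T_j>t)=(1-F(t))^{n-1}\), so this term equals \(V(t)(1-F(t))^{n-1}\). For the second summand, I write \(\mathbf{1}\{M_F\le t\}=1-\mathbf{1}\{M_F>t\}\) to obtain
\[
\mathbb{E}\left[V(M_F)\mathbf{1}\{M_F\le t\}\right]
=\mathbb{E}\left[V(M_F)\right]-\mathbb{E}\left[V(M_F)\mathbf{1}\{M_F>t\}\right]
=A_F(\infty)-\widetilde{V}_F(t),
\]
using the identities \(A_F(\infty)=\mathbb{E}[V(M_F)]\) (since \(V(\min\{\infty,M_F\})=V(M_F)\)) and the definition of \(\widetilde{V}_F(t)\). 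Adding the two summands yields the claimed formula.

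There is no serious obstacle here; this is essentially a one-line decomposition. The only minor care required is with the atoms of \(F\) and the boundary point \(t=\infty\): the formula \(\mathbb{P}(M_F>t)=(1-F(t))^{n-1}\) is valid even if \(F\) has an atom at \(t\), since \(\{M_F>t\}=\bigcap_j\{T_j>t\}\) and \(\mathbb{P}(T_j>t)=1-F(t)\) by definition of the cdf; and at \(t=\infty\) both \((1-F(\infty))^{n-1}=0\) and \(\widetilde{V}_F(\infty)=0\), so the formula collapses consistently to \(A_F(\infty)=A_F(\infty)\).
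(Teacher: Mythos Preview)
Your proof is correct and follows essentially the same approach as the paper: both decompose \(\mathbb{E}\left[V\left(\min\{t,M_F\}\right)\right]\) according to the event \(\{M_F>t\}\), identify \(A_F(\infty)=\mathbb{E}\left[V(M_F)\right]\), and invoke the definition of \(\widetilde{V}_F\). Your explicit treatment of atoms and the boundary \(t=\infty\) mirrors the paper's final sentence.
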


\begin{proof}[Proof of \Cref{lem:B7}]
Directly, for all \(t < \infty\),
\[
A_F\left(t\right)=\mathbb{E}\left[V\left(\min\left\{t,M_F\right\}\right)\right]
=V\left(t\right)\mathbb{P}\left(M_F>t\right)+\mathbb{E}\left[V\left(M_F\right)\mathbf{1}\left\{M_F\le t\right\}\right].
\]
Moreover,
\[
A_F\left(\infty\right)=\mathbb{E}\left[V\left(\min\left\{\infty,M_F\right\}\right)\right]=\mathbb{E}\left[V\left(M_F\right)\right].
\]
Therefore,
\[
A_F\left(t\right)=A_F\left(\infty\right)+V\left(t\right)\mathbb{P}\left(M_F > t\right)-\mathbb{E}\left[V\left(M_F\right)\mathbf{1}\left\{M_F > t\right\}\right]
=A_F\left(\infty\right)+V\left(t\right)\mathbb{P}\left(M_F > t\right)-\widetilde{V}_F \left(t\right).
\]
Since \(T_2,\dots,T_n\) are i.i.d. with cdf \(F\), \(\mathbb{P}\left(M_F > t\right)=\left(1-F\left(t\right)\right)^{n-1}\).
The case \(t=\infty\) holds by the definition \(\widetilde{V}_F\left(\infty\right)=0\).
\end{proof}

The next identity rewrites the planner's reduced-form marginal benefit as a tail integral:
\begin{lemma}\label{lem:rd_planner_tail}
For every \(F\in\mathcal F\) and every \(t<\infty\),
\[
V\left(t\right)\left(1-F\left(t\right)\right)^{n-1}-\widetilde{V}_F\left(t\right)
=
\int_t^\infty rV\left(s\right)\left(1-F\left(s\right)\right)^{n-1}ds.
\]
\end{lemma}

\begin{proof}[Proof of \Cref{lem:rd_planner_tail}]
By definition,
\[
V\left(t\right)\left(1-F\left(t\right)\right)^{n-1}-\widetilde{V}_F\left(t\right)
=
\mathbb{E}\left[\left(V\left(t\right)-V\left(M_F\right)\right)\mathbf{1}\left\{M_F>t\right\}\right].
\]
On \(\left\{M_F>t\right\}\),
\(V\left(t\right)-V\left(M_F\right)=\int_t^{M_F}rV\left(s\right)ds\). Tonelli's theorem allows us to exchange expectation and integration, so that
\[
\mathbb{E}\left[\mathbf{1}\left\{M_F>t\right\}\int_t^{M_F}rV\left(s\right)ds\right]
=
\int_t^\infty rV\left(s\right)\mathbb{P}\left(M_F>s\right)ds
=
\int_t^\infty rV\left(s\right)\left(1-F\left(s\right)\right)^{n-1}ds.\qedhere
\]
\end{proof}

\subsection{Proof of \texorpdfstring{\Cref{cor:rd_overinvestment}}{Theorem~\ref{cor:rd_overinvestment}}}

\begin{proof}[Proof of \Cref{cor:rd_overinvestment}]
Define (understood pointwise)
\[D \coloneqq \max\left\{G-F,0\right\}, \qquad
F^+\coloneqq \max\left\{F, G\right\},
\qquad\text{and}\qquad
G^-\coloneqq \min\left\{F,G\right\};
\]
so that \(F^+ - F = D\) and \(G^- - G = - D\).

Since \(F\) is a symmetric equilibrium, it maximizes \(\int_{\left[0,\infty\right]}B_F\left(t\right)dH\left(t\right)-C\left(H\right)\) over \(H\in\mathcal F\). Accordingly, a deviation to \(F^+\) yields
\[
\int_{\left[0,\infty\right]}\left(B_F\left(t\right)-c_F\left(t\right)\right)d\left(F^+-F\right)\left(t\right)\le0.
\]
Because \(B_F\left(t\right)=V\left(t\right)\left(1-F\left(t\right)\right)^{n-1}\) is decreasing and \(B_F\left(\infty\right)=0\), integration by parts yields
\[
\int_{\left[0,\infty\right]}B_F\left(t\right)d\left(F^+-F\right)\left(t\right)
=
\int_0^\infty D\left(t\right)\left[-dB_F\left(t\right)\right].
\]
Moreover,
\[
-dB_F\left(t\right)\ge rV\left(t\right)\left(1-F\left(t\right)\right)^{n-1}dt.
\]
Recalling the normalized marginal cost delivered by \Cref{ass:rd_monotone_marginal_cost} (\(c_H\left(\tau\right)\coloneqq \int_{\left[\tau,\infty\right]}m_H\left(t\right)d\eta\left(t\right)\) for all \(\tau < \infty\)),\[\int_{\left[0,\infty\right]}c_F\left(t\right)d\left(F^+-F\right)\left(t\right)
=
\int_{\left[0,\infty\right]}D\left(t\right)m_F\left(t\right)d\eta\left(t\right).
\]
Therefore,
\[
\int_0^\infty D\left(t\right)rV\left(t\right)\left(1-F\left(t\right)\right)^{n-1}dt
\le
\int_{\left[0,\infty\right]}D\left(t\right)m_F\left(t\right)d\eta\left(t\right).
\tag{\(E.1\)}
\label{eq:d1}\]

Now consider the planner. Since \(G\) maximizes \(v\), we must have
\[
\int_{\left[0,\infty\right]}\left(A_G\left(t\right)-c_G\left(t\right)\right)d\left(G^- -G\right)\left(t\right)\le0.
\]
Subtracting \(A_G\left(\infty\right)\) from the integrand does not change this inequality, since \(G^- -G\) has total mass zero. \Cref{lem:B7} reveals the planner's marginal benefit:
\[
A_G\left(t\right)-A_G\left(\infty\right)
=
V\left(t\right)\left(1-G\left(t\right)\right)^{n-1}-\widetilde{V}_G\left(t\right).
\]
By \Cref{lem:rd_planner_tail},
\[
A_G\left(t\right)-A_G\left(\infty\right)
=
\int_t^\infty rV\left(s\right)\left(1-G\left(s\right)\right)^{n-1}ds \quad \forall t < \infty,\]
and \(0\) at \(t=\infty\). Since \(G^- -G=-D\), we use Fubini's theorem to get
\[
\int_{\left[0,\infty\right]}\left(A_G\left(t\right)-A_G\left(\infty\right)\right)d\left(G^- -G\right)\left(t\right)
=
-\int_0^\infty D\left(t\right)rV\left(t\right)\left(1-G\left(t\right)\right)^{n-1}dt.
\]
Using the marginal-cost form provided by \Cref{ass:rd_monotone_marginal_cost}, as above,
\[
\int_{\left[0,\infty\right]}c_G\left(t\right)d\left(G^- -G\right)\left(t\right)
=
-\int_{\left[0,\infty\right]}D\left(t\right)m_G\left(t\right)d\eta\left(t\right).
\]
Thus,
\[
\int_{\left[0,\infty\right]}D\left(t\right)m_G\left(t\right)d\eta\left(t\right)
\le
\int_0^\infty D\left(t\right)rV\left(t\right)\left(1-G\left(t\right)\right)^{n-1}dt.
\tag{\(E.2\)}\label{eq:d2}
\]

Combining \eqref{eq:d1} and \eqref{eq:d2}, then appealing to \Cref{ass:rd_monotone_marginal_cost}, yields
\[
\int_0^\infty D\left(t\right)rV\left(t\right)\left(\left(1-F\left(t\right)\right)^{n-1}-\left(1-G\left(t\right)\right)^{n-1}\right)dt
\le
\int_{\left[0,\infty\right]}D\left(t\right)\left(m_F\left(t\right)-m_G\left(t\right)\right)d\eta\left(t\right) \leq 0.
\tag{\(E.3\)}\label{eq:d3}
\]

Suppose for the sake of contradiction that \(D\) is not identically zero, i.e., that there exists \(t<\infty\) such that \(G\left(t\right)>F\left(t\right)\). By right-continuity, this holds on a set of positive Lebesgue measure. On this set,
\[
rV\left(t\right)\left(\left(1-F\left(t\right)\right)^{n-1}-\left(1-G\left(t\right)\right)^{n-1}\right)>0.
\]
Thus, the left-hand side of \eqref{eq:d3} is strictly positive, a contradiction. Therefore, \(D\left(t\right)=0\) for every finite \(t\), so \(G\left(t\right)\le F\left(t\right)\) for every \(t<\infty\). Since \(F\left(\infty\right)=G\left(\infty\right)=1\), the inequality also holds at \(t=\infty\).
\end{proof}

\newpage

\section{Online Appendix}\label[appendix]{app:supplement}

\subsection{Proof of \texorpdfstring{\Cref{prop:symNE}, Equilibrium Existence}{Proposition~\ref{prop:symNE}, Equilibrium Existence}}\label[appendix]{existenceproof}

We prove \Cref{prop:symNE} through a sequence of lemmas verifying that the assumptions for \citet[Theorem 4.1]{reny1999existence} hold, which guarantees the existence of a symmetric pure-strategy Nash equilibrium. For \(F\in\mathcal{F}\), define the diagonal payoff \(v(F)\coloneqq u_1(F,\dots,F)\).

\begin{lemma}\label{lem:vcont}
Under \Cref{ass:prize,ass:cost}, \(v\colon\mathcal{F}\to\R\) is continuous.
\end{lemma}
\begin{proof}[Proof of \Cref{lem:vcont}]
Under symmetry, \(\mathbb E[\pi_i]\) is the same for all \(i\). Summing over \(i\) and using \Cref{ass:prize}\ref{ass:prize-agg} yields
\[
v(F)
=
\frac{1}{n}\mathbb E\left[\Pi(X_1,\ldots,X_n)\right]-C(F)
=
\frac{1}{n}\int_{X^n}\Pi(x)d(dF)^{\otimes n}(x)-C(F).
\]
The integrand is bounded and continuous on the compact \(X^n\). \(F\mapsto (dF)^{\otimes n}\) is continuous under weak convergence, and \(C\) is continuous by \Cref{ass:cost}\ref{ass1}.
\end{proof}

Our next lemma establishes diagonal quasiconcavity.
\begin{lemma}\label{lem:dqc}
Under \Cref{ass:cost}, for any \(m\ge 2\), any \(F^1,\dots,F^m\in\mathcal{F}\), and any \(F\in\co\{F^1,\dots,F^m\}\), \(v(F)\geq \min_{1\le k\le m} u_1(F^k,F,\dots,F)\).
\end{lemma}

\begin{proof}[Proof of \Cref{lem:dqc}]
Fix \(F\in\mathcal{F}\). The mapping \(H\mapsto u_1(H,F,\dots,F)\) is concave in \(H\):
the expected prize term is affine in \(dH\), and \(-C(H)\) is concave because \(C\) is convex.
If \(F=\sum_{k=1}^m \lambda_k F^k\) with \(\lambda_k\ge 0\) and \(\sum_k\lambda_k=1\), then concavity produces
\[
v(F)=u_1(F,F,\dots,F)\ge \sum_{k=1}^m \lambda_k u_1(F^k,F,\dots,F)\ge \min_{1\le k\le m} u_1(F^k,F,\dots,F).
\]
\end{proof}

We verify diagonal better-reply security via an atomless smoothing device, generated over several lemmas.

\begin{lemma}\label{lem:atomless_cont}
Posit \Cref{ass:prize}. Fix \(\hat H\in\mathcal{F}\) such that \(d\hat H(\{x\})=0\) for every \(x\in X\).
Then the map \(F\mapsto u_1(\hat H,F,\dots,F)\) is continuous on \(\mathcal{F}\).
\end{lemma}

\begin{proof}[Proof of \Cref{lem:atomless_cont}]
Let \(F_m\to F\). Then
\[
d\hat H\otimes(dF_m)^{\otimes(n-1)}
\to
d\hat H\otimes(dF)^{\otimes(n-1)}
\]
weakly on \(X^n\).

Consider the bounded function \(\pi_1\) on \(X^n\). By \Cref{ass:prize}\ref{ass:prize-disc}, \(\pi_1\) is continuous at every \(x\in X^n\) such that \(x_1\ne x_j\) for all \(j\ge 2\). Hence,
\[
\mathrm{Disc}(\pi_1)
\subseteq
\bigcup_{j=2}^n \left\{x\in X^n: x_1=x_j\right\}.
\]
Under the limiting product measure \(d\hat H\otimes(dF)^{\otimes(n-1)}\), for each \(j\ge 2\),
\[
\mathbb P(x_1=x_j)=\int_X d\hat H(\{x\})dF(x)=0,
\]
since \(d\hat H\) is atomless on \(X\). Thus, the limiting measure assigns zero mass to \(\mathrm{Disc}(\pi_1)\).

By the extended Portmanteau theorem: if \(M_m\to M\) weakly and \(f\) is bounded with \(M(\mathrm{Disc}(f))=0\), then
\(\int f dM_m\to\int f dM\). Applying this to \(f=\pi_1\) yields convergence of the expected prize term.
Since \(C(\hat H)\) does not depend on \(F\), continuity follows.
\end{proof}

Next we show that no best response places positive mass at \(1\).

\begin{lemma}\label{lem:one_dominated}
Posit \Cref{ass:prize,ass:cost}. Fix opponents \((F_2,\dots,F_n)\in\mathcal{F}^{n-1}\) and a strategy \(H\in\mathcal{F}\).
Let \(\alpha\coloneqq dH(\{1\})\), and define \(\widetilde H\in\mathcal{F}\) by moving this mass to \(0\): \(d\widetilde H \coloneqq dH+\alpha(\delta_0-\delta_1)\). Then,
\[
u_1(\widetilde H,F_2,\dots,F_n)\ge u_1(H,F_2,\dots,F_n)+\eta_1 \alpha.
\]
\end{lemma}

\begin{proof}[Proof of \Cref{lem:one_dominated}]
By \Cref{ass:prize}\ref{ass:prize-bdd}, \(0\le \pi_1\le \bar\pi\), hence, moving probability \(\alpha\) from \(1\) to \(0\) can reduce the expected prize term by at most \(\bar{\pi} \alpha\).

For the cost, by convexity of \(C\),
\(C(H)-C(\widetilde H)
\ge
\delta C\left(\widetilde H;H-\widetilde H\right)\). Since \(d\widetilde H=dH+\alpha(\delta_0-\delta_1)\), we have
\(d\left(H-\widetilde H\right)=\alpha(\delta_1-\delta_0)\). Thus,
\[
\delta C\left(\widetilde H;H-\widetilde H\right)
=
\int_X c_{\widetilde H}(x)d\left(H-\widetilde H\right)(x)
=
\alpha\left(c_{\widetilde H}(1)-c_{\widetilde H}(0)\right)
\ge
\alpha\left(\bar\pi+\eta_1\right).
\]
with the first equality and the inequality granted by \Cref{ass:cost}(\ref{ass2} and \ref{ass3}). Therefore,
\[
C(\widetilde H)-C(H)\le -\alpha\left(\bar\pi+\eta_1\right),
\]
and so the net payoff gain is at least \(-\bar{\pi} \alpha+\alpha(\bar\pi+\eta_1)=\eta_1 \alpha\).\end{proof}

Now we smooth profitable deviations.
\begin{lemma}\label{lem:smooth_profitable}
Posit \Cref{ass:prize,ass:cost}. Fix \(F\in\mathcal{F}\).
If there exist \(H^0\in\mathcal{F}\) and \(\eta>0\) such that
\[
u_1(H^0,F,\dots,F)\ge v(F)+\eta,
\]
then there exists \(\hat H\in\mathcal{F}\) with \(d\hat H(\{x\})=0\) for every \(x\in X\) such that
\[
u_1(\hat H,F,\dots,F)\ge v(F)+\frac{\eta}{2}.
\]
\end{lemma}

\begin{proof}[Proof of \Cref{lem:smooth_profitable}]
First we remove any mass at \(1\). Let \(\alpha\coloneqq dH^0(\{1\})\) and let \(\widetilde H^0\) be obtained from \(H^0\) by moving \(\alpha\) to \(0\), as in \Cref{lem:one_dominated}.
We have \(u_1(\widetilde H^0,F,\dots,F)\ge u_1(H^0,F,\dots,F)\ge v(F)+\eta\). In particular, \(d\widetilde H^0(\{1\})=0\).

Next let \(X_1\sim d\widetilde H^0\). For \(\delta>0\), define a smoothed \(\hat X_1^\delta\) by:
if \(X_1=1\), set \(\hat X_1^\delta=1\); and if \(X_1=x<1\), draw an independent \(V\sim\mathrm{Unif}\left[0,1\right]\) and set
\[
\hat X_1^\delta \coloneqq x + V \min\left\{\delta,1-x\right\}.
\]
Let \(\hat H^\delta\) be the cdf induced by \(\hat X_1^\delta\).
Then \(d\hat H^\delta(\{x\})=0\) for every \(x\in X\), and \(\hat H^\delta\to \widetilde H^0\) weakly as \(\delta\downarrow 0\).

Fix opponents at \(F\). Conditional on opponents' realizations, increasing \(x_1\) cannot decrease firm \(1\)'s prize payoff by \Cref{ass:prize}\ref{ass:prize-mono}.
Since \(\hat X_1^\delta\ge X_1\) a.s., the expected prize term under \(\hat H^\delta\) against \(F\) is at least that under \(\widetilde H^0\).

By \Cref{ass:cost}\ref{ass1} and \(\hat H^\delta\to \widetilde H^0\), we have \(C(\hat H^\delta)\to C(\widetilde H^0)\).
Choose \(\delta\) such that \(\left|C(\hat H^\delta)-C(\widetilde H^0)\right|\le \eta/2\). Then
\[
u_1(\hat H^\delta,F,\dots,F)\ge u_1(\widetilde H^0,F,\dots,F)-\frac{\eta}{2}\ge v(F)+\frac{\eta}{2}.
\]
Set \(\hat H\coloneqq \hat H^\delta\).
\end{proof}

Next, we establish \citet{reny1999existence}'s diagonal better-reply security.
\begin{lemma}\label{lem:dbrs}
Under \Cref{ass:prize,ass:cost}, the game is diagonally better-reply secure.
\end{lemma}

\begin{proof}[Proof of \Cref{lem:dbrs}]
Let \(F\in\mathcal{F}\) be such that \((F,\dots,F)\) is not a symmetric equilibrium.
Then there exist \(H^0\in\mathcal{F}\) and \(\eta>0\) with \(u_1(H^0,F,\dots,F)\ge v(F)+\eta\).
By \Cref{lem:smooth_profitable}, there exists \(\hat H\) atomless on \(X\) with
\(u_1(\hat H,F,\dots,F)\ge v(F)+\eta/2\).
By \Cref{lem:atomless_cont}, the map \(G\mapsto u_1(\hat H,G,\dots,G)\) is continuous at \(F\),
so there exists a neighborhood \(U\) of \(F\) such that \(u_1(\hat H,G,\dots,G)>v(F)\) for all \(G\in U\).
Thus, a player can secure strictly above \(v(F)\) along the diagonal at \(F\).
\end{proof}

\begin{proof}[Proof of \Cref{prop:symNE}]
The game is symmetric with compact metric strategy space \(\mathcal{F}\).
By \Cref{lem:dqc} it is diagonally quasiconcave, and by \Cref{lem:dbrs} it is diagonally better-reply secure. \citet[Theorem 4.1]{reny1999existence} implies the existence of a symmetric equilibrium.
\end{proof}

\subsection{Alternative Feasible Sets}
\label{subsec:alt_feasible_sets}

In some applications, players are not free to choose any \(F\in\mathcal{F}\), but are instead restricted to a subset \(\mathcal{F}_0\subseteq\mathcal{F}\). This can capture, for instance, moment constraints (e.g., a fixed mean) or feasibility constraints defined by mean-preserving contractions relative to a baseline distribution. In this subsection, I record a sufficient condition under which the equilibrium existence result of \Cref{prop:symNE} extends verbatim.

Fix a nonempty set \(\mathcal{F}_0\subseteq\mathcal{F}\). Consider the same \(n\)-player game as in \S2 of the main text, except that each player \(i\) must choose \(F_i\in\mathcal{F}_0\).

\begin{assumption}\label{ass:restricted_feasible}
Fix \(\mathcal{F}_0\subseteq\mathcal{F}\).
\begin{enumerate}
\item\label{it:r1} \(\mathcal{F}_0\) is compact and convex with respect to weak convergence of the induced measures on \(X\).
\item\label{it:r2} For every \(F\in\mathcal{F}_0\), every \(H_0\in\mathcal{F}_0\), and every \(\varepsilon>0\), there exists \(\tilde H\in\mathcal{F}_0\) such that \(d\tilde H(\{x\})=0\) for every \(x\in X\) and
\[
u_1\left(\tilde H,F,\dots,F\right)\ge u_1\left(H_0,F,\dots,F\right)-\varepsilon.
\]
\end{enumerate}
\end{assumption}

\begin{proposition}\label{prop:restricted_existence}
Posit \Cref{ass:prize,ass:cost,ass:restricted_feasible}. Then the restricted game with strategy space \(\mathcal{F}_0\) admits a symmetric pure-strategy Nash equilibrium \(F^\ast\in\mathcal{F}_0\).
\end{proposition}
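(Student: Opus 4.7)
The plan is to verify that the restricted game satisfies the hypotheses of \citet[Theorem 4.1]{reny1999existence}, using the same template as the proof of Proposition \ref{prop:symNE} but with Assumption \ref{ass:restricted_feasible} absorbing the role previously played by the constructive smoothing lemmas (Lemmas \ref{lem:one_dominated}--\ref{lem:smooth_profitable}). The restricted game remains symmetric with a compact metric strategy space by Assumption~\ref{ass:restricted_feasible}.\ref{it:r1}, and its diagonal payoff \(v\) is continuous on \(\mathcal F_0\) because it is continuous on \(\mathcal F\) by Lemma \ref{lem:vcont} and the topology on \(\mathcal F_0\) is the subspace topology.

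For diagonal quasiconcavity, I would note that the proof of Lemma \ref{lem:dqc} only uses (i) affinity of the expected prize in the deviator's strategy, which is inherited on \(\mathcal F_0\), and (ii) concavity of \(-C\), which holds on all of \(\mathcal F\) and hence on \(\mathcal F_0\). Convexity of \(\mathcal F_0\) (Assumption~\ref{ass:restricted_feasible}.\ref{it:r1}) guarantees that convex combinations of strategies from \(\mathcal F_0\) lie in \(\mathcal F_0\), so the inequality \(v(F)\ge \min_k u_1(F^k,F,\dots,F)\) transfers verbatim.

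The one genuinely new step is diagonal better-reply security, and this is where Assumption \ref{ass:restricted_feasible}.\ref{it:r2} earns its keep. Suppose \((F,\dots,F)\) with \(F\in\mathcal F_0\) is not a symmetric equilibrium of the restricted game; then there exist \(H^0\in\mathcal F_0\) and \(\eta>0\) with \(u_1(H^0,F,\dots,F)\ge v(F)+\eta\). Applying Assumption \ref{ass:restricted_feasible}.\ref{it:r2} with this \(H^0\) and with \(\varepsilon=\eta/2\) yields an \emph{atomless} \(\hat H\in\mathcal F_0\) with \(u_1(\hat H,F,\dots,F)\ge v(F)+\eta/2\). Since \(\hat H\) is atomless on \(X\), Lemma \ref{lem:atomless_cont} shows that the map \(G\mapsto u_1(\hat H,G,\dots,G)\) is continuous at \(F\) (continuity was proved on all of \(\mathcal F\) and restricts to \(\mathcal F_0\)), so there is a neighborhood \(U\) of \(F\) in \(\mathcal F_0\) on which this map strictly exceeds \(v(F)\). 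Thus a player can secure strictly above \(v(F)\) along the diagonal at \(F\), which is diagonal better-reply security.

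The main obstacle, conceptually, is that the original smoothing argument (mass-removal at \(1\) followed by uniform spreading) need not stay inside \(\mathcal F_0\)--a moment constraint or a mean-preserving-contraction constraint is exactly the kind of restriction that is destroyed by such constructions. Assumption \ref{ass:restricted_feasible}.\ref{it:r2} is precisely the abstract hypothesis that guarantees \(\mathcal F_0\) admits \emph{some} arbitrarily-close atomless approximation from within, and it is the only nontrivial ingredient needed beyond the structure already present in the unrestricted proof. Once the three Reny conditions are checked, \citet[Theorem 4.1]{reny1999existence} delivers the symmetric pure-strategy equilibrium \(F^\ast\in\mathcal F_0\).
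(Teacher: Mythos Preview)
Your proposal is correct and follows essentially the same route as the paper's own proof: verify compactness and convexity of \(\mathcal F_0\), inherit diagonal quasiconcavity from Lemma \ref{lem:dqc}, and use Assumption \ref{ass:restricted_feasible}.\ref{it:r2} in place of the constructive smoothing lemmas to produce an atomless feasible deviation for diagonal better-reply security, then invoke \citet[Theorem 4.1]{reny1999existence}. The only cosmetic difference is that you also record continuity of \(v\) on \(\mathcal F_0\), which the paper's sketch omits; the structure and key ideas are otherwise identical.
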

The proof of \Cref{prop:restricted_existence} is analogous to that of \Cref{prop:symNE}, with \(\mathcal{F}\) replaced by \(\mathcal{F}_0\), so we merely sketch it.
\begin{proof}[Proof of \Cref{prop:restricted_existence}]
By \Cref{ass:restricted_feasible}(\ref{it:r1}), the symmetric strategy space \(\mathcal{F}_0\) is a compact metric space. Diagonal quasiconcavity holds on \(\mathcal{F}_0\) because for fixed opponents \(F\), the map \(H\mapsto u_1\left(H,F,\dots,F\right)\) is concave on \(\mathcal{F}\) (the expected-prize term is affine in \(dH\) and \(-C\) is concave by convexity of \(C\)), and \(\mathcal{F}_0\) is convex.

For diagonal better-reply security, fix \(F\in\mathcal{F}_0\) such that \(\left(F,\dots,F\right)\) is not a symmetric equilibrium. Then there exist \(H_0\in\mathcal{F}_0\) and \(\eta>0\) with \(u_1\left(H_0,F,\dots,F\right)\ge u_1\left(F,\dots,F\right)+\eta\). By \Cref{ass:restricted_feasible}(\ref{it:r2}), choose an atomless \(\tilde H\in\mathcal{F}_0\) such that
\[
u_1\left(\tilde H,F,\dots,F\right)\ge u_1\left(H_0,F,\dots,F\right)-\eta/2,
\]
and so \(u_1\left(\tilde H,F,\dots,F\right)\ge u_1\left(F,\dots,F\right)+\eta/2\). Lemma A.3 from the main text implies the map \(G\mapsto u_1\left(\tilde H,G,\dots,G\right)\) is continuous at \(F\), so \(\tilde H\) secures strictly above \(u_1\left(F,\dots,F\right)\) in a neighborhood along the diagonal. \citet[Theorem 4.1]{reny1999existence}, therefore, implies the existence of a symmetric equilibrium in \(\mathcal{F}_0\).
\end{proof}

\subsubsection{Examples.}
\begin{enumerate}
\item[(i)] \emph{Fixed mean.} Fix \(m\in[0,1]\) and define \(\mathcal{F}(m)\coloneqq\left\{F\in\mathcal{F}:\int_X x dF(x)=m\right\}\). It is a standard fact that \(\mathcal{F}(m)\) is a compact convex subset of \(\mathcal{F}\). Moreover, \(\mathcal{F}(m)\) satisfies \Cref{ass:restricted_feasible}(\ref{it:r2}): one can split atoms into small absolutely continuous components and offset any induced mean change by transferring an arbitrarily small amount of mass between two interior intervals. Boundedness of \(\pi_1\) and continuity of \(C\) then ensure the atomless perturbation can be chosen to change \(u_1\left(\cdot,F,\dots,F\right)\) by at most \(\varepsilon\).

\item[(ii)] \emph{Mean-preserving contractions of a baseline distribution.} Fix \(G\in\mathcal{F}\). For \(F\in\mathcal{F}\), write \(F\preceq_{cx}G\) if \(G\) dominates \(F\) in the convex order, and define \(\mathcal{F}_{cx}(G)\coloneqq\left\{F\in\mathcal{F}\colon F\preceq_{cx}G\right\}\). It is a standard fact that \(\mathcal{F}_{cx}(G)\) is compact and convex. Finally, if \(\tilde H\preceq_{cx}H_0\) and \(H_0\preceq_{cx}G\), then \(\tilde H\preceq_{cx}G\). Thus, we can choose the smoothing in \Cref{ass:restricted_feasible}(\ref{it:r2}) to be a sequence of atomless mean-preserving contractions of \(H_0\), which remain feasible.
\end{enumerate}

\Cref{ass:restricted_feasible}(\ref{it:r2}) rules out restrictions that force atoms (e.g., requiring support on a finite grid), in which case tie discontinuities cannot generally be smoothed away.

\subsection{Price \& Quality Competition}

This appendix extends \Cref{cor:price_equals_mc_gap,cor:necessary_mc_limit} to allow i. mixed strategies over prices, and ii. joint randomization over \((F,p)\). Fix \(n\ge2\) and maintain \Cref{as:price_smoothing}, repeated here for clarity:

\paragraph{Assumption 6.1} Take some \(\sigma\in(0,1)\). After a firm draws \(Q_i\sim dF_i\), the representative consumer draws an independent
\(\varepsilon_i\) with cdf \(G\) on \([0,1]\) that admits a continuous density \(g\). The consumer's value for product \(i\) is \(\hat{Q}_i \coloneqq (1-\sigma)Q_i+\sigma \varepsilon_i\in[0,1]\).

\bigskip

Let \(X\coloneqq[0,1]\) and \(P\coloneqq[0,\bar p]\). For any \(F\in\mathcal{F}\), let \(\hat{F}\) denote the induced cdf of \(\hat Q\coloneqq(1-\sigma)Q+\sigma\varepsilon\).
As in \secref{sec:unit_demand_prices}, we extend \(\hat{F}(x)\coloneqq0\) for \(x<0\) and \(\hat{F}(x)\coloneqq1\) for \(x>1\).

Let \(M\) be a Borel probability measure on \(\mathcal{F}\times P\).
Under symmetric play, each opponent draws \((F_j,p_j)\sim M\) i.i.d., then draws \(Q_j\sim dF_j\) and \(\varepsilon_j\sim G\) i.i.d., and sets \(\hat Q_j\coloneqq(1-\sigma)Q_j+\sigma\varepsilon_j\).
Define opponent net utility \(U_j\coloneqq\hat Q_j-p_j\).
Let \(\tilde F_M\colon\mathbb{R}\to[0,1]\) denote the cdf of \(U_j\):
\[
\tilde F_M(z)\coloneqq\int_{\mathcal{F}\times P}\hat{F}\left(z+p\right)\,dM(F,p).
\]
Equivalently, \(\tilde F_M(z)=0\) for \(z<-\bar p\) and \(\tilde F_M(z)=1\) for \(z>1\).

Fix a deviator price \(r\in P\).
Conditional on realizing quality \(q\in X\) and taste shock \(e\in[0,1]\), the deviator wins if and only if \((1-\sigma)q+\sigma e-r>\max_{2\le j\le n}U_j
\). Accordingly, its interim expected profit is
\[
a_M(q;r)\coloneqq r\cdot\int_0^1\left(\tilde F_M\left((1-\sigma)q+\sigma e-r\right)\right)^{n-1}dG(e).
\]
For each \(r\in P\), denote
\[
W_M(r)\coloneqq\max_{H\in\mathcal{F}}\left(\int_X a_M(q;r)\,dH(q)-C(H)\right).
\]

\begin{proposition}\label{prop:mixed_s7:FOC}
Posit \Cref{ass:cost,as:price_smoothing}.
Fix a Borel probability measure \(M\) on \(\mathcal{F}\times P\) and a price \(r\in P\).
Let \(F\in\mathcal{F}\) attain the maximum \(W_M(r)\).
Then there exists \(\lambda_g(r)\in\mathbb{R}\) such that
\[
a_M(q;r)-c_F(q)\le\lambda_g(r)\ \forall q\in X,\quad\text{and}\quad a_M(q;r)-c_F(q)=\lambda_g(r)\ \forall q\in\supp\left(dF\right).
\]
\end{proposition}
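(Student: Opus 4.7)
The plan is to mimic the proof of Proposition \ref{lem:game-kkt} in the more general setting where opponents' strategies are a joint distribution \(M\) over quality-cdfs and prices, and the deviator's price \(r\) is fixed. By the definition of \(W_M(r)\), the maximizer \(F\) solves
\[
\max_{H\in\mathcal F}\left\{\int_X a_M(q;r)\,dH(q)-C(H)\right\}=J_{a_M(\cdot;r)}(H).
\]
Since \(r\le\bar p\) and \(\tilde F_M\in[0,1]\), the function \(q\mapsto a_M(q;r)\) is a bounded Borel function on \(X\), so Lemma \ref{lem:kkt_simplex} applies directly: there exists \(\lambda_g(r)\in\mathbb R\) with
\[
a_M(q;r)-c_F(q)\le\lambda_g(r)\ \forall q\in X,\qquad \int_X\left(\lambda_g(r)-\left(a_M(q;r)-c_F(q)\right)\right)dF(q)=0.
\]

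The remaining task is to upgrade the complementary-slackness equality to pointwise equality on \(\supp(dF)\). Via Lemma \ref{lem:compl_support}\ref{it:b42}, it suffices to show that \(q\mapsto a_M(q;r)-c_F(q)\) is continuous on \(X\). Continuity of \(c_F\) is granted by Assumption \ref{ass:cost}\ref{ass2}. For the interim-profit term, I would first argue that \(\tilde F_M\) is continuous on \(\mathbb R\): for any \(F\in\mathcal F\), the induced cdf \(\hat F\) of \(\hat Q=(1-\sigma)Q+\sigma\varepsilon\) admits the density \(\hat f(x)=\frac{1}{\sigma}\int_X g\!\left(\frac{x-(1-\sigma)q}{\sigma}\right)dF(q)\) which is continuous (and uniformly bounded by \(\|g\|_\infty/\sigma\)) by continuity of \(g\) and dominated convergence, so \(\hat F\) is Lipschitz with constant independent of \(F\). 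Integrating against \(M\), \(\tilde F_M\) inherits this uniform Lipschitz property. Continuity of \(q\mapsto\tilde F_M((1-\sigma)q+\sigma e-r)\) for each \(e\), together with the uniform bound \(|\tilde F_M|\le1\), permits dominated convergence inside the outer integral against \(dG(e)\), producing continuity of \(q\mapsto a_M(q;r)\) on \(X\).

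Combining these two pieces, the net function \(h(q)\coloneqq a_M(q;r)-c_F(q)\) is continuous on \(X\), satisfies \(h\le\lambda_g(r)\) pointwise, and \(\int_X(\lambda_g(r)-h)\,dF=0\). Lemma \ref{lem:compl_support}\ref{it:b42} then delivers \(h(q)=\lambda_g(r)\) for every \(q\in\supp(dF)\), completing the proof.

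The only substantive step is the continuity verification in the middle paragraph: one must use the taste-shock smoothing from Assumption \ref{as:price_smoothing} to push continuity through both the mixture over \(M\) and the integral over \(\varepsilon\). The rest is a direct invocation of the abstract KKT machinery already developed in Appendix \ref{app:kkt}, and no new ideas beyond those used in the proof of Proposition \ref{lem:game-kkt} are required.
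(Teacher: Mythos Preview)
Your proof is correct and follows the same approach as the paper, which simply writes ``apply Proposition~\ref{lem:game-kkt}'' after noting that \(F\) maximizes \(\int_X a_M(q;r)\,dH(q)-C(H)\). You unpack this citation by invoking Lemmas~\ref{lem:kkt_simplex} and~\ref{lem:compl_support} directly and, in particular, you supply the continuity verification for \(q\mapsto a_M(q;r)\) via the taste-shock smoothing---a detail the paper leaves implicit in its appeal to Proposition~\ref{lem:game-kkt}. One minor remark: your claim that \(\hat f\) is continuous can fail at the boundary if \(g(0)>0\) or \(g(1)>0\), but this is harmless since you only need the bound \(\hat f\le \|g\|_\infty/\sigma\) to conclude that each \(\hat F\) (and hence \(\tilde F_M\)) is Lipschitz, which is what drives the argument.
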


\begin{proof}
Fix opponents at \(M\) and fix the deviator's price at \(r\).
At that price, the deviator optimizes \(\int_X a_M(q;r)\,dH(q)-C(H)\) over \(H\in\mathcal{F}\).
Since \(F\) attains this maximum, apply Proposition 3.7.
\end{proof}
Immediately, we have a mixed-strategy analog of \Cref{cor:price_equals_mc_gap}:
\begin{corollary}\label{cor:mixed_s7:73}
Posit \Cref{ass:cost,as:price_smoothing}. Fix \(M\) and \(r\) as in \Cref{prop:mixed_s7:FOC}, and let \(F\in\mathcal{F}\) attain \(W_M(r)\). Define the support endpoints \(\ubar{q}\coloneqq\inf\supp\left(dF\right)\) and \(\bar q\coloneqq\sup\supp\left(dF\right)\); and define, for each \(q\in X\),
\[
\omega_M(q;r)\coloneqq\int_0^1\left(\tilde F_M\left((1-\sigma)q+\sigma e-r\right)\right)^{n-1}dG(e).
\]
Then \(c_F\left(\bar q\right)-c_F\left(\ubar{q}\right)=r\cdot\left(\omega_M\left(\bar q;r\right)-\omega_M\left(\ubar{q};r\right)\right)\) and so \(0\le c_F\left(\bar q\right)-c_F\left(\ubar{q}\right)\le r\).
\end{corollary}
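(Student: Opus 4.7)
The plan is to mirror the proof of Corollary~\ref{cor:price_equals_mc_gap} almost verbatim, with Proposition~\ref{prop:mixed_s7:FOC} playing the role that Proposition~\ref{prop:price_kkt_and_price_foc}(1) played there. The key bookkeeping observation is that, by the very definition of $a_M$ and $\omega_M$, we have
\[
a_M(q;r) = r\cdot \omega_M(q;r),
\]
so the complementary-slackness equality furnished by Proposition~\ref{prop:mixed_s7:FOC} becomes $r\,\omega_M(q;r)-c_F(q)=\lambda_g(r)$ for every $q\in\supp(dF)$.

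First I would note that, since $dF$ is a probability measure on the compact interval $X$, its support is a nonempty closed subset of $X$, so both endpoints $\ubar{q}=\inf\supp(dF)$ and $\bar{q}=\sup\supp(dF)$ belong to $\supp(dF)$. Next I would evaluate the FOC at these two points and subtract, which cancels $\lambda_g(r)$ and immediately delivers the first claim
\[
c_F(\bar q)-c_F(\ubar{q}) = r\cdot\bigl(\omega_M(\bar q;r)-\omega_M(\ubar{q};r)\bigr).
\]

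For the bounds, I would observe that $\tilde F_M(\cdot)\in[0,1]$ and is nondecreasing on $\mathbb R$ (being a cdf), so $(\tilde F_M)^{n-1}(\cdot)\in[0,1]$ is nondecreasing as well; averaging over $e\sim G$ preserves both properties, yielding $\omega_M(\cdot;r)\in[0,1]$ and nondecreasing in $q$. Hence $0\le \omega_M(\bar q;r)-\omega_M(\ubar{q};r)\le 1$, and since $r\in P=[0,\bar p]$ is nonnegative, multiplying by $r$ gives $0\le c_F(\bar q)-c_F(\ubar{q})\le r$.

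There is no substantive obstacle to overcome: the content of the corollary is entirely in the FOC proved in Proposition~\ref{prop:mixed_s7:FOC}, and the remainder is elementary. The only tiny subtlety worth noting explicitly is that $\ubar{q},\bar q\in\supp(dF)$ (which requires compactness of $X$ together with closedness of the support); everything else reduces to arithmetic with a cdf.
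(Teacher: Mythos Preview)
Your proposal is correct and follows essentially the same route as the paper's proof: invoke Proposition~\ref{prop:mixed_s7:FOC} to get \(r\,\omega_M(q;r)-c_F(q)=\lambda_g(r)\) on \(\supp(dF)\), evaluate at the two endpoints and subtract, then bound \(\omega_M(\bar q;r)-\omega_M(\ubar q;r)\in[0,1]\) using \(\tilde F_M(\cdot)\in[0,1]\). Your extra remarks (that \(\ubar q,\bar q\in\supp(dF)\) by closedness/compactness, that \(\tilde F_M\) is nondecreasing, and that \(r\ge 0\)) are all correct and simply make explicit what the paper leaves implicit.
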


\begin{proof}
By \Cref{prop:mixed_s7:FOC}, for every \(q\in\supp\left(dF\right)\), \(r\omega_M(q;r)-c_F(q)=\lambda_g(r)\). Since \(\ubar{q},\bar q\in\supp\left(dF\right)\), subtracting the equality at \(\ubar{q}\) from that at \(\bar q\) yields the identity.
Finally, \(\tilde F_M(\cdot)\in[0,1]\) implies \(0\le\omega_M\left(\bar q;r\right)-\omega_M\left(\ubar{q};r\right)\le1\), proving the bound.
\end{proof}

Suppose \(F\) is deterministic, but accompanied by random prices.

\begin{corollary}\label{cor:mixed_s7:detF}
Posit \Cref{ass:cost,as:price_smoothing}. Suppose a symmetric equilibrium is of the form \(\left(F,S\right)\), where \(F\in\mathcal{F}\) is deterministic and \(S\) is a cdf on \(P\). Let \(\ubar{p}\coloneqq\inf\supp\left(dS\right)\).
Then, \(c_F\left(\bar q\right)-c_F\left(\ubar{q}\right)\le\ubar{p}\).
\end{corollary}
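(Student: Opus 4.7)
The plan is to reduce the statement to Corollary \ref{cor:mixed_s7:73} by showing that, for every price $r \in \supp(dS)$, the deterministic quality distribution $F$ attains the inner maximum $W_M(r)$ against the joint opponent strategy $M \coloneqq \delta_F \otimes S$; the bound then follows by specializing to $r = \ubar{p}$.

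First I would identify the symmetric opponent play with the product Borel probability measure $M = \delta_F \otimes S$ on $\mathcal{F} \times P$, which is precisely the opponent object feeding the definitions of $a_M$ and $W_M$. Next I would invoke the standard mixed-strategy indifference principle: the firm's equilibrium value from playing $(F,S)$ equals the $dS$-average of the pure-action payoffs $r \mapsto \int_X a_M(q;r)\,dF(q) - C(F)$, and no deviation to any pure $(H,r')$ can strictly improve upon it. Fixing any $r \in \supp(dS)$, the pure action $(F,r)$ must therefore achieve the common equilibrium value $V^{\star}$; moreover, $V^{\star}$ must weakly dominate any deviation $(H,r)$ at the same price, so taking the supremum over $H$ yields $V^{\star} \ge W_M(r)$. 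The reverse inequality $V^{\star} \le W_M(r)$ is immediate because $F$ is itself feasible in the definition of $W_M$, so $F$ attains $W_M(r)$ for every $r \in \supp(dS)$.

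Applying Corollary \ref{cor:mixed_s7:73} at each such $r$ then gives $c_F(\bar{q}) - c_F(\ubar{q}) \le r$. Since $P$ is compact and the support of any Borel measure is closed, $\ubar{p} = \inf \supp(dS) \in \supp(dS)$, and taking $r = \ubar{p}$ delivers the claim. The one step meriting care is the indifference argument for \emph{pointwise} (rather than $S$-almost every) $r \in \supp(dS)$: under Assumption~7.1, $\tilde F_M$ is absolutely continuous in $z$, hence $r \mapsto \int_X a_M(q;r)\,dF(q)$ is continuous, so pointwise indifference follows from a.e.\ indifference; alternatively, one could pick a sequence $r_n \in \supp(dS)$ with $r_n \downarrow \ubar{p}$ along which $F$ attains $W_M(r_n)$ and pass to the limit in the resulting inequality.
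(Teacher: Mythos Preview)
Your proposal is correct and follows essentially the same route as the paper: reduce to Corollary~\ref{cor:mixed_s7:73} by arguing that for every $r\in\supp(dS)$ the fixed $F$ is a best distributional response at price $r$, then bound by the infimum price. In fact you are more careful than the paper on one point: the paper asserts that ``$(F,p)$ is played with positive probability'' for $p\in\supp(dS)$, which need not hold when $S$ is atomless, whereas you correctly obtain pointwise indifference on the support via continuity of $r\mapsto\int a_M(q;r)\,dF(q)$.
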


\begin{proof}
Fix any \(p\in\supp\left(dS\right)\).
Since \(\left(F,p\right)\) is played with positive probability, it must be a best response to opponents' mixed strategy.
Apply \Cref{cor:mixed_s7:73} with \(r=p\).
Taking the infimum over \(p\in\supp\left(dS\right)\) yields the inequality \(c_F\left(\bar q\right)-c_F\left(\ubar{q}\right)\le\ubar{p}\).\end{proof}
This delivers the desired large-market outcome:
\begin{corollary}\label{cor:mixed_s7:detF_convergence}
Posit \Cref{ass:cost,as:price_smoothing}. For each \(n\ge2\), let \(\left(F_n,S_n\right)\) be a symmetric equilibrium of the \(n\)-firm game in which firms choose a deterministic \(F_n\in\mathcal{F}\) and mix over prices via \(S_n\). Then,\footnote{Our notation carries over in the obvious way: \(\ubar{q}_n\coloneqq\inf\supp\left(dF_n\right)\), \(\bar q_n\coloneqq\sup\supp\left(dF_n\right)\), and \(\ubar{p}_n\coloneqq\inf\supp\left(dS_n\right)\).}
\begin{enumerate}[noitemsep]
\item If \(\ubar{p}_n\to0\) as \(n\to\infty\), then \(c_{F_n}\left(\bar q_n\right)-c_{F_n}\left(\ubar{q}_n\right)\to0\).
\item \label{itlastcorr2} Moreover, if there exists \(\kappa>0\) such that for all \(F\in\mathcal{F}\) and all \(0\le x<y\le1\),
\(c_F(y)-c_F(x)\ge\kappa(y-x)\),  then \(\ubar{p}_n\to0\) \(\Longrightarrow\) \(\bar q_n-\ubar{q}_n\to0\).
\end{enumerate}
\end{corollary}

\begin{proof}
\Cref{cor:mixed_s7:detF} implies \(c_{F_n}\left(\bar q_n\right)-c_{F_n}\left(\ubar{q}_n\right)\le\ubar{p}_n\), so
\(\ubar{p}_n\to0\) implies the first claim.
For the second, the steepness condition implies \(c_{F_n}\left(\bar q_n\right)-c_{F_n}\left(\ubar{q}_n\right)\ge\kappa\left(\bar q_n-\ubar{q}_n\right)\), so the first claim yields \(\bar q_n-\ubar{q}_n\to0\).
\end{proof}

Finally, we derive the result when there is joint mixing over prices and quality distributions.

\begin{corollary}\label{cor:mixed_s7:joint}
Posit \Cref{ass:cost,as:price_smoothing}.
For each \(n\ge2\), let \(M_n\) be a symmetric equilibrium mixed strategy on \(\mathcal{F}\times P\).
For any \((F,p)\in\supp\left(M_n\right)\), define \(\ubar{q}(F)\coloneqq\inf\supp\left(dF\right)\) and \(\bar q(F)\coloneqq\sup\supp\left(dF\right)\). Then, if
\[
\bar p_n\coloneqq\sup\left\{p\in P\colon \exists F\in\mathcal{F}\text{ with }(F,p)\in\supp\left(M_n\right)\right\}\to0\quad\text{as }n\to\infty,
\]
then
\[
\sup_{(F,p)\in\supp\left(M_n\right)}\left(c_F\left(\bar q(F)\right)-c_F\left(\ubar{q}(F)\right)\right)\to0.
\]
Moreover, under the same uniform steepness condition as in \Cref{cor:mixed_s7:detF_convergence}(\ref{itlastcorr2}),
\[
\sup_{(F,p)\in\supp\left(M_n\right)}\left(\bar q(F)-\ubar{q}(F)\right)\to0.
\]
\end{corollary}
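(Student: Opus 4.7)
The plan is to reduce the joint-mixing claim to a pointwise application of Corollary~\ref{cor:mixed_s7:73} at each $(F,p)\in\supp(M_n)$, and then take the supremum over the support.

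First, I would argue that every pair $(F,p)\in\supp(M_n)$ is a best response to the opponents' mixed strategy $M_n$. This is not purely cosmetic because $\supp(M_n)$ may be uncountable, so the usual argument requires joint continuity of the deviator's payoff $(F,p)\mapsto \int_X a_{M_n}(q;p) dF(q)-C(F)$ on the compact metric space $\mathcal F\times P$. Under Assumption~7.1 the taste shock $\varepsilon$ makes $\tilde F_{M_n}$ absolutely continuous with bounded density, so $a_{M_n}(\cdot;\cdot)$ is jointly continuous on $X\times P$, and combining with weak continuity of $C$ from Assumption~2.2\ref{ass1} yields joint continuity of the payoff functional. The classical argument then shows $\supp(M_n)$ is contained in the argmax: if a support point achieved strictly less than the essential supremum, continuity would give an open neighborhood of suboptimal strategies carrying positive $M_n$-mass, a contradiction. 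Fixing the price coordinate at $p$ then implies that $F$ attains $W_{M_n}(p)$ in the sense required by Proposition~\ref{prop:mixed_s7:FOC}.

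Second, I would apply Corollary~\ref{cor:mixed_s7:73} at each such $(F,p)\in\supp(M_n)$ to obtain the pointwise bound
\[
c_F\left(\bar q(F)\right)-c_F\left(\ubar{q}(F)\right)\le p\le \bar p_n.
\]
Taking the supremum over $(F,p)\in\supp(M_n)$ and sending $n\to\infty$ delivers the first conclusion. For the second, the same pointwise bound combined with the uniform-steepness hypothesis $c_F(y)-c_F(x)\ge\kappa(y-x)$ gives $\bar q(F)-\ubar{q}(F)\le \bar p_n/\kappa$ uniformly on $\supp(M_n)$, yielding the conclusion after the same supremum-and-limit.

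The main obstacle, as in the deterministic-$F$ case of Corollary~\ref{cor:mixed_s7:detF_convergence}, lies in the first step: verifying that \emph{every} point of $\supp(M_n)$ (rather than $M_n$-almost every) is a best response, so that Corollary~\ref{cor:mixed_s7:73} can be invoked uniformly across the support. This is available here precisely because Assumption~7.1's smoothing device absorbs any atoms in a firm's quality distribution before they reach the demand system; without it the tie discontinuities emphasized throughout the paper could resurface and the support-is-best-response step would require a more delicate treatment. Once that continuity is in hand, the remainder is a routine pointwise application of the earlier machinery.
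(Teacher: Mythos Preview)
Your proposal is correct and follows essentially the same route as the paper: fix $(F,p)\in\supp(M_n)$, argue it is a best response, apply Corollary~\ref{cor:mixed_s7:73} to get $c_F(\bar q(F))-c_F(\ubar q(F))\le p\le \bar p_n$, then take the supremum and pass to the limit (and for the second part, invoke uniform steepness). In fact your treatment of the first step is more careful than the paper's: the paper simply asserts that any $(F,p)\in\supp(M_n)$ ``is played with positive probability'' and hence must be a best response, whereas you correctly note that support points need not carry positive mass and supply the continuity argument (via Assumption~7.1's smoothing and Assumption~2.2\ref{ass1}) that makes the support-is-best-response conclusion rigorous.
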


\begin{proof}
Fix \(n\) and \((F,p)\in\supp\left(M_n\right)\).
Since \((F,p)\) is played with positive probability, it must be a best response to opponents' mixed strategy \(M_n\).
Apply \Cref{cor:mixed_s7:73} with \(M=M_n\) and \(r=p\) to obtain \(c_F\left(\bar q(F)\right)-c_F\left(\ubar{q}(F)\right)\le p\). Taking the supremum over \((F,p)\in\supp\left(M_n\right)\) yields \(\sup_{(F,p)\in\supp\left(M_n\right)}\left(c_F\left(\bar q(F)\right)-c_F\left(\ubar{q}(F)\right)\right)\le\bar p_n\), which implies the stated convergence when \(\bar p_n\to0\).
Finally, the steepness condition delivers, for each \((F,p)\in\supp\left(M_n\right)\), \(c_F\left(\bar q(F)\right)-c_F\left(\ubar{q}(F)\right)\ge\kappa\left(\bar q(F)-\ubar{q}(F)\right)\), so the cost-gap convergence implies \(\sup_{(F,p)\in\supp\left(M_n\right)}\left(\bar q(F)-\ubar{q}(F)\right)\to0\).
\end{proof}

\end{document}